\title{All-Pairs Shortest Paths with Few Weights per Node}
\author{
    Amir Abboud\thanks{Weizmann Institute of Science. This work is part of the project CONJEXITY that has received funding from the European Research Council (ERC) under the European Union's Horizon Europe research and innovation programme (grant agreement No.~101078482). Supported by an Alon scholarship and a research grant from the Center for New Scientists at the Weizmann Institute of Science. Part of this work was done while visiting INSAIT, Sofia University ``St. Kliment Ohridski''.}\and
    Nick Fischer\thanks{INSAIT, Sofia University ``St. Kliment Ohridski''. Partially funded by the Ministry of Education and Science of Bulgaria's support for INSAIT, Sofia University ``St. Kliment Ohridski'' as part of the Bulgarian National Roadmap for Research Infrastructure. Part of this work was done while at the author was at the Weizmann Institute of Science.}\and
    Ce Jin\thanks{Massachusetts Institute of Technology. Supported by the Jane Street Graduate Research Fellowship, NSF grant CCF-2330048, and a Simons Investigator Award.}\and
    Virginia Vassilevska Williams\thanks{Massachusetts Institute of Technology. Supported by NSF Grant CCF-2330048, BSF Grant 2020356 and a Simons Investigator Award.}\and
    Zoe Xi\thanks{Massachusetts Institute of Technology.}}
\date{}
\begin{document}

\maketitle
\begin{abstract}
\noindent
We study the central All-Pairs Shortest Paths (APSP) problem under the restriction that there are at most $d$ distinct weights on the outgoing edges from every node.
For $d=n$ this is the classical (unrestricted) APSP problem that is hypothesized to require cubic time $n^{3-o(1)}$, and at the other extreme, for $d=1$, it is equivalent to the \emph{Node-Weighted} APSP problem.
We present new algorithms that achieve the following results:

\begin{itemize}
\setlength\parskip{0pt plus 1pt}
\setlength\parindent{1.6em}
\item Node-Weighted APSP can be solved in time \smash{$\widetilde{O}(n^{(3+\omega)/2}) = \widetilde{O}(n^{2.686})$}, improving on the 15-year-old subcubic bounds \smash{$\widetilde{O}(n^{(9+\omega)/4}) = \widetilde{O}(n^{2.843})$} [Chan; STOC~'07] and \smash{$\widetilde{O}(n^{2.830})$} [Yuster; SODA~'09]. This positively resolves the question of whether Node-Weighted APSP is an ``intermediate'' problem in the sense of having complexity $n^{2.5+o(1)}$ if $\omega=2$, in which case it also matches an $n^{2.5-o(1)}$ conditional lower bound.

\item For up to $d \leq n^{3-\omega-\epsilon}$ distinct weights per node (where $\epsilon > 0$), the problem can be solved in subcubic time $O(n^{3-f(\epsilon)})$ (where $f(\epsilon) > 0$). In particular, assuming that $\omega = 2$, we can tolerate any sublinear number of distinct weights per node $d \leq n^{1-\epsilon}$, whereas previous work [Yuster;~SODA~'09] could only handle $d \leq n^{1/2-\epsilon}$ in subcubic time. This promotes our understanding of the APSP hypothesis showing that the hardest instances must exhaust a linear number of weights per node. With the current bounds on~$\omega$, we achieve a subcubic algorithm for $d \leq n^{0.628}$ whereas previously a subcubic running time could only be achieved for $d \leq n^{0.384}$.

Our result also applies to the All-Pairs Exact Triangle problem, thus generalizing a result of Chan and Lewenstein on ``Clustered 3SUM'' from arrays to matrices. Notably, our technique constitutes a rare application of additive combinatorics in graph algorithms.
\end{itemize}
We complement our positive results with simple hardness reductions even for \emph{undirected} graphs. Interestingly, under fine-grained assumptions, the complexity in the undirected case jumps from $\tilde O(n^{\omega})$ for $d=1$ to $n^{2.5-o(1)}$ for $d \geq 2$.
\end{abstract}

\thispagestyle{empty}
\clearpage
\setcounter{page}{1}

\section{Introduction} \label{sec:intro}
The classical All-Pairs Shortest-Paths problem (APSP) asks to compute the distance between all $\binom{n}{2}$ pairs of nodes in an $n$-node graph with integer edge weights. 
In this paper, we investigate its time complexity under the natural restriction that there are few distinct weights on the edges touching each node.
(For simplicity of exposition, we have chosen to present the problem on \emph{directed} graphs. We remark that the open questions and our results apply to the undirected setting as well. In Section~\ref{sec:undir}, we discuss some interesting distinctions between the two cases.)

\begin{definition}[$d$-Weights APSP]
Given a directed edge-weighted\footnote{Here and throughout we assume that all edge weights are integers in the range $\set{-M, \dots, M}$ where $M = n^c$ for some constant~$c$. In fact, all results in this paper also work when $M$ is a $\polylog(n)$-bit integer.} graph $G = (V, E)$ such that for all nodes $v \in V$ there are at most $d$ distinct weights on its outgoing edges (i.e., $\abs{\set{w(v, u) : (v, u) \in E}} \leq d$), compute the distances between all pairs of nodes $u, v \in V$.
\end{definition}

Note that when $d=n$ the restriction is trivial and we get the unrestricted APSP problem; our interest is in the regime $1 \leq d \ll n$.

The first motivation for studying this problem stems from APSP's role as a central problem in algorithmic research.
It is natural to expect the graphs that arise in APSP's countless applications to often use a limited number of weights. 
In fact, at the extreme where $d=1$, we get the fundamental \emph{Node-Weighted APSP} problem---if all edges leaving a node have the same weight, then it is effectively a node weight.

The second motivation stems from APSP's role in fine-grained complexity, where its hypothesized hardness serves as one of the foundational pillars:
The hypothesis that APSP requires $n^{3-o(1)}$ time\footnote{All algorithms and hardness hypotheses are assumed to be in the word-RAM model of computation with $O(\log n)$-bit words.} has been used to classify the complexity of dozens of problems on graphs, matrices, formal languages, machine learning, and beyond (e.g.\ \cite{WilliamsW18,Saha15,BackursT17,BringmannGMW20,AbboudGW23,ChanWX21,AbboudW14}).
To promote our understanding of the hypothesis and our confidence in its corollaries, it is important to investigate on which instances APSP is hard.

\paragraph{Previous Work and Main Questions.}
The first subcubic time algorithm for Node-Weighted APSP (the case~\makebox{$d=1$}) was given by Chan \cite{Chan10}; the precise running time is~\smash{$\widetilde{O}(n^{(9+\omega)/4})$} which is~\smash{$\widetilde{O}(n^{2.75})$} if $\omega=2$.
Here, $\omega<2.371339$ is the fast matrix multiplication exponent \cite{AlmanDVXXZ25}.
Shortly after, Yuster \cite{Yuster09} obtained an improved algorithm which with the current bounds on rectangular matrix multiplication runs in $O(n^{2.830})$ time, but still runs in~\smash{$\widetilde{O}(n^{2.75})$} time if $\omega=2$.

In the same paper, Yuster also defined the $d$-Weights APSP problem for general $d$ and gave an algorithm that, when $\omega=2$, solves the problem in $\widetilde{O}(\sqrt{d}n^{2.75})$ time. 
If $\omega>2$ the running time is more involved, as it uses a different rectangular matrix multiplication exponent for each $d$\footnote{The running time is $\widetilde{O}(n^{3-(\gamma-\rho)/2})$ where $d=n^\rho$ and for $\gamma$ defined as the solution of $\omega(1+\rho,1,1+\gamma)=3+\rho-\gamma$ and $\rho<\mu$ for $\omega(1+\mu,1,1+\mu)=3$. Here $\omega(a,b,c)$ is the exponent of $n^a\times n^b$ by $n^b\times n^c$ matrix multiplication.}.
Yuster's running time is subcubic as long as $d=n^{1/2-\eps}$ for some constant $\eps>0$ if $\omega=2$, and as long as $d\leq n^{0.341}$ for the current bounds on rectangular matrix multiplication.

Meanwhile, on the hardness side, we know that essentially cubic time is required under the APSP hypothesis when $d=n$. Moreover, for all $d \geq 1$ $d$-Weights APSP is at least as hard as the \emph{unweighted} directed APSP problem which is conjectured to require $n^{2.5-o(1)}$ time; this is known as the \emph{u-dir-APSP Hypothesis}~\cite{ChanWX21,ChanWX23}.

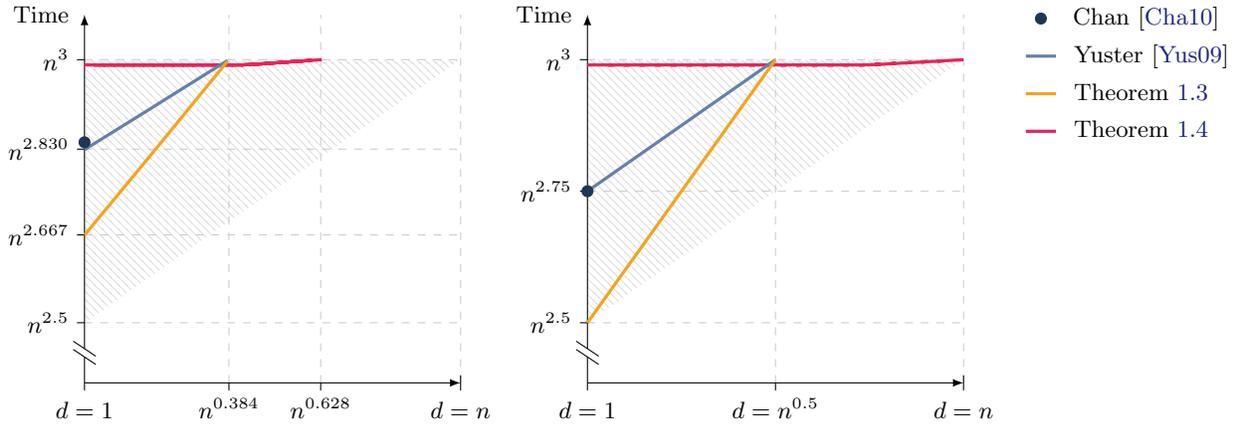
\begin{figure}[t]
\caption{Plots the time complexity of the $d$-Weights APSP problem. In the left diagram we have used the state-of-the-art bounds on fast matrix multiplication~\cite{AlmanDVXXZ25}; in the right diagram we assume $\omega = 2$. The hatched region illustrates the interesting region between the trivial $\Order(n^3)$ upper bound and the $d^{0.5} n^{2.5-\order(1)}$ conditional lower bound (\cref{thm:dapsp}). Also note that the red line depicting our \cref{thm:apsp-few-weights} is only conceptual in that it guarantees a subcubic running time $\Order(n^{3-\epsilon})$ for some small non-explicit constant~\smash{$\epsilon > 0$}.}
\label{fig:plots}
\rule{\textwidth}{0.5pt}

\bigskip
\begin{minipage}{\textwidth}
\begin{minipage}[t]{6cm}
\begin{tikzpicture}[%
    x=5cm, y=7cm,
    >=latex,
    every node/.append style={inner sep=0pt},
    fill between/on layer={background},
]
    \def\offsetlen{.8cm}
    \def\offsetsep{.11cm}
    \def\overshoot{.6cm}

    \draw (0, 2.5) ++(0, -\offsetlen) -- ++(0, -.1cm) node[below=.1cm] {\small $d=1\vphantom{n^0}$};
    \draw[dashed, black!20] (0.384, 2.5) ++(0, -\offsetlen) -- (0.384, 3) -- ++(0, \overshoot);
    \draw (0.384, 2.5) ++(0, -\offsetlen) -- ++(0, -.1cm) node[below=.1cm] {\small $n^{0.384}$};
    \draw[dashed, black!20] (0.628, 2.5) ++(0, -\offsetlen) -- (0.628, 3) -- ++(0, \overshoot);
    \draw (0.628, 2.5) ++(0, -\offsetlen) -- ++(0, -.1cm) node[below=.1cm] {\small $n^{0.628}$};
    \draw[dashed, black!20] (1, 2.5) ++(0, -\offsetlen) -- (1, 3) -- ++(0, \overshoot);
    \draw (1, 2.5) ++(0, -\offsetlen) ++(0, .1cm) -- ++(0, -.2cm) node[below=.1cm] {\small $d=n\vphantom{n^0}$};

    \draw[dashed, black!20] (0, 2.5) -- (1, 2.5);
    \draw (0, 2.5) -- ++(-.1cm, 0) node[left=.1cm] {\small $n^{2.5}$};
    \draw[dashed, black!20] (0, 2.667) -- (1, 2.667);
    \draw (0, 2.667) -- ++(-.1cm, 0) node[left=.1cm] {\small $n^{2.667}$};
    \draw[dashed, black!20] (0, 2.830) -- (1, 2.830);
    \draw (0, 2.830) -- ++(-.1cm, 0) node[left=.1cm] {\small $n^{2.830}$};
    \draw[dashed, black!20] (0, 3) -- (1, 3);
    \draw (0, 3) -- ++(-.1cm, 0) node[left=.1cm] {\small $n^3$};

    \draw[->] (0, 2.5) ++(0, -\offsetlen) -- ++(1, 0);
    \draw (0, 2.5) -- ++(0, -.5 * \offsetlen + 0.5 * \offsetsep);
    \draw (0, 2.5) ++(0, -\offsetlen) -- ++(0, .5 * \offsetlen - 0.5 * \offsetsep);
    \draw (0, 2.5) ++(-.15cm, -.5 * \offsetlen + .1cm + 0.5 * \offsetsep) -- ++(.3cm, -.2cm);
    \draw (0, 2.5) ++(-.15cm, -.5 * \offsetlen + .1cm - 0.5 * \offsetsep) -- ++(.3cm, -.2cm);
    \draw[->] (0, 2.5) -- (0, 3) -- ++(0, \overshoot) node[left=.2cm] {\small Time};

    \path[domain=0:1, variable=\d, very thick, name path=LB] plot ({\d}, {2.5+\d/2});
    \path[domain=0:1, variable=\d, very thick, name path=UB] plot ({\d}, {3});
    \tikzfillbetween[of=UB and LB]{gray, opacity=0.5, pattern=north west lines, pattern color=gray};

    \draw[domain=0:0.629, variable=\d, very thick, Red, rounded corners] plot ({\d}, {max(2.99,2.99+(\d-0.429)/0.2*0.01)});

    \draw[domain=0:0.38, variable=\d, very thick, PaleBlue] plot file {figures/plot_yuster.dat};

    \fill[DarkBlue] (0, 2.843) circle (0.08cm);

    \draw[domain=0:0.38, variable=\d, very thick, Orange] plot file {figures/plot_afjvx.dat};
\end{tikzpicture}
\end{minipage}
\hfill
\begin{minipage}[t]{6cm}
\begin{tikzpicture}[%
    x=5cm, y=7cm,
    >=latex,
    every node/.append style={inner sep=0pt},
    fill between/on layer={background},
]
    \def\offsetlen{.8cm}
    \def\offsetsep{.11cm}
    \def\overshoot{.6cm}

    \draw (0, 2.5) ++(0, -\offsetlen) -- ++(0, -.1cm) node[below=.1cm] {\small$d=1\vphantom{n^0}$};
    \draw[dashed, black!20] (0.5, 2.5) ++(0, -\offsetlen) -- (0.5, 3) -- ++(0, \overshoot);
    \draw (0.5, 2.5) ++(0, -\offsetlen) -- ++(0, -.1cm) node[below=.1cm] {\small $d=n^{0.5}$};
    \draw[dashed, black!20] (1, 2.5) ++(0, -\offsetlen) -- (1, 3) -- ++(0, \overshoot);
    \draw (1, 2.5) ++(0, -\offsetlen) ++(0, .1cm) -- ++(0, -.2cm) node[below=.1cm] {\small$d=n\vphantom{n^0}$};

    \draw[dashed, black!20] (0, 2.5) -- (1, 2.5);
    \draw (0, 2.5) -- ++(-.1cm, 0) node[left=.1cm] {\small $n^{2.5}$};
    \draw[dashed, black!20] (0, 2.75) -- (1, 2.75);
    \draw (0, 2.75) -- ++(-.1cm, 0) node[left=.1cm] {\small $n^{2.75}$};
    \draw[dashed, black!20] (0, 3) -- (1, 3);
    \draw (0, 3) -- ++(-.1cm, 0) node[left=.1cm] {\small $n^3$};

    \draw[->] (0, 2.5) ++(0, -\offsetlen) -- ++(1, 0);
    \draw (0, 2.5) -- ++(0, -.5 * \offsetlen + 0.5 * \offsetsep);
    \draw (0, 2.5) ++(0, -\offsetlen) -- ++(0, .5 * \offsetlen - 0.5 * \offsetsep);
    \draw (0, 2.5) ++(-.15cm, -.5 * \offsetlen + .1cm + 0.5 * \offsetsep) -- ++(.3cm, -.2cm);
    \draw (0, 2.5) ++(-.15cm, -.5 * \offsetlen + .1cm - 0.5 * \offsetsep) -- ++(.3cm, -.2cm);
    \draw[->] (0, 2.5) -- (0, 3) -- ++(0, \overshoot) node[left=.2cm] {\small Time};

    \path[domain=0:1, variable=\d, very thick, name path=LB] plot ({\d}, {2.5+\d/2});
    \path[domain=0:1, variable=\d, very thick, name path=UB] plot ({\d}, {3});
    \tikzfillbetween[of=UB and LB]{gray, opacity=0.5, pattern=north west lines, pattern color=gray};

    \draw[domain=0:1, variable=\d, very thick, Red, rounded corners] plot ({\d}, {max(2.99,2.99+(\d-0.75)/0.25*0.01)});

    \draw[domain=0:0.5, variable=\d, very thick, PaleBlue] plot ({\d}, {2.75+\d/2});

    \fill[DarkBlue] (0, 2.75) circle (0.08cm);

    \draw[domain=0:0.5, variable=\d, very thick, Orange] plot ({\d}, {2.5+\d});
\end{tikzpicture}
\end{minipage}
\hfill
\begin{minipage}[t]{3cm}
\vspace{-5.6cm}
\begin{tikzpicture}
    \path[fill=DarkBlue] (0, 0) node[anchor=west] {\small Chan~\cite{Chan10}} ++(-.3cm, 0) circle (0.08cm);
    \path[draw=PaleBlue, very thick] (0, -0.5cm) node[anchor=west] {\small Yuster~\cite{Yuster09}} ++(-.1cm, 0) -- ++(-.4cm, 0);
    \path[draw=Orange, very thick] (0, -1cm) node[anchor=west] {\small\cref{thm:apsp-few-weights-naive}\vphantom{[}} ++(-.1cm, 0) -- ++(-.4cm, 0);
    \path[draw=Red, very thick] (0, -1.5cm) node[anchor=west] {\small\cref{thm:apsp-few-weights}\vphantom{[}} ++(-.1cm, 0) -- ++(-.4cm, 0);
\end{tikzpicture}
\end{minipage}
\end{minipage}

\bigskip
\rule{\textwidth}{0.5pt}
\end{figure}

This state of affairs (plotted in blue in Figure \ref{fig:plots}) leaves a large polynomial gap for all sublinear $d$. While at the conceptual level it delivers the message that $d$-Weights APSP tends to become easier as $d$ decreases, it does so only loosely. To see this, let us highlight what are perhaps the two most important open questions. 

For this discussion, let us assume that $\omega=2$. Two reasons justify this: first, it lets us simplify and sharpen the story when comparing running time bounds, and second, this is a working assumption in much of fine-grained complexity in the sense that the central hypotheses and their consequences are expected to remain valid even if $\omega$ becomes $2$. When presenting our results in Section~\ref{sec:results}, we will specify the bounds under the current $\omega$ as well.

The first obvious open question is closing the gap for the $d=1$ case between the $\widetilde{O}(n^{2.75})$ upper bound and the $n^{2.5-o(1)}$ conditional lower bound. 
Once a cubic time bound is broken, the typical next question is whether the complexity can be reduced to $n^2$. 
However, it is common for progress to stop at $n^{2.5}$, and recent work \cite{LincolnPW20} has coined the term ``intermediate problems'' for this class (while also giving conditional lower bounds showing that $n^{2.5}$ is a barrier). 

Given that an $n^{2.5-o(1)}$ lower bound already exists for Node-Weighted APSP (under the u-dir-APSP hypothesis), the question becomes whether it is in this intermediate class or whether it is a ``subcubic but not intermediate'' problem.

\begin{question}
\label{oq1}
Is Node-Weighted APSP an ``intermediate problem''? I.e.\ can it be solved in $n^{2.5+o(1)}$ time if $\omega=2$?
\end{question}

The class of intermediate problems already contains a host of important problems on graphs (e.g.\ All-Pairs Bottleneck-Paths, All-Pairs Earliest-Arrivals)~\cite{DuanP09,VassilevskaWY09,DuanJW19}, matrices (e.g.\ Equality and Dominance Product, Min-Witness Product, Min-Plus Product in bounded-difference or monotone matrices)~\cite{Matousek91,KowalukL05,ChiDX022,Durr23}, and sequences (e.g.\ RNA Folding and Dyck Edit Distance)~\cite{BringmannGSW16,ChiDX022}.
In many of these cases, the $n^{2.5}$ bound was achieved a few years after the cubic bound was broken.
Thus, it is remarkable that the $n^{2.75}$ bound for Node-Weighted APSP has remained unbeaten for 15 years.
In fact, while the time hierarchy theorem gives the existence of (artificial) problems with complexity between $n^{2.5}$ and $n^{3}$, and while there are more natural graph problems with such complexity under fine-grained assumptions \cite{AbboudWY18}, we are aware of only very few other well-studied problems which have been shown to be in truly subcubic time but whose running time has not been made intermediate. (In fact, only one other example related to shortest paths comes to mind, \cite{ChechikZ24,WilliamsWX22}.).

Needless to say, understanding the node-weighted case has been a primary concern for many other basic graph problems such as triangle detection \cite{VassilevskaW06,CzumajL09} and minimum cut \cite{HenzingerRG00,HassinL07,AbboudKT21}. 

\medskip

The second open question is whether the hard instances for the APSP conjecture could have a sublinear number of distinct (edge) weights touching every node, i.e.\ whether the $d=n^{1-\eps}$ case is in truly subcubic time.
The running time of Yuster's algorithm becomes cubic already when $d=\sqrt{n}$.
A priori, it may very well be that this case is APSP-hard; e.g.\ perhaps we can take an arbitrary instance of APSP and ``sparsify'' or compress the weights so that only $\sqrt{n}$ weights are used in each neighborhood.

\begin{question}
\label{oq2}
Can APSP be solved in truly subcubic $O(n^{3-\delta})$ time if there is only a sublinear $d=n^{1-\eps}$ number of distinct edge weights per node (and $\omega=2$)?
\end{question}

This question is part of the general quest towards either refuting the APSP hypothesis or gaining a deeper understanding of the instances that make the problem hard.
This quest is particularly important for APSP because, unlike the other two central hypotheses in the field (3SUM and OV), where a natural, simple distribution appears to give hard instances, this is not the case for APSP.
Perhaps the first question that comes to mind is whether the weights in the hard instance could be small, i.e.\ if all edge-weights are in a set~\makebox{$\{-M,\ldots,M\}$} where $M$ is small. Thanks to now classical algorithms that, if $\omega=2$, run in time~\makebox{$M \cdot n^{2+o(1)}$} for undirected graphs~\cite{ShoshanZ99,AlonGM97} and time $\sqrt{M} \cdot n^{2.5+o(1)}$ for directed graphs~\cite{Zwick02}, we know that $M$ cannot be sublinear in the hard cases. (In fact, the ``Strong APSP Hypothesis'' \cite{ChanWX23} speculates that there are hard instances with $M=\Theta(n)$.)
Next, one may ask if, in the hard instances, the weights could come from a smaller set while still being large in value, and this is exactly Open Question~\ref{oq2}.

\subsection{Our Results}
\label{sec:results}

In this paper, we advance our understanding of the complexity of the $d$-Weights APSP problem by positively resolving Open Questions~\ref{oq1} and~\ref{oq2}.

Our first result is a faster algorithm for Node-Weighted APSP, proving its containment in the class of ``intermediate'' problems.

\begin{restatable}[Node-Weighted APSP]{theorem}{thmapspnodeweighted} \label{thm:apsp-node-weighted}
There is a deterministic algorithm for Node-Weighted APSP running in time~\smash{$\widetilde\Order(n^{(3+\omega)/2}) = \widetilde\Order(n^{2.686})$}. In fact, using rectangular fast matrix multiplication this can be improved to~\smash{$\widetilde\Order(n^{2.667})$}.\footnote{More precisely, our running time is $\widetilde\Order(n^{3-3\gamma})$ where $\gamma$ is the solution to the equation $3 - 4\gamma = \omega(1 + 2\gamma, 1, 1)$.}
\end{restatable}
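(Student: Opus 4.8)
The plan is to isolate the computational bottleneck as a \emph{structured min-plus product} and to speed it up with fast matrix multiplication. In a node-weighted graph every vertex $i$ carries a single value $w(i)$ on all of its outgoing edges, so the weighted adjacency matrix---and, more importantly, the matrices that arise when one computes all-pairs distances by processing the vertices in increasing order of weight and batching the Floyd--Warshall-type updates with matrix products, in the spirit of Chan~\cite{Chan10} and Yuster~\cite{Yuster09}---has the property that each row contains at most one distinct finite entry. After absorbing that per-row value into the other factor, the core task becomes the following ``masked-min'' problem: given an arbitrary integer matrix $A$ ($\infty$ entries allowed) of size $n \times n$ and sets $S_1, \dots, S_n \subseteq \set{1, \dots, n}$, compute $C_{ij} = \min\set{A_{ik} : k \in S_j}$ for all $i,j$ (with $C_{ij}=\infty$ when $S_j=\emptyset$ or all relevant entries are $\infty$). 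The target $n^{(3+\omega)/2}$ is exactly the ``intermediate'' exponent: $n^{2.5+\order(1)}$ if $\omega = 2$.

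The key idea for the masked-min problem is \emph{rank bucketing} plus matrix multiplication. Fix a parameter $t$. For each row $i$, sort its entries and partition the columns into $t$ rank-buckets $B^i_1, \dots, B^i_t$ of size $n/t$ each, where $B^i_\ell$ holds the columns whose rank in row $i$ lies in the interval $((\ell-1)n/t,\ \ell n/t]$. For a pair $(i,j)$, a minimizer of $C_{ij}$ can be taken to be the element of $S_j$ of smallest rank in row $i$, and this element sits in the \emph{first} bucket $B^i_{\ell^*(i,j)}$ that meets $S_j$. Thus it suffices to (a) determine $\ell^*(i,j)$ for all pairs, and then (b) brute-force inside that single bucket: scan the $n/t$ columns of $B^i_{\ell^*(i,j)}$ and keep the smallest-ranked one lying in $S_j$. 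Step (b) costs $\Order(n^2 \cdot n/t) = \Order(n^3/t)$ overall.

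For step (a) I would run $t$ Boolean matrix multiplications. For each prefix length $\ell \in \set{1, \dots, t}$, multiply the $n \times n$ $0/1$ matrix whose $(i,k)$-entry is $1$ iff column $k$ is among the $\ell \cdot n/t$ smallest-ranked columns of row $i$, by the $n \times n$ $0/1$ matrix whose $(k,j)$-entry is $1$ iff $k \in S_j$; the $(i,j)$-entry of the product is nonzero exactly when the union of the first $\ell$ buckets of row $i$ meets $S_j$. Since these indicators are monotone in $\ell$, the value $\ell^*(i,j)$ is extracted by a scan over $\ell$, costing $\Order(n^2 t)$. Altogether the running time is $\widetilde\Order(t n^\omega + n^3/t + n^2 t)$; choosing $t = n^{(3-\omega)/2}$ balances the first two terms at $n^{(3+\omega)/2}$, and the third term $n^{(7-\omega)/2}$ is dominated whenever $\omega \ge 2$. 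Feeding this back through the reduction gives the claimed $\widetilde\Order(n^{(3+\omega)/2})$ bound, and the whole procedure is deterministic. For the sharper $\widetilde\Order(n^{2.667})$ bound one replaces the $t$ square products by a single \emph{rectangular} multiplication of the $(tn) \times n$ matrix obtained by stacking the $t$ prefix-indicator matrices against the $n \times n$ set-membership matrix, and rebalances against the $\Order(n^3/t)$ brute-force term; this is what produces the stated optimum $3 - 4\gamma = \omega(1 + 2\gamma, 1, 1)$.

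The hard part is making step (a) cheap. Using one Boolean product per distinct rank would need $\Theta(n)$ products, far too many; conversely, using very few coarse buckets throws away the rank information needed to pin down $C_{ij}$. The resolution is to keep $t$ small on purpose and to recover the lost granularity through the cheap within-bucket scan of step (b); the sweet spot $t = n^{(3-\omega)/2}$ is precisely where the matrix-multiplication cost and the brute-force cost meet at the intermediate exponent. A secondary issue to be careful about is the reduction itself: verifying that the weight-ordered, matrix-multiplication-batched computation of all-pairs distances really only invokes structured products (one distinct value per row, possibly after an additive per-row shift), and that negative edge weights cause no trouble beyond the usual detection of negative cycles via the diagonal of the resulting distance matrix.
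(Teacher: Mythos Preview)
Your masked-min subroutine is correct, but it is exactly the Boolean min-plus product that Chan~\cite{Chan10} and Yuster~\cite{Yuster09} already had, with the same $\widetilde\Order(n^{(3+\omega)/2})$ running time per product. The missing piece is the reduction from Node-Weighted APSP to this subroutine, and that is where the entire improvement lives. You write that one ``feeds this back through the reduction'' in the spirit of Chan and Yuster, but their reduction does \emph{not} yield $n^{(3+\omega)/2}$: Chan computes $D^{\leq h}$ by $h$ sequential Boolean min-plus products (one cannot repeatedly square because the intermediate $D^{\leq h}$ no longer has one value per row), and then spends $\widetilde\Order(n^3/h)$ on Dijkstra from a hitting set, giving $\widetilde\Order(h\cdot n^{(3+\omega)/2}+n^3/h)=\widetilde\Order(n^{(9+\omega)/4})$. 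Your vague ``batching Floyd--Warshall by weight order'' does not escape this: any scheme that multiplies square $n\times n$ structured matrices will need $n^{\Omega(1)}$ of them to cover $n$-hop paths, and the extra factor destroys the bound.

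The paper's new ingredient is a \emph{multi-level pivot} scheme that makes the Boolean min-plus products \emph{rectangular}. One takes nested hitting sets $V=S_0\supseteq\cdots\supseteq S_L$ with $|S_\ell|=\widetilde\Order(n/2^\ell)$, computes $D[S_L,S_L]$ cheaply by repeated squaring in $\widetilde\Order(|S_L|^3)=\widetilde\Order(n^3/h^3)$, and then climbs up level by level: at level $\ell$ one evaluates $D[S_\ell,S_\ell]=\min(D^{\leq 2^\ell}[S_\ell,S_\ell],\,D^{\leq 2^\ell}[S_\ell,S_{\ell+1}]\star D[S_{\ell+1},S_{\ell+1}]\star D^{\leq 2^\ell}[S_{\ell+1},S_\ell])$ using $O(2^\ell)$ Boolean min-plus products of size $(n/2^\ell)\times n\times n$. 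The two parameters---the hop bound $h$ and the bucketing parameter $\Delta$ inside the rectangular product---are what produce the equation $3-4\gamma=\omega(1+2\gamma,1,1)$ (with $h=n^\gamma$, $\Delta=n^{3\gamma}$). Your claimed balance $\MM(tn,n,n)+n^3/t$ would instead give $\omega(1+\gamma,1,1)=3-\gamma$, a different equation; the match with the theorem's footnote is accidental in your write-up, not derived.
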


Note that the first bound in the statement, namely $n^{(3+\omega)/2}$, is already $n^{2.5}$ when $\omega=2$ and is therefore sufficient for resolving Open Question~\ref{oq1}.
However, the next-order question after establishing that a problem is intermediate is whether we can get closer to the $n^{2.5}$ bound with \emph{current} techniques; typically, such improvements involve fast \emph{rectangular} matrix multiplication.
Interestingly, for some intermediate problems such as unweighted directed APSP ($O(n^{2.528})$ by~\cite{Zwick02}), Min-Witness Product ($O(n^{2.528})$ by~\cite{KowalukL05}) and Dominance Product ($O(n^{2.659})$ by~\cite{Yuster09}) this has been done, while others such as Bounded Monotone Min-Plus Product~\cite{ChiDX022} are still stuck at the~\makebox{$n^{(3+\omega)/2} = n^{2.686}$} bound.
The second bound in the statement, while giving a small quantitative improvement, qualitatively shows that Node-Weighted APSP is among the easier problems in the intermediate class. It remains to be seen whether further improvements, say to~$O(n^{2.528})$ are possible.

As we discuss in the technical overview (Section~\ref{sec:overview}), the new algorithm for Node-Weighted APSP only uses elementary techniques.
It generalizes in a straightforward way to get a new bound for $d$-Weights APSP.

\begin{restatable}{theorem}{thmapspfewweightsnaive} \label{thm:apsp-few-weights-naive}
There is a deterministic algorithm for $d$-weights APSP running in time~\smash{$\widetilde\Order(d \cdot n^{(3+\omega)/2})$}.\footnote{In fact, using rectangular matrix multiplication the running time can be improved to time~\smash{$\widetilde\Order(n^{3-3\gamma})$} whenever $d = \Theta(n^\delta)$ for some $0 \leq \delta \leq 1$ and where $\gamma$ is the solution to the equation $3 - 4\gamma = \omega(1 + \delta + 2\gamma, 1, 1 + \delta)$.}
\end{restatable}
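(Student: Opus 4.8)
The obvious first move is a reduction to Node-Weighted APSP on a larger graph. Split every vertex $v$ of $G$ into $v$ itself together with one auxiliary vertex $v_i$ for each distinct out-weight $w_i(v)$ of $v$ (so $d_v \le d$ of them), add a weight-$0$ edge from $v$ to each $v_i$, and redirect every out-edge $(v,u)$ of weight $w_i(v)$ to a weight-$w_i(v)$ edge $(v_i,u)$. In the resulting graph $G'$ all out-edges of any fixed vertex carry a common weight, so $G'$ is a Node-Weighted APSP instance on $O(nd)$ vertices, and distances among the original vertices are preserved. Running \cref{thm:apsp-node-weighted} on $G'$ costs $\widetilde{O}((nd)^{(3+\omega)/2}) = \widetilde{O}(d^{(3+\omega)/2}\, n^{(3+\omega)/2})$, which is far too lossy in $d$. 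So the plan is instead to open up the algorithm of \cref{thm:apsp-node-weighted} and pay only a factor $d$ where the node-weight structure is actually used.

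That algorithm reduces Node-Weighted APSP to $\widetilde{O}(1)$ min-plus products of the form $D \star A$, where $A$ is the \emph{node-weighted matrix} $A[k][v] = w(k)$ for $(k,v)\in E$ and $A[k][v]=\infty$ otherwise (every row of $A$ constant on its support), each such product being evaluated by a dedicated subroutine in time $\widetilde{O}(n^{(3+\omega)/2})$ (respectively $\widetilde{O}(n^{3-3\gamma})$ in the rectangular refinement), with all remaining work $\widetilde{O}(n^{2+o(1)})$. In the $d$-Weights setting the relevant matrix carries genuine edge weights $A[k][v] = w(k,v)$ and now has up to $d$ distinct finite values per row. In $\widetilde{O}(n^2)$ preprocessing compute, for each $k$, its distinct out-weights $w_1(k),\dots,w_{d_k}(k)$ ($d_k \le d$) and the partition $S_i(k) = \{\,v : (k,v)\in E,\ w(k,v) = w_i(k)\,\}$ of $N^{\mathrm{out}}(k)$, and set $A^{(i)}[k][v] = w_i(k)$ for $v \in S_i(k)$ and $\infty$ otherwise (with all-$\infty$ rows where $d_k < i$). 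Each $A^{(i)}$ is exactly of the node-weighted form the subroutine handles, $A = \min_i A^{(i)}$ entrywise, and min-plus multiplication distributes over entrywise minimum, so
\[
  (D \star A)[u][v] \;=\; \min_k\Bigl(D[u][k] + \min_i A^{(i)}[k][v]\Bigr) \;=\; \min_i\, (D \star A^{(i)})[u][v].
\]
Hence each product $D \star A$ is obtained from $d$ calls to the subroutine followed by a coordinatewise minimum, at cost $\widetilde{O}(d \cdot n^{(3+\omega)/2})$; nothing else in the algorithm changes, and the same balancing of parameters as in the $d=1$ case yields $\widetilde{O}(d \cdot n^{(3+\omega)/2})$. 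For the footnote's refinement one threads the same decomposition through the rectangular version of the subroutine: writing $d = n^{\delta}$, the role of the vertex set is, in two of the three dimensions of the governing rectangular matrix product, played by the $nd = n^{1+\delta}$ pairs (vertex, out-weight class), which is why the exponent is governed by $3 - 4\gamma = \omega(1+\delta+2\gamma, 1, 1+\delta)$.

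I expect the only genuine obstacle to be verifying that $d$ is the sole overhead. Concretely, one must check that none of the $\widetilde{O}(n^{2+o(1)})$ ``cheap'' steps of the Node-Weighted algorithm implicitly uses that $A$ has a \emph{single} value per row---they should only need that each $A^{(i)}$ is supported on a union of out-stars with a common weight, which is how the algorithm's structure carries over verbatim wherever the node-weight property is invoked---and that negative edge weights are accommodated exactly as in the $d=1$ proof. The combinatorial heart is the one-line distributivity identity displayed above; the remaining effort is the routine but slightly fiddly rectangular bookkeeping needed to confirm that the exponent $3-3\gamma$ comes out as stated once the $n^{1+\delta}$ split vertices are accounted for in the matrix dimensions.
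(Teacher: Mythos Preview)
Your approach is essentially the paper's: isolate the single place where the node-weight structure is used (the Boolean min-plus subroutine), generalize it to handle $d$ weights at a factor-$d$ cost, and leave the rest of the algorithm unchanged. The paper does exactly this via a $d$-weights min-plus product lemma (\cref{lem:d-weights-min-plus}) that generalizes \cref{lem:min-plus-boolean}.

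Two corrections worth noting. First, your description of the underlying algorithm is off: it does \emph{not} reduce to $\widetilde O(1)$ products $D\star A$ with ``all remaining work $\widetilde O(n^{2+o(1)})$''. It performs $\widetilde O(h)$ \emph{rectangular} Boolean min-plus products of size $(n/h)\times n\times n$, plus $\widetilde O(n^3/h^3)$ work from the base level; the $\widetilde O(n^{(3+\omega)/2})$ bound comes from balancing $h$. This does not hurt your main bound, since multiplying the product cost by $d$ still yields $\widetilde O(d\cdot n^{(3+\omega)/2})$ without rebalancing, but you should state the structure correctly.

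Second, for the footnote's rectangular refinement, your ``$d$ separate calls to the Boolean subroutine'' is not quite enough: that gives $d\cdot \MM(s\Delta,n,n)$, whereas the stated exponent $3-4\gamma=\omega(1+\delta+2\gamma,1,1+\delta)$ requires a single product with one dimension stretched to $nd$, i.e.\ $\MM(s\Delta,n,nd)$, which can be strictly cheaper under fast rectangular multiplication. The paper achieves this by expanding $B$ to an $n\times(nd)$ Boolean matrix (one column per (vertex, weight-class) pair) inside the subroutine, rather than looping over $d$ outside it. Your final paragraph gestures at the $n^{1+\delta}$ pairs appearing in the dimensions, which is the right idea, but it is inconsistent with the earlier ``$d$ calls'' description; you need the batching, not the loop, to match the footnote.
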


Notably, while this bound improves on previous work for a range of values of $d$, it still does not make any real progress towards resolving Open Question~\ref{oq2}: if $\omega=2$, our $d\cdot n^{2.5}$ running time becomes cubic when the number of distinct weights is $d \geq \sqrt{n}$ which is exactly the same barrier reached by Yuster's algorithm.

Our next theorem, which is the main technical achievement of this paper, uses additive combinatorics machinery in order to push the threshold all the way to $d=n^{1-\eps}$, if $\omega=2$, positively resolving Open Question~\ref{oq2}.
Consequently, the hard instances of APSP must have $d=n^{1-o(1)}$. In other words, if we could find a method to reduce the number of distinct weights per node by a polynomial factor we would effectively refute the APSP hypothesis.
 
\begin{restatable}[Few-Weights APSP]{theorem}{thmapspfewweights} \label{thm:apsp-few-weights}
For every $\delta > 0$, there is some $\epsilon > 0$ such that $n^{(3-\omega)-\delta}$-weights APSP can be solved in time~\smash{$\Order(n^{3-\epsilon})$} by a deterministic algorithm.
\end{restatable}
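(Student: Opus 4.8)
The plan is to reduce $d$-Weights APSP to the All-Pairs version of the Exact Triangle problem and to solve that using additive combinatorics to detect collisions in sumsets, exactly in the spirit of Chan and Lewenstein's ``clustered 3SUM'' algorithm lifted from arrays to matrices. Recall the standard $O(\log M)$-factor reduction from APSP to the Min-Plus Product: distances are computed by $\log n$-fold repeated squaring of the weight matrix under $(\min,+)$, so it suffices to compute the Min-Plus Product $C = A \star B$ of two $n\times n$ matrices where $A$ has at most $d$ distinct entries per row (this is where the few-weights hypothesis is used — the structure is preserved through the reductions, since the ``outgoing weights'' of a node correspond to a row of the matrix). By the usual binary-search / guessing trick, computing $C$ reduces to $\widetilde O(1)$ instances of the following decision problem: given $A, B$ and a target matrix $T$, decide for every pair $(i,j)$ whether there exists $k$ with $A[i,k] + B[k,j] = T[i,j]$ — this is precisely All-Pairs Exact Triangle on the tripartite graph with parts indexed by rows, the contraction index, and columns.

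Next I would exploit the few-weights structure combinatorially. Group the columns $k$ by the value $A[i,k]$: for a fixed row $i$ there are only $d$ groups, say $K_1^{(i)}, \dots, K_d^{(i)}$ with associated values $a_1^{(i)}, \dots, a_d^{(i)}$. Then the question ``does there exist $k$ with $A[i,k]+B[k,j]=T[i,j]$'' becomes ``does the set $S_{i,j} := \bigcup_{\ell} \{\, a_\ell^{(i)} + B[k,j] : k \in K_\ell^{(i)} \,\}$ contain $T[i,j]$''. The heart of the matter is that, after bucketing rows $i$ by their (sorted) value-vector $(a_1^{(i)}, \dots, a_d^{(i)})$ and bucketing columns similarly, we are asking repeatedly whether a target lies in a sumset $A_\ell + B_\ell$ where $A_\ell$ is a small set of $d$ shift values and $B_\ell$ is a large set of up to $n$ matrix entries. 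Here is where additive combinatorics enters: either the relevant sumsets have small doubling, in which case (by a Freiman/Ruzsa-type argument or the Bogolyubov–Ruzsa machinery, or more elementarily via the fact that a set of small doubling is covered by few arithmetic progressions) the sumset is highly structured and membership queries can be batched and answered via a small number of rectangular matrix multiplications or FFT-based convolutions over a short interval; or the doubling is large, and one argues that this case can occur for only a bounded number of index classes, or can be handled by a brute-force $\widetilde O(n^2 d)$ pass that is affordable because $d \le n^{(3-\omega)-\delta}$ makes $n^2 d \le n^{5-\omega-\delta}$, which combined with the $\omega$ in the structured branch balances below $n^3$.

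The key step I expect to be the main obstacle is the structured branch: making the ``small doubling implies efficient batched membership'' step quantitatively strong enough that the resulting matrix-multiplication exponent, when balanced against the $d$ from the unstructured branch, actually dips below $3$ by a constant $f(\delta) > 0$ for all $d \le n^{(3-\omega)-\delta}$. The threshold $n^{3-\omega}$ is exactly what one gets from balancing an $n^{\omega}$-type cost in the structured case against an $n^2 d$-type cost in the unstructured case (since $n^\omega = n^2 \cdot n^{\omega-2}$ must beat $n^2 \cdot d$ roughly when $d \le n^{\omega - 2}$… and the interplay with the $(3+\omega)/2$ machinery of \cref{thm:apsp-few-weights-naive} is what pushes it up to $n^{3-\omega}$), so the bound in the statement is the natural outcome of this balancing rather than an arbitrary choice. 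Care must also be taken that the additive-combinatorics covering is over the integers with weights in $\{-M,\dots,M\}$, $M = \poly(n)$: one works modulo a random prime of size $\widetilde\Theta(n)$ to keep universes small, handling the $(\min,+)$ target alignment with the standard high-/low-order-bits split so that exact-triangle membership is not lost. Finally, threading this decision procedure back through the $\widetilde O(1)$ binary-search layers and the $\log n$ repeated-squaring layers costs only polylogarithmic overhead, yielding the claimed $O(n^{3-\epsilon})$ deterministic bound (derandomizing the prime choice by the usual enumeration over a polynomial-size family of moduli).
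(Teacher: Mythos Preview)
Your high-level plan matches the paper's (reduce to All-Edges Exact Triangle, then apply BSG-style covering in the spirit of Chan--Lewenstein), but you are missing precisely the two steps that constitute the paper's main technical contribution. First, BSG Covering applies to a pair of \emph{global} sets $X, Y$ of size at most $d$, not to per-row data; in a $d$-weights instance the rows of $A$ can have entirely unrelated value sets, and $B$ carries no few-weights structure whatsoever. Your ``bucketing rows by their value-vector'' does not produce global uniformity. The paper resolves this via a nontrivial \emph{uniformization} step (\cref{lem:exact-tri-uniformization}, built on the Popular Sum Decomposition of \cref{lem:pop-sums-decomp}): if many pairs $(i,j)$ admit a popular sum in $X_i + Y_j$, then large portions of the row-sets $X_i$ and column-sets $Y_j$ must be translates of a few common sets $S_\ell, T_\ell$, and after shifting by these translates one obtains globally $d$-uniform $A$ and $B$. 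Second, handling the BSG remainder $R$ (the $O(d^2/K)$ value-pairs not covered by small-sumset rectangles) requires a bound on how often each value occurs in a row or column of $A$ and $B$---i.e., $\frac{n}{d}$-\emph{regularity}---which is not given by the problem and must be enforced by a separate recursive \emph{regularization} step (\cref{lem:exact-tri-regularization}). Your proposed $\widetilde O(n^2 d)$ brute-force for the ``unstructured branch'' does not correspond to any step of the actual algorithm and is not a valid bound for the remainder; without regularity the remainder enumeration can cost $\Omega(n^3)$.

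Two smaller issues. The reduction from APSP to Min-Plus via repeated squaring does \emph{not} preserve the $d$-weights-per-row structure: after one squaring $D^{\le 2}$ can have arbitrarily many distinct weights per row. The paper instead uses the hierarchical-pivot reduction of \cref{lem:d-weights-apsp-det}, which only ever multiplies against the adjacency matrix $D^{\le 1}$ and hence keeps the structure intact (\cref{lem:apsp-to-min-plus}). And your ``small doubling vs.\ large doubling'' dichotomy mischaracterizes BSG Covering: the theorem gives a \emph{covering} of the relevant pairs by $K$ small-sumset rectangles $X_k \times Y_k$ plus a small remainder set $R \subseteq X \times Y$, not a case split on the doubling constant of the whole set.
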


With current values of $\omega \le 2.371339$ \cite{AlmanDVXXZ25}, our results push the capabilities of subcubic time algorithms from Yuster's $d \leq n^{0.384}$ \cite{Yuster09} to $d \leq n^{0.628}$. However, the quantitative improvement in our running time, which is hidden by the dependence of $\delta$ on $\eps$ in the statement, is rather modest, as is common when using the Balog--Szemer\'edi--Gowers (BSG) Theorem. Whether our techniques can be substituted by elementary ones, as has been done in some of the other applications of the BSG Theorem in algorithm design~\cite{BringmannGSW16,ChiDX022,KasliwalPS25}, is an interesting open question.

\paragraph{Exact Triangle.}
Our work is inspired by the seminal work of Chan and Lewenstein from STOC 2015 \cite{ChanL15} introducing the BSG Theorem into algorithm design.
The applications in their paper revolved around the 3SUM\footnote{Given $n$ numbers, decide if there are three that sum to zero.} and the Min-Plus Convolution\footnote{Given arrays $a,b$ of length $n$, compute the array $c[i]=\min_{k} a[k]+b[i-k]$.} problems, both conjectured to require quadratic time.
There are three main results in their paper, each achieving truly subquadratic time in a restricted setting: (1) Min-Plus Convolution assuming that the input arrays consist of monotone integers bounded by $O(n)$ (or have the ``bounded difference'' property), (2) 3SUM under the assumption that we have quadratic time to preprocess the universe of numbers, and (3) 3SUM under the assumption that the numbers can be clustered into a sublinear number of short intervals.
Notably, follow-up work has managed to (in a sense) generalize the first two results to the setting of matrices or graphs, solving variants of Min-Plus Matrix Multiplication\footnote{Given $n \times n$ matrices $A, B$, compute the matrix $C[i,j]=\min_k A[i,k] + B[k,j]$.} and the Exact Triangle\footnote{Given $n \times n$ matrices $A, B, C$, decide if there are $i, k, j$ such that $A[i, k] + B[k, j] = C[i, j]$ (an \emph{exact triangle}). In the \emph{All-Edges} Exact Triangle the task is to report for each pair $(i, j)$ if is involved in an exact triangle. See \cref{def:exact-tri}.} problem, but not the third, which may be considered the most complicated result in the original paper.
The first generalization was accomplished by Bringmann, Grandoni, Saha, and Vassilevska Williams \cite{BringmannGSW19} (and improved in \cite{WilliamsX20,ChiDX22,ChiDX022,Durr23}), and the second by Chan, Vassilevska Williams, and Xu \cite{ChanWX23}.

The next theorem presents a subcubic time algorithm for the Exact Triangle problem in graphs with few weights per node. 
Note that in the unrestricted setting, Exact Triangle is a harder problem than both APSP and 3SUM in the sense that a subcubic time algorithm for it breaks the two hypotheses \cite{WilliamsW13,WilliamsW18,Vassilevska18}. 
This result can informally be viewed as a generalization of Chan and Lewenstein's third result on Clustered 3SUM from arrays to matrices or graphs.\footnote{Our Few-Weights Exact Triangle problem can be viewed as an analogue of the \emph{Few-Weights 3SUM Convolution} problem defined as follows: given three length-$n$ arrays $a,b,c$ where the number of distinct weights in $a$ is at most $n^{1-\delta}$, find a pair of indices $i,j$ such that $c[i+j]=a[i]+b[j]$. Consider the standard reduction from this problem to the 3SUM problem by mapping each $a[i]$ to the number $M\cdot a[i] + i$ for a large enough $M$ (and mapping $b[j],c[k]$ similarly): note that the numbers $M\cdot a[i] + i$ can be clustered into $n^{1-\delta}$ intervals $[M\cdot a^*+1,M\cdot a^*+n]$ of length $n$, so this is a Clustered 3SUM instance which is solved by \cite{ChanL15} in \smash{$\widetilde O(n^{2-\delta/7})$} time.}

\begin{restatable}[Few-Weights Exact Triangle]{theorem}{thmexacttrifewweights} \label{thm:exact-tri-few-weights}
For every $\delta > 0$ there is some $\epsilon > 0$ such that $n^{(3-\omega)-\delta}$-weights All-Edges Exact Triangle can be solved in time $\Order(n^{3-\epsilon})$. 
\end{restatable}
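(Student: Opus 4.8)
The plan is to lift Chan and Lewenstein's algorithm for Clustered~3SUM from arrays to matrices, replacing their FFT-based convolutions by fast rectangular matrix multiplication; essentially the same argument then also yields \cref{thm:apsp-few-weights}. Write $A,B,C$ for the three input $n\times n$ matrices; reading the $d$-weights condition off the tripartite graph, each row of $A$ (and, by symmetry, each row of $B$ and each column of $C$) has at most $d\le n^{(3-\omega)-\delta}$ distinct entries, and the row structure of $A$ is what I would exploit. For each row $i$ let $\alpha_{i,1},\dots,\alpha_{i,d}$ be its distinct values and color column $k$ of row $i$ by the index $\chi_i(k)=\ell$ with $A[i,k]=\alpha_{i,\ell}$; an exact triangle through $(i,j)$ exists iff for some $\ell$ the value $C[i,j]-\alpha_{i,\ell}$ occurs among $\{B[k,j]:\chi_i(k)=\ell\}$. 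Handling the $d$ color classes separately is exactly a node-weighted-style subroutine run once per color, which reproves \cref{thm:apsp-few-weights-naive} but pays a factor $\sim d$ and is already cubic once $d$ exceeds roughly $n^{(3-\omega)/2}$; the whole point is to avoid that factor. First I would group the middle index set into blocks and observe that each (block, color) pair confines the sums $A[i,k]+B[k,j]$ to a bounded-length window, turning the task into a ``clustered'' All-Edges Exact Triangle with $O(d)$ clusters --- the matrix analogue of the Clustered~3SUM reduction sketched in the footnote.

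Next I would fix a threshold $t$ and call a window \emph{popular} if at least $t$ target entries of $C$ land inside it, and \emph{sparse} otherwise. A sparse window defines a restricted exact-triangle instance with few candidate collisions, which can be solved blockwise by sparse matrix multiplication (equivalently, one FFT per block); summed over all blocks and colors this costs roughly $\widetilde{O}(n^2 d/t + n^2 t)$, which is subcubic for a suitable polynomial choice of $t$. Since each popular window uses up at least $t$ of the $O(n^2)$ relevant entries of $C$, there are only $O(n^2/t)$ popular windows, and the remaining task is to dispose of those without paying the full factor~$d$.

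For the popular part I would invoke the Balog--Szemer\'edi--Gowers Theorem: a popular window exhibits large additive energy among the relevant weights of $A$, the relevant entries of $B$, and the matching targets in $C$, so BSG extracts large subsets of these sets having bounded doubling, which by Freiman's theorem sit inside a generalized arithmetic progression of bounded dimension $r=r(\delta)$. Pushing the weights through the $r$ coordinate maps of the progression embeds the restricted instance into a product of $r$ short arithmetic progressions, where detecting and reporting exact triangles reduces to a single rectangular matrix multiplication whose inner dimension is governed by $d$. Iterating the BSG extraction $O(\log n)$ times removes a constant fraction of the popular structure per round, and recursing on the tiny residual finishes the algorithm; the BSG step and the covering iteration are made deterministic using the hashing-based deterministic variants already developed in earlier algorithmic applications of the theorem.

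Finally, balancing the block size and $t$ against $d$, both the sparse cost and the rectangular-matrix-multiplication cost of the popular part fall below $n^3$ precisely when $d\le n^{(3-\omega)-\delta}$ --- the exponent $3-\omega$ appearing because the popular computation is a rectangular product whose aspect ratio is dictated by~$d$. I expect the crux to be exactly this popular case: in Clustered~3SUM the structured remainder is a one-dimensional convolution that FFT handles essentially for free, whereas here one must simultaneously keep the Freiman dimension $r$ and the progression lengths small enough that the embedded rectangular matrix multiplication still beats $n^3$, and make the iterated BSG covering fully deterministic. This is also where the (deliberately unquantified) loss --- the dependence of $\epsilon$ on $\delta$ --- is incurred, as is typical for BSG-based algorithms.
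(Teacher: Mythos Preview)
Your proposal has a genuine gap at its very first structural step. You assert that after coloring and blocking, ``each (block, color) pair confines the sums $A[i,k]+B[k,j]$ to a bounded-length window,'' but this is not true: fixing a color class in row~$i$ only pins down $A[i,k]=\alpha_{i,\ell}$, while $B[k,j]$ remains an arbitrary integer, so the sums need not lie in any short interval. (You also claim that by symmetry the rows of $B$ and columns of $C$ each have at most~$d$ distinct entries; in fact the $d$-weights condition only guarantees this for \emph{one} of the six orientations, not all simultaneously.) The Clustered~3SUM analogy breaks down precisely here: in Clustered~3SUM the clustering is a statement about \emph{values} living in few short intervals, whereas the $d$-weights hypothesis is a purely \emph{local} statement---each row of~$A$ has few distinct entries, but the weight sets $X_i$ of different rows can be completely unrelated, so there is no global value clustering to exploit and nothing to which BSG or Freiman can be applied.

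What the paper actually does, and what your outline is missing, is a dedicated \emph{uniformization} step (the Popular Sum Decomposition) that converts the local $d$-weights structure into a global one: it shows that, up to per-row shifts and after peeling off a subcubic remainder, the row weight sets $X_i$ (and column weight sets $Y_j$) can be covered by $\poly(\Delta)$ common templates $S_\ell,T_\ell$ of size~$\le d$. Only after this do the matrices become $d$-uniform---all entries drawn from one size-$d$ set---and only then can BSG Covering be applied to the \emph{global} weight sets, with the ``remainder'' part of BSG handled by brute force using a further $\frac{n}{d}$-\emph{regularity} assumption that must itself be enforced by a separate recursive decomposition. Your sketch has no analogue of either reduction; Freiman and GAP embeddings are not used at all, and the popular/sparse dichotomy you describe (on ``windows'' of~$C$) does not correspond to any step that survives once the bounded-window claim is removed.
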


Note that this theorem resolves Open Question~\ref{oq2} for Exact Triangle as well. Interestingly, Open Question~\ref{oq1} is not relevant for Exact-Triangle because the node-weighted case has long been known to be just as easy as the unweighted setting \cite{WilliamsW13} and can be solved in $n^{\omega+o(1)}$ time.

\subsection{Hardness and Reductions}
\label{sec:undir}

While the above presentation focused on the directed case, it is important to note that this work resolves Open Questions~\ref{oq1} and~\ref{oq2} in the (easier) undirected case as well. And even in the undirected case previous work by Chan, Vassilevska W., and Xu proved that there can be no algorithm faster than $n^{2.5-o(1)}$, based on the assumption that \emph{directed} APSP in unweighted graphs requires time $n^{2.5-o(1)}$~\cite{ChanWX21} (see the discussion in Section~\ref{sec:hardness}).

We modify their reduction in a simple way to obtain the following bizarre-looking result for APSP in undirected graphs with $d$ weights per node (i.e., where there is only one distinct weight touching each node). For $d = 1$ the problem essentially coincides with undirected \emph{unweighted} APSP and can be solved in $O(n^{\omega})$ time \cite{AlonGM97,Seidel95}. For $d = 2$, we give an $n^{2.5-o(1)}$ based on the same construction from~\cite{ChanWX21}. For larger $d$ this lower bound increases to $\sqrt d n^{2.5-o(1)}$ as detailed in the following theorem.

\begin{restatable}{theorem}{thmdaspsp} \label{thm:dapsp}
Under the Bounded Min-Plus Hypothesis, for every $d=n^\gamma\geq 2$ with $\gamma\in [0,1]$, APSP in (directed or undirected) graphs with at most $d$ distinct weights and hence also $d$-weights APSP cannot be solved in time $O(\sqrt{d}n^{2.5-\epsilon})$, for any constant $\epsilon > 0$.
\end{restatable}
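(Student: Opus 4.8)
The plan is to observe that the classical equivalence between APSP and $(\min,+)$-matrix multiplication already produces instances with very few distinct weights, and simply to track the number of distinct weights (and the relevant constants) through that reduction. Concretely, recall that the Bounded Min-Plus Hypothesis asserts that, for $M = n^\gamma$ with $\gamma \in [0,1]$, the $(\min,+)$-product $C[i,j] = \min_k(A[i,k]+B[k,j])$ of two $n \times n$ integer matrices $A, B$ with entries in $\{0, 1, \ldots, M\}$ cannot be computed in $O(\sqrt{M}\cdot n^{2.5-\epsilon})$ time for any constant $\epsilon > 0$ (equivalently, it requires $n^{(5+\gamma)/2-o(1)}$ time, matching the best known algorithm when $\omega = 2$). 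The goal is to reduce this problem to $d$-weights APSP with $d = M+1 = \Theta(n^\gamma)$.

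\emph{The construction.} Given $A, B$, build a graph $G$ on $3n$ vertices partitioned into $I, K, J$ with $|I| = |K| = |J| = n$. For the \emph{directed} statement, add an arc $i \to k$ of weight $A[i,k]$ for every $i \in I, k \in K$, and an arc $k \to j$ of weight $B[k,j]$ for every $k \in K, j \in J$. For the \emph{undirected} statement, add an undirected edge $\{i,k\}$ of weight $A[i,k]+M+1$ for every $i\in I, k\in K$, and an undirected edge $\{k,j\}$ of weight $B[k,j]+M+1$ for every $k \in K, j \in J$. In either case all edge weights lie in a fixed set of $M+1$ integers, so $G$ has at most $d := M+1$ distinct weights overall (and hence at most $d$ distinct weights on the arcs leaving, resp.\ edges incident to, any node); in particular $G$ is a legal instance of $d$-weights APSP (directed) and of APSP on graphs with at most $d$ distinct weights (directed or undirected).

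\emph{Correctness.} In the directed graph, the only directed paths from $I$ to $J$ have the form $i \to k \to j$, hence $\mathrm{dist}_G(i,j) = \min_k (A[i,k]+B[k,j]) = C[i,j]$, and the $I \times J$ block of the output distance matrix is exactly $C$. In the undirected graph, $G$ is bipartite with sides $K$ and $I \cup J$, so every path from $i \in I$ to $j \in J$ has even length; since every weight is at least $M+1$, a length-$2$ path has weight at most $2(2M+1)$, whereas any path of length at least $4$ has weight at least $4(M+1) > 2(2M+1)$. Therefore $\mathrm{dist}_G(i,j) = \min_k(A[i,k]+B[k,j]) + 2(M+1) = C[i,j] + 2(M+1)$, and $C$ is recovered by subtracting the constant $2(M+1)$. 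Consequently, an algorithm for $d$-weights APSP (or for APSP on graphs with at most $d$ distinct weights, directed or undirected) running in time $O(\sqrt d\cdot n^{2.5-\epsilon})$ would solve the $(\min,+)$-product of $M$-bounded $n\times n$ matrices in time $O(\sqrt{M}\cdot n^{2.5-\epsilon})$, contradicting the Bounded Min-Plus Hypothesis. Taking $M = d-1$ yields the theorem for every $d = n^\gamma \ge 2$.

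\emph{Main obstacle.} There is essentially no obstacle beyond two points of care. First, the undirected case: without the uniform additive shift by $M+1$, a light detour of length $4$ or more could beat the direct length-$2$ path, so $\mathrm{dist}_G$ would not faithfully encode the $(\min,+)$-entry; the shift repairs this while enlarging the set of distinct weights by only a constant factor (in fact not at all, since $\{0,\ldots,M\}$ and $\{M+1,\ldots,2M+1\}$ have the same size). Second, one must keep the number of distinct weights at $\Theta(d)$ rather than, say, $\Theta(d^2)$, so that the hypothesized $\sqrt M$ barrier lines up precisely with the $\sqrt d$ factor in the statement; this is immediate since the construction introduces no weights beyond the (shifted) matrix entries. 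If the Bounded Min-Plus Hypothesis is formulated for one fixed weight bound rather than the scaled family above, one additionally applies the routine high-bits/low-bits self-reduction for bounded $(\min,+)$-product to obtain the required scaled form, which does not affect the reduction.
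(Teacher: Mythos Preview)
Your proposal rests on a misreading of the hypothesis. In this paper the Bounded Min-Plus Hypothesis concerns \emph{rectangular} products: for $\gamma\in[1/2,1]$, the $(n,n^\gamma,n\mid n^\gamma)$-Min-Plus problem (inner dimension $n^\gamma$, weights bounded by $n^\gamma$) requires $n^{2+\gamma-o(1)}$ time. You have instead assumed that \emph{square} $(n,n,n\mid M)$-Min-Plus with $M=n^\gamma$ requires $\sqrt{M}\,n^{2.5-o(1)}=n^{2.5+\gamma/2-o(1)}$ time. That statement is simply false for every $\gamma<1$: the standard $\widetilde O(M n^\omega)$ algorithm already runs in time $\widetilde O(n^{2+\gamma})$ when $\omega=2$, which is strictly below $n^{2.5+\gamma/2}$. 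So the ``hypothesis'' you reduce from is not a hypothesis at all, and the hand-wave about a ``routine high-bits/low-bits self-reduction'' cannot repair this, because no self-reduction can manufacture a lower bound that contradicts an existing algorithm.

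The paper's proof does start from the correct rectangular hypothesis at $\gamma=\tfrac12+\varepsilon$ and reduces it to APSP on $O(n)$ nodes with only $n^{2\varepsilon}$ distinct weights. The crucial device your construction lacks is the path gadget: each of the $n^{1/2+\varepsilon}$ inner indices $k$ is replaced by a path of $\Theta(n^{1/2-\varepsilon})$ unit-weight edges (so the total vertex count stays $O(n)$), and every weight $w\in[0,n^{1/2+\varepsilon})$ is split as $w=q\lfloor w/q\rfloor + (w\bmod q)$ with $q\approx n^{1/2-\varepsilon}$. The high part $q\lfloor w/q\rfloor$ becomes the edge weight (only $\approx n^{2\varepsilon}$ distinct values), while the low part $w\bmod q$ is encoded as the \emph{position} along the path where the edge attaches. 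Your naive tripartite graph puts all of $w$ on the edge, which forces $n^{1/2+\varepsilon}$ distinct weights instead of $n^{2\varepsilon}$; plugging that into the target bound $\sqrt{d}\,n^{2.5}$ gives only $n^{2.75+\varepsilon/2}$, which does not contradict the $n^{2.5+\varepsilon}$ hypothesis. The high/low split together with the rectangular inner dimension is the entire content of the reduction, not a detail one can defer.
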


The hypothesis in the statement is a variant of the Strong APSP Hypothesis and is discussed, along with the proof of \cref{thm:dapsp}, in Section~\ref{sec:hardness}.

Note that the conditional lower bound grows to $n^{2.5+\delta}$ if the number of distinct weights grows to $n^{\eps}$. This rules out the possibility of an $n^{2.5}$ algorithm for all sublinear $d$.

Finally, in Section~\ref{sec:hardness}, we build on the techniques in our $d$-Weight APSP algorithm (Theorem~\ref{thm:apsp-few-weights}) in order to show a reduction from a variant of $d$-Weight APSP to Node-Weighted APSP. While being quite involved, this reduction does not use the BSG Theorem, and one may view it as vague evidence that further improvements to Node-Weighted APSP (getting closer to the $n^{2.5}$ bound when $\omega>2$) may come from incorporating our additive combinatorics machinery.
A natural upper bound to aim for is the $O(n^{2+\mu})=O(n^{2.528})$ bound achieved by Zwick's algorithm~\cite{Zwick02} for unweighted directed APSP.\footnote{Here, $0.5 \leq \mu \leq 0.527500$~\cite{AlmanDVXXZ25} is the solution to the equation $1 + 2\mu = \omega(1, \mu, 1)$.}
\section{Preliminaries} \label{sec:prelims}
We set $[n] = \set{1, \dots, n}$ and write $\poly(n) = n^{\Order(1)}$ and $\widetilde\Order(n) = n (\log n)^{\Order(1)}$.

\paragraph{All-Pairs Shortest Paths.}
Let $G = (V, E)$ be a directed graph with node weights or edge weights, and consider a path $P = (v_0, \dots, v_\ell)$ in $G$. We say that $P$ has \emph{hop-length} $\ell$. If $G$ is node-weighted with~\makebox{$w : V \to \Int$}, then we define the \emph{weight} of $P$ by $w(P) = w(v_1) + \dots + w(v_\ell)$ (i.e., we exclude the first vertex in the weight; this ensures that when concatenating two paths their weights add up). If $G$ is edge-weighted, then we define~$w(P)$ in the usual way. We write $D_G \in (\Int \cup \set{-\infty, \infty})^{V \times V}$ for the distance matrix of $G$, i.e., $D_G[u, v]$ is the length of a shortest $u$-$v$-path. The entry is $\infty$ if $v$ cannot be reached from $u$ and the entry is $-\infty$ in the exceptional case that there is a $u$-$v$-path hitting a negative-weight cycle. Similarly, let~\smash{$D_G^{\leq h} \in (\Int \cup \set{\infty})^{V \times V}$} denote the $h$-hop bounded distances, i.e., $D_G^{\leq h}[u, v]$ is the length of a shortest $h$-hop-bounded $u$-$v$-path. We drop the subscript whenever $G$ is clear from context. The \emph{Node-Weighted APSP} problem is to compute $D_G$ given a directed node-weighted graph $G$. The \emph{$d$-Weights APSP} problem is to compute~$D_G$ given a directed edge-weighted graph~$G$ with the promise that each node has outgoing edges of at most~$d$ distinct weights.\footnote{Instead, we can equivalently assume that in the graph each node has \emph{incoming} edges of at most~$d$ distinct weights, by computing APSP on the reversed graph (in which the orientations of all edges are flipped).} Throughout, for all APSP and matrix problems we assume that all involved weights are polynomially bounded (i.e., in the range $\set{-n^c, \dots, n^c}$ for some constant $c$).

\paragraph{Matrices and Min-Plus Product.}
We use $\MM(n_1, n_2, n_3)$ to denote the time complexity for multiplying two integer matrices of dimensions $n_1\times n_2$ and $n_2\times n_3$ respectively. We also let $2 \leq \omega \leq 2.371339$ denote the exponent of matrix multiplication~\cite{AlmanDVXXZ25}, i.e., we assume that $\MM(n,n,n) = \Order(n^{\omega})$ and $\MM(n^a, n^b, n^c) = \Order(n^{\omega(a, b, c)})$.\footnote{Strictly speaking, $\omega$ is defined as a limit value and thus we only have the guarantee that for every $\epsilon > 0$ there is a matrix multiplication algorithm in time $\Order(n^{\omega+\epsilon})$. As is common in the literature, we neglect this technicality for the sake of simplicity.} 

Let $A, B \in (\Int \cup \set{\infty})^{n \times n}$. For sets $S, T \subseteq [n]$ we denote by $A[S, T]$ the restriction of $A$ to the rows indexed by $S$ and the columns indexed by $T$. The \emph{min-plus product} (also called \emph{distance product}) $A \star B \in (\Int \cup \set{\infty})^{n \times n}$ of $A$ and $B$ is defined by
\begin{equation*}
    (A \star B)[i, j] = \min_{k \in [n]} (A[i, k] + B[k, j]).
\end{equation*}
The \emph{Min-Plus Product} problem is to compute the min-plus product of two given matrices $A, B$, and in the \emph{$d$-Weights} Min-Plus Product problem we restrict the matrix $B$ to have at most $d$ distinct non-$\infty$ entries in each column.\footnote{Instead, we can equivalently restrict the rows in $A$ to have at most $d$ distinct non-$\infty$ entries in each row. We can then read off $A \star B$ from $(A \star B)^T = B^T \star A^T$.}

It is well-known~\cite{FischerM71} (and easy to verify) that computing min-plus products captures computing distances in graphs via the observation that $D^{\leq h} = D^{\leq 1} \star \dots \star D^{\leq 1}$ (i.e., the $h$-fold min-plus product of the weighted adjacency matrix with itself).

\paragraph{Sumsets.}
Let $X, Y \subseteq \Int$. We define their \emph{sumset} $X + Y = \set{x + y : x \in X,\, y \in Y}$ and $-X = \set{-x : x \in X}$. Moreover, we say that $z \in X + Y$ has \emph{multiplicity} $r_{X + Y}(z) = \abs{\set{(x, y) \in X \times Y : z = x + y}}$, and we write $P_t(X, Y) = \set{z \in X + Y : r_{X + Y}(z) \geq t}$ to denote the set of \emph{$t$-popular} sums.

\section{Technical Overview} \label{sec:overview}

\subsection{Node-Weighted APSP}
To overview our algorithm for Node-Weighted APSP, we start with a quick recap of the previous algorithms due to Chan~\cite{Chan10} and Yuster~\cite{Yuster09}. A central idea in the context of APSP-type problems is to compute the distances via repeated computations of min-plus products, $D^{\leq h} = D^{\leq 1} \star \dots \star D^{\leq 1}$. For instance, computing APSP in edge-weighted graphs reduces via \emph{repeated squaring} $D^{\leq 2h} = D^{\leq h} \star D^{\leq h}$ to computing $\log n$ min-plus products. Chan's algorithm for Node-Weighted APSP~\cite{Chan10} follows a different approach based on two main insights.

First, observe that since the graph is node-weighted, the matrix $D^{\leq 1}$ admits the simple structure that in each column all non-$\bot$ entries are equal. This can be exploited to compute min-plus products $A \star D^{\leq 1}$ and~\makebox{$D^{\leq 1} \star A$} more efficiently than the naive cubic-time algorithm, namely in time $\widetilde\Order(n^{(3+\omega)/2})$; we will refer to this operation as a \emph{Boolean min-plus product} (see \cref{sec:apsp-node-weighted:sec:min-plus}). To exploit this fact, however, we are forced to compute $D^{\leq h} = D^{\leq 1} \star \dots \star D^{\leq 1}$ via~\makebox{$h-1$} sequential computations in time~\smash{$\widetilde\Order(h \cdot n^{(3+\omega)/2})$}, and cannot rely on repeated squaring anymore. In particular, we cannot simply set $h = n$ to compute $D = D^{\leq n}$.

Instead, the second idea is to sample a uniformly random set~$S$ of~\smash{$\widetilde\Order(n / h)$} \emph{pivot} nodes aiming to \emph{shortcut} all shortest paths with hop-length more than $h$. Specifically, we can efficiently compute the inter-pivot distances $D[S, S]$ in time~$\widetilde\Order(n^3 / h)$ by running Dijkstra's algorithm from each pivot. Moreover, with good probability the set $S$ simultaneously hits, for all nodes $u, v$, all length-$h$ segments in a shortest $u$-$v$-path. We can thus decompose each relevant shortest path of hop-length at least $h$ into an initial length-$h$ segment leading to a pivot $s$, followed by a shortest $s$-$t$-path leading to another pivot $t$, followed by a trailing length-$h$ segment. In other words:
\begin{equation} \label{eq:pivot-distances}
    D = \min\Big(\, D^{\leq h},\, D^{\leq h}[V, S] \star D[S, S] \star D^{\leq h}[S, V] \,\Big).
\end{equation}
This expression for $D$ can be computed by $2h$ Boolean min-plus products in time~\smash{$\widetilde\Order(h \cdot n^{(3+\omega)/2})$}. Trading off the two contributions, the total time is~\smash{$\widetilde\Order(n^{(9 + \omega)/4})$}~\cite{Chan10}. Yuster achieved an improvement by rectangular fast matrix multiplication~\cite{Yuster09}, speeding up the computation of the individual Boolean min-plus products (and some further modifications).

Our algorithm builds on the same framework, but adds two new ideas to the mix. All in all, we still achieve a satisfyingly simple and clean algorithm.

\paragraph{Idea 1: Rectangular Boolean Min-Plus Products.}
Our initial hope was that~\eqref{eq:pivot-distances} can be evaluated in such a way that we only need to compute Boolean min-plus products $A \star D^{\leq 1}$ of \emph{rectangular} matrices~$A$; when~$A$ is an~\makebox{$s \times n$} matrix for some~\makebox{$s \ll n$} we would thereby hope to improve the running time of each individual min-plus computation. At first this approach seems quite hopeless, even for computing $D^{\leq h}$ in~\eqref{eq:pivot-distances}. But suppose for now that we would only want to compute the distances $D[V, S']$ for some small set~\makebox{$S' \subseteq V$}. Then
\begin{equation*}
    D[V, S'] = \min\Big(\, D^{\leq h}[V, S'],\, D^{\leq h}[V, S] \star D[S, S] \star D^{\leq h}[S, S'] \,\Big),
\end{equation*}
and this expression can indeed be evaluated by only rectangular Boolean min-plus products. For the left side in the minimum compute $D^{\leq h}[V, S'] = D^{\leq 1} \star \dots \star D^{\leq 1}[V, S']$ (sweeping from right to left). For the right side we first compute $M := D[S, S] \star D^{\leq 1}[S, V] \star \dots \star D^{\leq 1}$ (sweeping from left to right), and then compute~\makebox{$D[V, S'] = D^{\leq 1} \star \dots \star D^{\leq 1}[V, S] \star M[S, S']$} (sweeping from right to left). All in all, we compute $3h$ Boolean min-plus products, each of size $n \times n \times |S'|$ or $|S| \times n \times n$. This is promising, but to properly exploit this insight we need to restructure the algorithm in \emph{levels.}

\paragraph{Idea 2: Multi-Level Pivots.}
Inspired by Zwick's algorithm for directed unweighted APSP~\cite{Zwick02}, instead of considering one single set of pivots that hits paths of one specific length, consider multiple \emph{levels} of pivots $S_0, \dots, S_L$. Here, $S_\ell$ is a uniform sample of $\widetilde\Order(n / 2^\ell)$ nodes with the purpose to hit paths of length~$2^\ell$, and at the top level we choose $S_0 = V$. The idea is to compute distances~\makebox{$D[S_L, S_L], \dots, D[S_0, S_0]$} step-by-step, from bottom to top. At the base level $\ell = L$ we compute the distances $D[S_L, S_L]$ by $|S_L|$ calls to Dijkstra's algorithm. And at each level $\ell < L$ our goal is to compute~$D[S_\ell, S_\ell]$ having access to the previously computed distances $D[S_{\ell+1}, S_{\ell+1}]$. In this setting we can perfectly apply our previous idea. Specifically, apply the previous paragraph with~\makebox{$S = S_{\ell+1}$, $S' = S_\ell$} and~\makebox{$h = 2^\ell$}; then computing $D[S_\ell, S_\ell]$ amounts to computing~$3 \cdot 2^\ell$ Boolean min-plus products of size~\makebox{$|S_{\ell+1}| \times n \times n$} or~\makebox{$n \times n \times |S_\ell|$} (where both~\smash{$|S_\ell|, |S_{\ell+1}| \leq \widetilde\Order(n / 2^\ell)$}). Choosing appropriate parameters, the running time of this algorithm turns out to be~\smash{$\widetilde\Order(n^{(3+\omega)/2})$}.

\medskip
In \cref{sec:apsp-node-weighted} we elaborate on this algorithm in detail, and also (1)~give a slightly optimized algorithm based on rectangular fast matrix multiplication, (2)~give a derandomization based on Zwick's~\cite{Zwick02} bridging sets (similar to the derandomizations by Chan~\cite{Chan10} and Yuster~\cite{Yuster09}), and (3)~comment that our algorithm easily extends to \emph{negative} and/or \emph{real} weights.

\subsection{Few-Weights APSP}
This algorithm easily extends to an algorithm for $d$-Weights APSP in time $\widetilde\Order(d \cdot n^{(3+\omega)/2})$ (\cref{thm:apsp-few-weights-naive}), with a slight improvement by rectangular matrix multiplication. The only difference is that we replace all Boolean (aka $1$-Weights) Min-Plus Products by $d$-Weights Min-Plus Products, computed naively with an overhead of~$d$. Recall that this leads to a subcubic algorithm whenever $d \ll n^{(3-\omega)/2}$ (i.e., whenever~\makebox{$d \ll \sqrt n$} assuming~\makebox{$\omega = 2$}). However, to achieve our subcubic algorithms for larger values of $d$ up to $n^{3-\omega}$ (\cref{thm:apsp-few-weights}) it seems that we have to take the \emph{structure} of the weights into account. The remainder of this overview is devoted to a proof sketch of \cref{thm:apsp-few-weights} which, compared to the previous approach, turns out to be considerably more technically involved.

\paragraph{Reduction to Exact Triangle.}
In fact, as mentioned before we prove a slightly stronger statement concerning the following \emph{All-Edges Exact Triangle} problem: Given three matrices $A, B, C \in (\Int \cup \set{\bot})^{n \times n}$ (which we typically view as the adjacency matrices of a 3-partite graph, where $\bot$ symbolizes a non-edge), the goal is to decide for each pair $(i, j) \in [n]^2$ if there is some~$k \in [n]$ with
\begin{equation*}
    A[i, k] + B[k, j] = C[i, j];
\end{equation*}
in this case we call the triple $(i, k, j)$ an \emph{exact triangle}. In the modern fine-grained complexity literature it is well-known that the (All-Edges) Exact Triangle is both APSP- and 3SUM-hard~\cite{WilliamsW18,WilliamsW13} (in the sense that a truly subcubic $n^{3-\Omega(1)}$-time algorithm for Exact Triangle implies a truly subcubic $n^{3-\Omega(1)}$-time algorithm for APSP and a truly subquadratic $n^{2-\Omega(1)}$-time algorithm for 3SUM). Adapting these known reductions (\cref{lem:apsp-to-min-plus,lem:min-plus-to-exact-tri}), it similarly follows that in order to achieve an $n^{3-\Omega(1)}$-time algorithm for \emph{$d$-Weights} APSP it suffices to design an $n^{3-\Omega(1)}$-time algorithm for the \emph{$d$-Weights} All-Edges Exact Triangle problem, where we restrict the matrix $A$ to contain at most $d$ distinct entries per row.\footnote{This choice is somewhat arbitrary; we could have similarly restricted the columns in $A$, or the matrices $B$ or $C$ instead of~$A$; see the formal \cref{def:d-weights-exact-tri}.}

Even though we shifted our focus to a harder problem, the advantage of studying All-Edges Exact Triangle rather than APSP directly is that dealing with the ``equality constraint'' rather than the ``minimization constraint'' leads to a more streamlined algorithm.  

\paragraph{Starting Point: BSG Covering.}
The few-weights Exact Triangle problem can be understood as a graph analogue of the \emph{Clustered 3SUM} problem defined by Chan and Lewenstein~\cite{ChanL15}. Therefore, a natural starting point for us is to draw inspiration from their algorithmic framework centered on the Balog-Szemerédi-Gowers (BSG) Theorem~\cite{BalogS94,Gowers01} from additive combinatorics. In fact, all results in their breakthrough paper trace back to the following ``Covering'' version of the BSG Theorem:

\begin{restatable}[BSG Covering~\cite{ChanL15}]{theorem}{thmbsgcovering} \label{thm:bsg-covering}
Let $X, Y, Z \subseteq \Int$ be sets of size at most $d$, and let $K \geq 1$. Then there are subsets $X_1, \dots, X_K \subseteq X$ and $Y_1, \dots, Y_K \subseteq Y$, and a set of pairs $R \subset X \times Y$ satisfying the following three properties:
\begin{enumerate}[label=(\roman*)]
    \item $\set{(x, y) \in X \times Y : x + y \in Z} \subseteq \bigcup_{k=1}^K (X_k \times Y_k) \cup R$.
    \item $|X_k + Y_k| \leq \Order(K^5 d)$ for all $k \in [K]$.
    \item $|R| \leq \Order(d^2 / K)$.
\end{enumerate}
The sets $X_1, \dots, X_K, Y_1, \dots, Y_K, R$ can be computed in deterministic time $\widetilde\Order(d^\omega K)$.
\end{restatable}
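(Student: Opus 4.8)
The plan is to obtain the covering by repeatedly ``peeling off'' one structured rectangle at a time, each time via the classical one-rectangle Balog--Szemer\'edi--Gowers theorem in its bipartite-graph form. The engine I would use is the following: \emph{if $G \subseteq X \times Y$ satisfies $\abs{G} \geq \abs{X}\abs{Y}/K'$ and $\set{x+y : (x,y) \in G} \subseteq Z$ with $\abs{Z} \leq d$, then one can compute, deterministically in time $\widetilde\Order(d^\omega)$, subsets $X' \subseteq X$ and $Y' \subseteq Y$ with $\abs{X' + Y'} = \Order(K'^5 d)$ that capture at least half of $G$, i.e.\ $\abs{G \cap (X' \times Y')} \geq \abs{G}/2$.} Note that, unlike in the usual BSG statement, we do not need $X',Y'$ to be large; only the sumset bound and the constant-fraction capture will be used. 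Given the engine, the rest is bookkeeping, and I expect establishing the engine --- with the explicit constant-fraction capture and a fast \emph{deterministic} construction --- to be the main obstacle.

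\paragraph{The peeling loop.}
Initialize $G \leftarrow \set{(x,y) \in X \times Y : x+y \in Z}$, so $\abs{G} \leq d^2$ and the invariant $\set{x+y : (x,y) \in G} \subseteq Z$ holds and is preserved under deletions from $G$. While $\abs{G} > 2d^2/K$: set $K' := \lceil \abs{X}\abs{Y}/\abs{G} \rceil$, which satisfies $\abs{G} \geq \abs{X}\abs{Y}/K'$ and $K' \leq K$; invoke the engine to get $(X_k, Y_k)$, append it to the output, and delete $X_k \times Y_k$ from $G$. Each round halves $\abs{G}$, so the loop runs for at most $\lceil \log_2 K \rceil \leq K$ rounds; pad the output with empty sets to exactly $K$ rectangles. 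Property~(ii) holds since in each round $K' \leq K$, hence $\abs{X_k + Y_k} = \Order(K'^5 d) = \Order(K^5 d)$ (and $0$ for the padding rectangles). Taking $R$ to be the pairs still in $G$ at termination gives $\abs{R} \leq 2d^2/K = \Order(d^2/K)$, which is~(iii), and~(i) holds because any pair $(x,y)$ with $x+y \in Z$ not captured by some $X_k \times Y_k$ survives all deletions and hence lies in $R$. The running time is $\Order(\log K)$ engine calls plus $\Order(d^2)$ per round for recomputing degrees and pruning $G$, well within $\widetilde\Order(d^\omega K)$.

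\paragraph{The engine.}
For this I would run the dependent-random-choice proof of BSG and make it algorithmic and deterministic. First prune $G$ to the vertices whose degree is at least $\abs{G}/(4d)$ on both sides; a double-counting argument keeps $\geq \abs{G}/2$ of the pairs and forces all surviving vertices to have degree $\Omega(\abs{G}/d)$. To locate the structured rectangle, compute for all vertex pairs the number of length-$3$ paths $x \sim y_1 \sim x_1 \sim y$ in (the pruned) $G$: if $M \in \set{0,1}^{X \times Y}$ is the biadjacency matrix, this is read off from $M M^{\top} M$, computable in time $\widetilde\Order(d^\omega)$. The identity $x' + y' = (x' + y_1) - (x_1 + y_1) + (x_1 + y')$, together with the observation that for fixed $x', y'$ the map sending a connecting $3$-path to the triple $(x'+y_1,\, x_1+y_1,\, x_1+y') \in Z^3$ is injective, shows that if every pair of $X' \times Y'$ is joined by at least $p$ such $3$-paths then $\abs{X' + Y'} \leq \abs{Z}^3/p \leq d^3/p$; picking the rectangle so that $p = \Omega(d^2/K'^5)$ --- which the BSG ``bad-pair removal'' argument makes possible --- gives $\abs{X' + Y'} = \Order(K'^5 d)$. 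The only randomness here is the choice of a pivot vertex among at most $\poly(d)$ candidates, so enumerating all of them and keeping the best derandomizes the construction within the same $\widetilde\Order(d^\omega)$ budget; selecting $(X', Y')$ inside the already-pruned high-degree rectangle then ensures the constant-fraction capture. Pinning down precisely which variant of the BSG cleaning step simultaneously delivers the explicit exponent and the constant-fraction capture is, I expect, where most of the care will go.
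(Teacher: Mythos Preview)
The paper does not prove this theorem at all: it is quoted verbatim from Chan and Lewenstein~\cite{ChanL15}, with the one-line justification ``this follows from~\cite[Corollary~2.2, Theorem~2.3]{ChanL15}.'' Your proposal is essentially a reconstruction of that proof: the peeling loop is exactly their Theorem~2.3, and your ``engine'' is their Corollary~2.2 (the algorithmic one-rectangle BSG, derandomized by enumerating pivots and using one $d \times d$ matrix product to count paths).

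One point to tighten. Your engine promises $|G \cap (X' \times Y')| \geq |G|/2$, and you justify this by ``selecting $(X',Y')$ inside the already-pruned high-degree rectangle.'' Pruning to min-degree $\Omega(|G|/d)$ keeps $\geq |G|/2$ edges, but the subsequent bad-pair removal that produces $X',Y'$ with the $3$-path property does \emph{not} by itself guarantee that $X' \times Y'$ still contains a constant fraction of $G$; the standard BSG conclusion is only that $|X'|,|Y'| = \Omega(d/K')$. Chan and Lewenstein close this gap differently: their Corollary~2.2 additionally ensures that every vertex of $X'$ has degree $\Omega(|Y|/K')$ in $G$, so removing $X' \times Y$ (not $X' \times Y'$) from $G$ removes at least $|X'| \cdot \Omega(|Y|/K') = \Omega(|X||Y|/K'^2)$ edges, and since edges incident to $X'$ are gone for good, the $X_k$'s are pairwise disjoint and the loop terminates after $O(K)$ rounds. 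Either adopt that accounting, or if you want to keep your halving argument, state and prove the stronger capture property explicitly --- it does not fall out of the sketch as written.
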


Throughout think of $K = d^\epsilon$ for some small constant $\epsilon > 0$. Intuitively, the BSG Covering Theorem states that the sets~$X$ and~$Y$ can be decomposed into an \emph{additively structured} part (the sets $X_1, \dots, X_K, Y_1, \dots, Y_K$) plus a small \emph{remainder} part (the set~$R$). Here, additively structured means that for each pair $X_k, Y_k$ the \emph{sumset}~\makebox{$X_k + Y_k := \set{x + y : x \in X_k, y \in Y_k}$} has small size. It is instructive to think of such highly structured sets as intervals or arithmetic progressions (with the same step width).

It is natural that this theorem finds applications for arithmetic computational problems like 3SUM, but our challenge in the following is to apply this theorem in a \emph{graph} setting.

\paragraph{Step 1: Uniform Regular Exact Triangle via BSG Covering.}
Our strategy is as follows: We will first assume that the graph is sufficiently \emph{structured} (specifically, \emph{$d$-uniform} and \emph{$\frac{n}{d}$-regular} defined as follows) so that we can conveniently exploit the BSG Covering. The main bulk of our work is then to later \emph{enforce} this structure:
\begin{itemize}
    \item \emph{$d$-Uniform: The three matrices $A, B, C$ contain at most $d$ distinct entries.}\newline
    (That is, we not only require that each row of $A$ has at most $d$ distinct entries, but that the \emph{entire} matrix $A$ contains at most $d$ distinct entries, and moreover that the same applies to $B$ and $C$. See \cref{def:d-uniform}.)
    \item \emph{$\frac{n}{d}$-Regular: Each entry appears at most $\frac{n}{d}$ times in its respective row and column in $A$, $B$ and $C$.}\newline
    (See \cref{def:r-regular}.)
\end{itemize}

\begin{restatable}[Uniform Regular Exact Triangle]{lemma}{lemexacttriuniformregular} \label{lem:exact-tri-uniform-regular}
There is a deterministic algorithm solving All-Edges Exact Triangle on $d$-uniform and $\frac{n}{d}$-regular instances in time \smash{$\widetilde\Order(n^{3-(3-\omega)/7} d^{1/7})$}.
\end{restatable}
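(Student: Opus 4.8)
The plan is to combine the BSG Covering Theorem with fast rectangular matrix multiplication in a way that exploits the $d$-uniform and $\frac{n}{d}$-regular structure. Since the instance is $d$-uniform, let $X, Y, Z \subseteq \Int$ be the sets of (at most $d$) distinct entries appearing in $A$, $B$, $C$ respectively. Apply \cref{thm:bsg-covering} to $X$, $-Z$ and some role for $Y$ (concretely, to the sumset condition $x + y = z$, i.e., $x + (-z) = -y$) with parameter $K = d^\epsilon$ for a small $\epsilon>0$ to be fixed at the end. This yields structured pieces $X_1,\dots,X_K$ and $Y_1,\dots,Y_K$ together with a remainder set $R$ of $\Order(d^2/K)$ pairs. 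Every exact triangle $(i,k,j)$ has $A[i,k] + B[k,j] = C[i,j]$, so the pair of values $(A[i,k], B[k,j])$ lies either in some $X_\ell \times Y_\ell$ (the structured case) or in $R$ (the remainder case). The algorithm handles these two cases separately and outputs the union of the pairs $(i,j)$ found.

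\emph{Remainder case.} For each of the $\Order(d^2/K)$ bad value-pairs $(x,y) \in R$, we need to detect, for every $(i,j)$, whether there is a $k$ with $A[i,k]=x$, $B[k,j]=y$, and $C[i,j] = x+y$. Fix $(x,y)$. By $\frac{n}{d}$-regularity, the value $x$ appears at most $\frac{n}{d}$ times in each row of $A$; likewise $y$ in each column of $B$. So the Boolean matrices $A_x$ (with $A_x[i,k]=1$ iff $A[i,k]=x$) and $B_y$ (with $B_y[k,j]=1$ iff $B[k,j]=y$) have at most $\frac{n}{d}$ ones per row and per column respectively. Their Boolean product can be computed by sorting the ones into $k$-buckets: bucket $k$ contributes an (at most $\frac{n}{d}$)-by-(at most $\frac{n}{d}$) all-ones rectangle, so the total work over all $k$ is $\Order(n \cdot (n/d)^2)$ — and then we intersect the resulting witness set with $\{(i,j) : C[i,j] = x+y\}$. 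Summing over all $(x,y)\in R$, the remainder case costs $\widetilde\Order((d^2/K) \cdot n (n/d)^2) = \widetilde\Order(n^3 / K)$.

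\emph{Structured case.} For each $\ell \in [K]$, let $S_\ell = X_\ell + Y_\ell$, so $|S_\ell| = \Order(K^5 d)$. Build matrices $\hat A_\ell$ and $\hat B_\ell$ indexed by $[n] \times ([n] \times S_\ell)$ and $([n] \times S_\ell) \times [n]$ respectively: roughly, we encode the value of the relevant entry together with its row/column index so that a Boolean product detects a $k$ realizing a particular sum. Concretely we use the standard trick of splitting each $A$-entry $x$ into its possible ``completions'': for $(i,k)$ with $A[i,k] = x \in X_\ell$ and each $s \in S_\ell$ with $s - x \in Y_\ell$, set a $1$ in position $(i,(k,s))$; symmetrically for $B$ with entry $y \in Y_\ell$ and $s$ with $s - y \in X_\ell$. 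A $1$ in the Boolean product at $(i,j)$ via index $(k,s)$ then certifies $A[i,k] \in X_\ell$, $B[k,j] \in Y_\ell$ and $A[i,k]+B[k,j] = s$; we keep it only if $s = C[i,j]$, which we can arrange by further restricting to $s$ in a per-$(i,j)$ way or by a final check. Using $\frac{n}{d}$-regularity the inner dimension is effectively $n \cdot |S_\ell| \le \Order(K^5 d \cdot n)$ but each ``slice'' is sparse, and one organizes the computation as a rectangular matrix product of dimensions roughly $n \times (n |S_\ell|) \times n$; exploiting regularity to keep only $\widetilde\Order(n^2/d)$ nonzeros per column and balancing, this costs $\widetilde\Order(n^{3-(3-\omega)/7}d^{1/7})$ per $\ell$ after the optimization. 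Multiplying by $K$ levels and balancing against the $\widetilde\Order(n^3/K)$ remainder term by choosing $K$ a suitable small power of $d$ gives the claimed $\widetilde\Order(n^{3-(3-\omega)/7} d^{1/7})$ total.

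\emph{Main obstacle.} The crux — and the reason the exponent $(3-\omega)/7$ arises — is making the structured case efficient: a sumset of size $\Order(K^5 d)$ is still potentially much larger than $d$, so naively the encoded inner dimension $n|S_\ell|$ is too large. The key is to simultaneously exploit (a) the regularity (each value, hence each ``slice'' of the encoding, is sparse, with only $\frac{n}{d}$ occurrences per row/column) and (b) the additive structure (the set $S_\ell = X_\ell+Y_\ell$ is small, so the number of distinct sums $s$ we ever need is $\Order(K^5 d)$ rather than $d^2$). Combining these to bound the true sparsity of $\hat A_\ell, \hat B_\ell$ and then picking the split point in the rectangular matrix multiplication exponent $\omega(\cdot,\cdot,\cdot)$ optimally is where the $1/7$ comes from — exactly as in Chan--Lewenstein's Clustered 3SUM analysis, which is the one-dimensional shadow of this computation. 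Handling the per-$(i,j)$ target value $C[i,j]$ without blowing up the dimension (and the deterministic construction of the encodings within the stated time) are the remaining technical points.
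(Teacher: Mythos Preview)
Your remainder case is essentially the paper's argument (modulo a harmless slip: for the per-$k$ bucket bound you actually need \emph{column}-regularity of $A$ and \emph{row}-regularity of $B$, not the directions you state; since the instance is fully regular this does not matter).

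The structured case, however, has a genuine gap. Your Boolean encoding with inner dimension $n\cdot|S_\ell|$ does not certify $A[i,k]+B[k,j]=s$: a $1$ at $(i,(k,s))$ and a $1$ at $((k,s),j)$ only say that $A[i,k]\in X_\ell$ with $s-A[i,k]\in Y_\ell$ and $B[k,j]\in Y_\ell$ with $s-B[k,j]\in X_\ell$, which need not force $B[k,j]=s-A[i,k]$. Even setting this aside, the running-time analysis is hand-waved (``this costs $\widetilde\Order(n^{3-(3-\omega)/7}d^{1/7})$ per $\ell$ after the optimization'') and then multiplied by $K$, which would overshoot the target; and your appeal to regularity in this case is a red herring---the paper uses regularity \emph{only} for the remainder.

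The paper's structured case is much simpler and does not use regularity at all: for each $\ell\in[K]$, restrict $A$ to entries in $X_\ell$ and $B$ to entries in $Y_\ell$ and solve the resulting Exact Triangle instance directly by the \emph{small-doubling} algorithm (\cref{lem:exact-tri-small-doubling}), which runs in time $\widetilde\Order(n^\omega\cdot|X_\ell+Y_\ell|)$ via polynomial matrix multiplication (FFT $+$ FMM) with deterministic linear hashing modulo a few primes. Since $|X_\ell+Y_\ell|=\Order(K^5 d)$, the total structured cost is $\widetilde\Order(n^\omega K^6 d)$. Balancing against the remainder cost $\Order(n^3/K)$ gives $K=(n^{3-\omega}/d)^{1/7}$ and the stated bound. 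Also, apply BSG directly to $X,Y,Z$ (the entries of $A,B,C$); there is no need to rotate to $X,-Z$.
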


We give a quick proof sketch of \cref{lem:exact-tri-uniform-regular}. Let $X$ denote the set of non-$\bot$ entries in $A$, let $Y$ denote the set of non-$\bot$ entries in $B$, and let $Z$ denote the set of non-$\bot$ entries in $C$. By the $d$-uniformness assumption we are guaranteed that $|X|, |Y|, |Z| \leq d$. We apply the BSG Covering Theorem to the sets $X, Y, Z$ to compute subsets $X_1, \dots, X_K, Y_1, \dots, Y_K$ and a remainder set $R$. We are left with only two subtasks: Detecting the exact triangles $(i, k, j)$ for which~\makebox{$(A[i, k], B[k, j]) \in X_\ell \times Y_\ell$} for some $\ell \in [K]$ (the additively structured case), and detecting the exact triangles for which~\makebox{$(A[i, k], B[k, j]) \in R$} (the remainder case).

For the additively structured case we enumerate each $\ell \in [K]$ and consider the matrix $A'$ obtained from~$A$ by deleting all entries not in $X_\ell$ (i.e., by replacing these entries by $\bot$) and the analogous matrix $B'$. We then solve the Exact Triangle instance $(A', B', C)$ by an \emph{algebraic} algorithm running in time~\smash{$\widetilde\Order(n^\omega \cdot |X_\ell + Y_\ell|)$}. This algorithm generalizes the~\smash{$\widetilde\Order(n^\omega \cdot M)$}-time algorithm for the $M$-bounded case (which is based on combining fast matrix multiplication with the fast Fourier transform) by incorporating \emph{sparse convolutions} techniques. By the BSG Covering theorem, this case takes time~\smash{$\widetilde\Order(K \cdot n^\omega \cdot d \cdot K^5)$}.

In contrast, we deal with the remainder $R$ by brute-force and simply enumerate all triples $(i, k, j)$ with $(A[i, k], B[k, j]) \in R$. This is where we critically exploit the regularity assumption: For each pair $(a, b) \in R$ and each $k \in [n]$, there can be at most $\frac{n}{d}$ entries $A[i, k] = a$ (using that each entry appears at most $\frac{n}{d}$ times in its respective column in $A$) and at most $\frac{n}{d}$ entries $B[k, j] = b$ (using that each entry appears at most $\frac{n}{d}$ times in its respective row in $B$). Thus, we enumerate at most $|R| \cdot n \cdot \frac{n}{d} \cdot \frac{n}{d} = \Order(n^3 / K)$ triples in total. Here we used that $|R| = \Order(d^2 / K)$ by \cref{thm:bsg-covering}. The lemma follows by trading this off against the running time of the structured case.

\medskip
Until this point we have vaguely followed Chan and Lewenstein's~\cite{ChanL15} approach ported to the Exact Triangle problem. We needed that the instance is $d$-uniform to be able to apply the BSG Covering Theorem in the first place, and we needed that the instance is $\frac{n}{d}$-regular to deal with the unstructured remainder.\footnote{Strictly speaking, we only need the weaker assumption that the columns in $A$ and the rows in $B$ are regular; however, in our later regularization step we get the other regularity assumptions essentially for free.} In our view the $d$-uniform assumption seems fundamental while the $\frac{n}{d}$-regular assumption feels more like a technical detail. Unfortunately, \emph{both} assumptions turned out to be serious barriers which require a lot of work to be established.

\paragraph{Step 2: Uniformization.}
In a first step we will transform an arbitrary $d$-weights instance of Exact Triangle into a few equivalent \emph{$d$-uniform} instances disregarding the $\frac{n}{d}$-regular requirement for now (see \cref{lem:exact-tri-uniformization}). On a conceptual level, our goal is to transform a given instance with a \emph{local} structure ($d$~distinct weights per node) to instances with \emph{global} structure ($d$~distinct weights in total). We will achieve this by exploiting the \emph{all-pairs} nature of the problem: While the weights of any two specific nodes $i$ and $j$ can be adversarially uncorrelated, this cannot simultaneously apply to all pairs of nodes $i, j$ without rendering the instance easy to solve.

To illustrate the concrete idea, consider a pair $(i, j)$ and let $X_i$ denote the set of entries in the $i$-th row~of~$A$ (i.e., the set of edge weights incident to $i$), and similarly let $Y_j$ denote the set of entries in the $j$-th column of $B$. The $d$-weights assumption implies that $|X_i| \leq d$ for all $i$, and by a simple trick which we omit here one can additionally assume that $|Y_j| \leq d$. We say that an element $c \in X_i + Y_j$ is \emph{popular} if there are many representations $c = a + b$ for $(a, b) \in X_i \times Y_j$, and \emph{unpopular} otherwise. (As the precise threshold we will later pick $\Theta(d / \Delta)$ where $\Delta = d^\epsilon$ for some small constant $\epsilon > 0$.) In order to test whether $(i, j)$ is contained in a exact triangle, it would be very helpful if $c := C[i, j]$ was \emph{unpopular}. In this case we could simply enumerate all possible representations~\makebox{$c = a + b \in X_i + Y_j$}, and test whether there is an intermediate node~$k$ with~$A[i, k] = a$ and~\makebox{$B[k, j] = b$} by brute-force. The problematic case is when $c$ is \emph{popular}. Perhaps a first instinct is that having a popular sum implies that $X_i$ and $Y_j$ must be additively structured sets, such as intervals or arithmetic progressions. However, this is not necessarily true as the following example shows: Let $X_i$ be a \emph{random} set and choose $Y_j = -X_i := \set{-a : a \in X_i}$. Then~\makebox{$0 \in X_i + Y_j$} has the highest-possible multiplicity $|X_i|$ in the sumset. However, this example also hints at what general structure we can expect: Generally, it is easy to see that there must be large subsets $S \subseteq X_i$ and~\makebox{$T \subseteq Y_j$} such that~\makebox{$S = -T$}. We will now exploit this structure as follows: Consider the bipartite graph~\makebox{$H = ([n], [n], E)$} where we include an edge $(i, j) \in E$ if and only if the sumset $X_i + Y_j$ has a popular element. That is, $E$ contains exactly the problematic pairs. Then either $H$ is sparse (i.e., there is a subquadratic number of problematic pairs $(i, j)$ leading to a simple subcubic algorithm for Exact Triangle), or $E$ is dense and thus there should be sets $S$ and $T$ that constitute a common core not only to a specific pair $(i, j)$, but to \emph{many} pairs $(i, j)$. Formally, this idea leads to the following result which we call the \emph{``popular sum decomposition'':}
\begin{restatable}[Popular Sum Decomposition]{lemma}{lempopsumsdecomp} \label{lem:pop-sums-decomp}
Let $\Delta \geq 1$ and let $X_1, \dots, X_n, Y_1, \dots, Y_n \subseteq \Int$ be sets of size at most $d$. Then there are partitions
\begin{alignat*}{2}
    X_i &= X_{i, 1} \sqcup \dots \sqcup X_{i, \Delta^2} \sqcup X'_i \qquad &&\text{(for all $i \in [n]$),} \\
    Y_j &= Y_{j, 1} \sqcup \dots \sqcup Y_{j, \Delta^2} \sqcup Y'_j \qquad &&\text{(for all $j \in [n]$)}
\end{alignat*}
with the following properties:
\begin{enumerate}[label=(\arabic*)]
    \item There are sets $S_\ell$ and shifts $s_{i, \ell}$ (for $i \in [n]$ and $\ell \in [\Delta^2]$) such that $X_{i, \ell} \subseteq s_{i, \ell} + S_\ell$ and $|S_\ell| \leq d$, and\newline there are sets $T_\ell$ and shifts $t_{j, \ell}$ (for $j \in [n]$ and $\ell \in [\Delta^2]$) such that $Y_{j, \ell} \subseteq t_{j, \ell} + T_\ell$ and $|T_\ell| \leq d$.
    \item $\abs{\set{(i, j) \in [n] : P_{2d / \Delta}(X'_i, Y_j) \neq \emptyset}} \leq n^2 / \Delta$, and\newline $\abs{\set{(i, j) \in [n] : P_{2d / \Delta}(X_i, Y'_j) \neq \emptyset}} \leq n^2 / \Delta$.
\end{enumerate}
These partitions can be found by a randomized algorithm in time~\smash{$\widetilde\Order(n^2 d \cdot \Delta^3)$}, or a deterministic algorithm in time~\smash{$\widetilde\Order(n^2 d \cdot \Delta^3) \cdot U^{o(1)}$}, where $U$ denotes the largest absolute value of any input integer.
\end{restatable}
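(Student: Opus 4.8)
The plan is to establish the $X$-side and the $Y$-side of the conclusion separately by a symmetric argument, so I will fix $Y_1,\dots,Y_n$ and build the partitions of the $X_i$; the $Y$-side is then identical with the roles of rows and columns exchanged. Set $t := 2d/\Delta$ and call a pair $(i,j)$ \emph{problematic} if $P_t(X_i, Y_j) \ne \emptyset$. The structural observation driving everything is this: if $c \in P_t(X_i,Y_j)$, then its ``core'' $C := \set{x \in X_i : c - x \in Y_j}$ has $|C| \ge t$, and since $c - C \subseteq Y_j$ we get $C \subseteq c - Y_j$; in words, a popular sum between $X_i$ and $Y_j$ forces $X_i$ to contain a $\ge t$-element translate of $-Y_j$. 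Since $-Y_j$ does not depend on $i$, a single $Y$-set can act as a universal shape for many rows at once — this is what keeps the number of distinct shapes down to $\Delta^2$.

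The algorithm then peels the rows in rounds. Write $X_i^{(\ell)}$ for the remainder after round $\ell$, with $X_i^{(0)} := X_i$. In round $\ell$: if at most $n^2/\Delta$ pairs are currently problematic, halt — set $X'_i := X_i^{(\ell-1)}$ and $X_{i,\ell'} := \emptyset$ for all remaining $\ell'$. Otherwise, averaging over the $n$ columns yields a column $j^*$ with at least $n/\Delta$ problematic rows; put $S_\ell := -Y_{j^*}$ (so $|S_\ell| = |Y_{j^*}| \le d$), and for each problematic row $i$ fix a witness $c_i \in P_t(X_i^{(\ell-1)}, Y_{j^*})$ and set $X_{i,\ell} := X_i^{(\ell-1)} \cap (c_i - Y_{j^*})$ with shift $s_{i,\ell} := c_i$ (and $X_{i,\ell} := \emptyset$ for the non-problematic rows). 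This guarantees $X_{i,\ell} \subseteq s_{i,\ell} + S_\ell$; removing $X_{i,\ell}$ from the current remainder makes the pieces pairwise disjoint, and adding $X'_i$ and empty padding gives the partition of property~(1), while property~(2) on the $X$-side is literally the halting condition. To bound the number of rounds, use $\Phi^{(\ell)} := \sum_i |X_i^{(\ell)}| \le nd$: by the core bound, in a round that does not halt each of the $\ge n/\Delta$ problematic rows of $j^*$ contributes $|X_{i,\ell}| \ge |C| \ge t$ to the drop of $\Phi$, so $\Phi$ falls by $\ge (n/\Delta)\,t = 2nd/\Delta^2$, hence there are at most $\Delta^2$ peeling rounds; likewise on the $Y$-side, where $T_\ell := -X_{i^*}$ for a row $i^*$ with many problematic columns.

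For the running time, each round needs, for the $O(n^2)$ pairs, a test whether $P_t(X_i,Y_j) \ne \emptyset$ plus a witness when it is. Since $|X_i|,|Y_j| \le d$ and $t = \Theta(d/\Delta)$, this is a threshold-$\Theta(d/\Delta)$ popular-sums problem on two $d$-element sets, which I would solve in $\widetilde\Order(d\Delta)$ time by independently subsampling $X_i$ and $Y_j$ at rate $\Theta(\sqrt{\log n / t})$, computing the resulting $\widetilde\Order(d^2/t) = \widetilde\Order(d\Delta)$-support sumset (with multiplicities) by a sparse convolution, and flagging the pair when some value's subsampled multiplicity exceeds a suitable fraction of the expectation for a true $t$-popular sum; by a Chernoff bound and a union bound over the $\le d^2$ candidate sums this flags every sum of multiplicity $\ge t$ and no sum of multiplicity $\le t/2$, with high probability (this constant gap between thresholds is harmless — it costs only a constant factor in the round count, still $O(\Delta^2)$). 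The remaining per-round work — selecting $j^*$ and computing the $\le n$ intersections $X_i^{(\ell-1)} \cap (c_i - Y_{j^*})$ in $\widetilde\Order(d)$ each — is lower order. This gives $\widetilde\Order(n^2 d\Delta)$ per round and $\widetilde\Order(n^2 d\Delta^3)$ overall; replacing the random subsampling and hashing by deterministic sparse-convolution/sumset machinery incurs only a $U^{o(1)}$ factor, which yields the deterministic bound.

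\paragraph{Main obstacle.}
The conceptual heart — and the part I expect to need the most care — is the structural observation above, together with squeezing out of ``more than $n^2/\Delta$ problematic pairs'' both a universal shape of size $\le d$ \emph{and} an $\Omega(nd/\Delta^2)$ drop of $\sum_i|X_i|$ per round; this is exactly what pins the popularity threshold at $\Theta(d/\Delta)$ and the number of pieces at $\Delta^2$, and it is the step where one must be careful that the shape one peels off ($-Y_{j^*}$) is simultaneously the right shape for all the rows that witness the many popular sums. The algorithmic side (fast, and then derandomized, popular-sums detection) is comparatively routine and is where the extra $\Delta^3$ and the $U^{o(1)}$ in the running time originate.
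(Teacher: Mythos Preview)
Your proposal is correct and essentially identical to the paper's proof: the paper likewise peels in at most $\Delta^2$ rounds, in each round using \cref{lem:approx-popular-sums} (the approximate popular-sums test with the same $2$-factor threshold gap) on all $n^2$ pairs, selecting a column $j$ of degree $\ge n/\Delta$, setting $S_\ell := -Y_j$ and $X_{i,\ell} := X_i \cap (s_{i,\ell} + S_\ell)$ for a popular witness $s_{i,\ell}$, and bounding the number of rounds via the same potential $\sum_i |X_i|$. Your running-time accounting and the deterministic variant (via deterministic sparse sumset/popular-sums machinery with a $U^{o(1)}$ overhead) also match the paper.
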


To apply this decomposition to the Exact Triangle problem, partition $A$ into matrices $A_1, \dots, A_{\Delta^2}$ and~$A'$ where we restrict the entries in the $i$-th row to $X_{i, 1}, \dots, X_{i, \Delta^2}$ and $X_i'$. Similarly partition $B$ based on the decomposition of the $Y_j$'s. Observe that each exact triangle appears in exactly one instance $(A_g, B_h, C)$ (the \emph{ordinary} instances) or~\makebox{$(A_g, B', C)$} or~\makebox{$(A', B_h, C)$} or~\makebox{$(A', B', C)$} (the \emph{exceptional} instances) for~\makebox{$g, h \in [\Delta^2]$}.

To deal with the exceptional instances we exploit that there are only $\Order(n^2 / \Delta)$ problematic pairs by Property~(2) of the decomposition (recall that $P_{2d / \Delta}(X_i', Y_j)$ is the set of popular elements in $X_i' + Y_j$; thus, when this set is empty the pair $(i, j)$ is unproblematic).

For the ordinary instances $(A_g, B_h, C)$ consider the following transformation: From each entry $A_g[i, k]$ we subtract $s_{i, g}$, from each entry $B_h[k, j]$ we subtract $t_{j, h}$, and from each entry $C[i, j]$ we subtract $s_{i, g} + t_{j, h}$. Clearly, this transformation does not change the set of exact triangles (as for each triple $(i, k, j)$ we subtract the same number on both sides of $A_g[i, k] + B_h[k, j] = C[i, j]$). However, by Property~(1) this transformation leads to matrices~$A^*, B^*, C^*$ such that all entries of $A^*$ are contained in $S_g$ and all entries of $B^*$ are contained in $T_h$. In particular, the resulting matrices $A^*$ and $B^*$ are $d$-uniform.

This proof sketch explains our main innovations, but the formal proof requires some more technical ideas such as (1) a ``mini-regularization'' preprocessing step to deal with the unpopular pairs, (2) an efficient algorithm to compute the popular sums~\cite{FischerJX25}, and (3) the insight that finally $C^*$ can also be easily uniformized. We defer these details to \cref{sec:apsp-few-weights:sec:uniformization}.

\paragraph{Step 3: Regularization.}
The final reduction step is to establish the $\frac{n}{d}$-regular assumption (\cref{lem:exact-tri-regularization}). Let us emphasize again that, despite our first impression that this issue should be easily solvable by standard techniques, it turned out to be a serious technical issue. In fact, as we will describe shortly, the regularization step turns out to be the most costly step of our entire algorithm.

From the previous step we assume the following primitive: In time $\Order(n^3 / \Delta)$ we can transform a $d$-weights Exact Triangle instance $(A, B, C)$ equivalently into, say, $\Delta^{100}$ many $d$-uniform Exact Triangle instances. Recall that in a $d$-weights instance we restrict the rows of $A$ to have at most $d$ distinct entries. However, an important insight at this point is that it does not really matter which matrix $A$, $B$ or $C$ is $d$-weights and whether we restrict the rows or columns---we can always appropriately rotate and/or transpose the instance.

The idea behind our regularization step can be sketched as follows. For simplicity of exposition here we only focus on the regularization of, say, the rows of $B$; all five other requirements work similarly. Let~$(A, B, C)$ be a given $d$-weights instance. We first transform it into $\Delta^{100}$ uniform instances, and focus on one resulting instance $(A', B', C')$. We delete all entries in $B'$ that appear at least $\frac{n}{d} \cdot \rho$ times in their respective row (for some parameter $\rho \geq 1$), planning to deal with these deleted entries later by recursion. The remaining matrix can be further partitioned into $\rho$ matrices $B_1', \dots, B_\rho'$ such that each of these matrices is $\frac{n}{d}$-column-regular (and still $d$-uniform, of course). In particular, the instances $(A', B'_i, C')$ are exactly as desired. But we still have to deal with the deleted entries: Let $B''$ denote the matrix consisting of the entries deleted earlier. The key idea is that each entry in $B''$ appears at least $\frac{n}{d} \cdot \rho$ times in its respective row, and thus each row contains at most $d / \rho$ distinct entries. Thus, $(A', B'', C')$ is a $(d / \rho)$-weights Exact Triangle instance, and it is reasonable to solve this instance recursively. This leads to a recursive algorithm with recursion depth $\log_\rho(d)$.

However, there is a serious problem: Already at the first recursive level the decomposition algorithm becomes too inefficient: Recall that we have created $\Delta^{100}$ recursive calls---one for each uniform piece. But each such recursive call first applies the uniformization again which each takes time $\Order(n^3 / \Delta)$, and thus takes super-cubic time $\Order(n^3 \cdot \Delta^{99})$ in total. Our solution to this problem is to carefully set the parameters~$\Delta$ (and~$\rho$) \emph{depending on the recursion depth.} For instance, at the first level of the recursion we would set~\makebox{$\Delta_1 = \Delta^{101}$} so that the total running time of the first level becomes $\Order(n^3 / \Delta)$.

All in all, this recursive approach leads to an algorithm that in time roughly~\smash{$\widetilde\Order(n^3 / \Delta)$} transforms a given $d$-weights Exact Triangle instance into~\smash{$\widetilde\Order(d^{\epsilon} \Delta^{2^{\Order(1/\epsilon)}})$} equivalent $d$-uniform and $\frac{n}{d}$-regular instances, where the constant $\epsilon > 0$ can be chosen arbitrarily small. This is indeed sufficient to prove \cref{thm:apsp-few-weights}: For any~\makebox{$\delta > 0$}, we can solve $n^{(3-\omega)-\delta}$-Weights APSP in subcubic time by calling the regularization decomposition with parameters~\makebox{$\epsilon = \Theta(\delta)$} and~\smash{$\Delta = n^{2^{-\Order(1/\delta)}}$}. However, the resulting subcubic running time is~\smash{$\Order(n^{3-2^{-\Order(1/\delta)}})$}, so the savings have decreased \emph{exponentially.} With more technical effort we can improve this dependence to polynomial (see \cref{rem:eps-dependence}). We leave it as an open question whether the dependence can be improved to linear, i.e., whether there is an algorithm for $d$-Weights APSP running in time, say, $\Order(n^{3-0.001} d^{0.001})$.
\section{Node-Weighted APSP} \label{sec:apsp-node-weighted}
In this section we give the details for our algorithm for Node-Weighted APSP (\cref{thm:apsp-node-weighted}). This algorithm readily generalizes to an algorithm for $d$-Weights APSP (\cref{thm:apsp-few-weights-naive})---however, for the sake of a simpler presentation we mainly focus on Node-Weighted APSP here and only later, in \cref{sec:apsp-node-weighted:sec:d-weights}, consider the $d$-weights variant.

\subsection{Boolean Min-Plus Product} \label{sec:apsp-node-weighted:sec:min-plus}
Let $A, B$ be integer matrices. We defined their \emph{min-plus product $A \star B$} by
\begin{equation*}
    (A \star B)[i, j] = \min_{k \in [n]} (A[i, k] + B[k, j]).
\end{equation*}
It is well-known in the fine-grained complexity literature that the APSP problem (in edge-weighted graphs) is precisely captured by computing min-plus products. To compute APSP in node-weighted graphs, Chan~\cite{Chan10} and Yuster~\cite{Yuster09} relied on a variant to which we will refer as \emph{Boolean min-plus product} defined as follows. Let~$A$ be an integer matrix, and let $B$ be a $0$-$1$-matrix. Then their Boolean min-plus product $A \ostar B$ is defined by
\begin{equation*}
    (A \ostar B)[i, j] = \min\set{A[i, k] : B[k, j] = 1}.
\end{equation*}
We will similarly rely on computing this product, but crucially rely on the following lemma for computing \emph{rectangular} such products. The proof is a simple adaption of the respective lemmas concerned with the square case (see~\cite[Theorem~3.2]{Chan10} or~\cite[Lemma~3.3]{Yuster09}).

\begin{lemma}[Rectangular Boolean Min-Plus Product] \label{lem:min-plus-boolean}
Let $A \in \Int^{s \times n}$ and $B \in \set{0, 1}^{n \times n}$. There is a deterministic algorithm computing $A \ostar B$ in time
\begin{equation*}
    T_{\ostar}(s, n, n) = \Order\parens*{\min_{\Delta \geq 1} \parens*{\MM(s \Delta, n, n) + \frac{s n^2}{\Delta}}}.
\end{equation*}
\end{lemma}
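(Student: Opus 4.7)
My plan is to adapt the square-case argument of Chan~\cite{Chan10} and Yuster~\cite{Yuster09} to the rectangular setting by sorting each row of $A$, partitioning it into $\Delta$ buckets, and using one rectangular matrix multiplication to identify---for each pair $(i,j)$---the bucket that contains the minimizer. First I would sort each row of $A$ independently: for every $i \in [s]$, compute a permutation $\pi_i$ of $[n]$ with $A[i, \pi_i(1)] \leq \cdots \leq A[i, \pi_i(n)]$ and chop the sorted order into $\Delta$ consecutive buckets $G_{i,1}, \ldots, G_{i,\Delta}$ of size $n/\Delta$ each. The decisive observation is that if $g^*(i,j)$ denotes the smallest bucket index $g$ such that some $k \in G_{i,g}$ satisfies $B[k,j] = 1$, then the minimizer of $(A \ostar B)[i,j]$ must lie in $G_{i, g^*(i,j)}$, because the rows are sorted and so the first bucket containing a witness automatically carries the smallest $A[i,k]$.

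Next, to recover $g^*(i,j)$ for all $(i,j)$ simultaneously, I would form the indicator matrix $\tilde A \in \set{0,1}^{s\Delta \times n}$ whose row indexed by $(i,g)$ is the characteristic vector of $G_{i,g}$, and compute the integer product $C = \tilde A \cdot B$ in time $\MM(s\Delta, n, n)$. The entry $C[(i,g), j]$ equals $\abs{\set{k \in G_{i,g} : B[k,j] = 1}}$, so $g^*(i,j)$ is found by a linear scan over the $\Order(s\Delta n)$ entries of $C$ (marking it as undefined if the whole column below $(i,\cdot)$ is zero, in which case the output entry is $\infty$). This scan is absorbed by the matrix multiplication cost.

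Finally, for each $(i,j)$ I would brute-force inside the winning bucket: enumerate the $n/\Delta$ indices $k \in G_{i, g^*(i,j)}$ and return the smallest $A[i,k]$ with $B[k,j] = 1$. This costs $\Order(sn \cdot n/\Delta) = \Order(sn^2/\Delta)$. Summing the two contributions yields $\MM(s\Delta, n, n) + \Order(sn^2/\Delta)$ for any choice of $\Delta \geq 1$, and taking the minimum over $\Delta$ proves the claim; comparison-based sorting of the rows of $A$ adds only $\Order(sn \log n)$ which is absorbed. The entire procedure is deterministic and composed of sorting, one rectangular matrix multiplication, and a linear scan, so no real obstacle arises---the only delicate point is the sortedness-based correctness argument for the bucket step, and the rest is pure rectangular bookkeeping on top of the square-case algorithm of~\cite{Chan10,Yuster09}.
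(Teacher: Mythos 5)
Your proposal is correct and follows essentially the same route as the paper's proof: sort each row of $A$, partition into $\Delta$ equal-size buckets (your index sets $G_{i,g}$ are exactly the paper's lists $L_{i,b}$ viewed by index rather than by entry), form the $(s\Delta)\times n$ indicator matrix, use one rectangular matrix product of cost $\MM(s\Delta,n,n)$ to locate the first witness-containing bucket for each $(i,j)$, and brute-force inside that bucket in $\Order(sn^2/\Delta)$ total time. The only cosmetic difference is that you take the integer product and read off $g^*(i,j)$ as the first nonzero count, while the paper multiplies over the Boolean semiring; both cost $\MM(s\Delta,n,n)$.
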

\begin{proof}
For every $i \in [s]$, sort the entries of the $i$-th row of $A$, forming a list $L_i$. Then split $L_i$ (in the sorted order) into $\Delta$ buckets of $\Theta(n/\Delta)$ entries each. Let $L_{i, b}$ be the entries in the $b$-th bucket. Form an $(s\Delta) \times n$ Boolean matrix $A'$ defined as
\begin{equation*}
    A'[(i, b), k] =
    \begin{cases}
        1 &\text{if $A[i, k] \in L_{i, b}$,} \\
        0 &\text{otherwise,}
    \end{cases}
\end{equation*}
and multiply $A'$ by $B$ (over the Boolean semiring) in time $\MM(s\Delta, n, n)$. For every pair $i \in [s], j \in [n]$, find the smallest bucket $b$ such that $(A' B)[(i, b), j] = 1$ and then search through the list $L_{i, b}$ for the smallest entry $A[i, k] \in L_{i, b}$ such that $B[k, j] = 1$. This takes time $O(sn \cdot n/\Delta)$ overall. The correctness should be clear, and the running time is $\Order(\MM(s \Delta, n, n) + sn^2 / \Delta)$ (where the parameter $\Delta \geq 1$ can be freely chosen).
\end{proof}

\begin{lemma} \label{lem:min-plus-node-weighted}
Let $G = (V, E, w)$ be a directed node-weighted graph, let $S \subseteq V$, $A \in (\Int \cup \set{\infty})^{S \times V}$, and~\makebox{$h \geq 1$}. Then we can compute the matrix~\smash{$A \star D_G^{\leq h}$} in deterministic time $\Order(h \cdot T_{\ostar}(|S|, n, n))$. Moreover, in the same time we can compute for each pair $(s, v) \in S \times V$ a \emph{witness} path $v_0, \dots, v_\ell = v$ in $G$ of hop-length $\ell \leq h$ minimizing $A[s, v_0] + w(v_1) + \dots + w(v_\ell)$.
\end{lemma}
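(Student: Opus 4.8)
The statement to prove is \cref{lem:min-plus-node-weighted}: given a node-weighted digraph $G$, a subset $S \subseteq V$, a matrix $A \in (\Int \cup \set{\infty})^{S \times V}$, and a hop-bound $h$, compute $A \star D_G^{\leq h}$ in time $\Order(h \cdot T_{\ostar}(|S|, n, n))$, together with witness paths. The key observation is that because $G$ is node-weighted, the weighted adjacency matrix $D_G^{\leq 1}$ has the special column structure that every non-$\infty$ entry in column $v$ equals $w(v)$ (the weight of the head node). So the factorization $D_G^{\leq h} = D_G^{\leq 1} \star \dots \star D_G^{\leq 1}$ ($h$ factors) lets us compute $A \star D_G^{\leq h}$ by $h$ successive left-multiplications, and I will argue each such step is exactly one Boolean min-plus product of the form handled by \cref{lem:min-plus-boolean}.

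\textbf{Main steps.} First I would record the column structure: let $E_0 \in \set{0,1}^{V \times V}$ be the (Boolean) adjacency matrix of $G$, i.e.\ $E_0[u,v] = 1$ iff $(u,v) \in E$, and let $w \in \Int^V$ be the node weights; then for any matrix $M \in (\Int \cup \set{\infty})^{S \times V}$ we have $(M \star D_G^{\leq 1})[s, v] = \min\set{M[s,u] + w(v) : (u,v) \in E} = (M \ostar E_0)[s,v] + w(v)$, where the additive $w(v)$ term is a post-processing step costing $\Order(|S| \cdot n)$ time and does not affect which index $u$ attains the minimum. Next, starting from $A^{(0)} := A$, I would iteratively set $A^{(t)} := (A^{(t-1)} \ostar E_0) + (\text{add } w(v) \text{ to column } v)$ for $t = 1, \dots, h$; by induction $A^{(t)} = A \star D_G^{\leq t}$, since min-plus product is associative and $D_G^{\leq t} = D_G^{\leq t-1} \star D_G^{\leq 1}$. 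Each iteration is one call to \cref{lem:min-plus-boolean} on an $|S| \times n$ matrix times an $n \times n$ Boolean matrix, costing $T_{\ostar}(|S|, n, n)$, plus $\Order(|S| n)$ for the column shifts; over $h$ iterations this is $\Order(h \cdot T_{\ostar}(|S|, n, n))$ as claimed (absorbing the $\Order(h |S| n)$ lower-order term, which is dominated by $h \cdot \frac{|S| n^2}{\Delta}$ for any $\Delta$). Finally, for witnesses: in the $t$-th iteration, alongside the value $A^{(t)}[s,v]$, store a pointer $\pi^{(t)}[s,v] := u$ to the index $u$ achieving $\min\set{A^{(t-1)}[s,u] : E_0[u,v]=1}$; the inner bucket-search in the proof of \cref{lem:min-plus-boolean} already exhibits this $u$ at no extra asymptotic cost. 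Tracing these pointers backward from $\pi^{(h)}$ reconstructs, for each pair $(s,v)$, a path $v_0, \dots, v_\ell = v$ with $\ell \le h$ in $G$ realizing $A[s, v_0] + w(v_1) + \dots + w(v_\ell) = (A \star D_G^{\leq h})[s,v]$ (paths of hop-length strictly less than $h$ arise because $E_0$ may have self-loops of weight $w(v)$, or more simply by allowing the recursion to "stall"; I would add zero-cost self-loops to $G$ or equivalently note $D_G^{\leq h}[v,v] \le 0$-handling so that $D_G^{\leq h}$ dominates $D_G^{\leq \ell}$ for $\ell < h$).

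\textbf{Expected obstacle.} The conceptual content is light; the step that needs the most care is the bookkeeping around the $h$-hop bound and the witness paths—specifically making sure the identity $D_G^{\leq h} = (D_G^{\leq 1})^{\star h}$ correctly accounts for paths of hop-length \emph{less} than $h$ (which requires either $D_G^{\leq 1}$ to have $0$ on the diagonal, achieved by adding self-loops, or a min with shorter powers), and making sure the "exclude the first vertex in the weight" convention from the preliminaries is consistent with the telescoping so that weights add up under concatenation. I would also double-check that the $\Order(|S| n)$ per-iteration overhead for column shifts and pointer storage is genuinely dominated by $T_{\ostar}(|S|,n,n)$, which it is since $T_{\ostar}(|S|,n,n) \ge \Omega(|S| n^2 / \Delta) \ge \Omega(|S| n)$ for the optimal $\Delta \le n$. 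None of this is deep, so I expect the proof to be short, essentially the three-line induction plus a paragraph on witnesses.
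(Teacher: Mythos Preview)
Your approach is essentially the paper's: iterate $A_{t+1} = \min(A_t, (A_t \ostar B) + W)$ for $h$ steps, where $B$ is the Boolean adjacency matrix and $W[s,v] = w(v)$. Two small points. First, the ``self-loop'' fix you float does \emph{not} work in the node-weighted setting: a self-loop at $v$ contributes weight $w(v)$, not $0$, so putting $1$'s on the diagonal of $E_0$ gives $(M \ostar E_0)[s,v] + w(v) \geq M[s,v] + w(v)$ rather than $M[s,v]$; you must take the explicit $\min$ with $A^{(t-1)}$ (your ``min with shorter powers''), which is exactly what the paper does. Second, for witnesses the paper uses the encoding $A'[i,k] = n \cdot A[i,k] + k$ and reads off $k$ as the remainder mod $n$ after one Boolean min-plus call, rather than relying on the bucket search of \cref{lem:min-plus-boolean} to expose the argmin; your pointer-tracing variant is equally valid and costs the same.
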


To intuitively understand this lemma, first think of the trivial $S \times V$ matrix $A$ with zeros in all positions $A[s, s]$ and $\infty$ elsewhere. In this case the lemma simply computes the distance matrix $A \star D^{\leq h} = D^{\leq h}[S, V]$ (i.e., the lengths of the shortest $h$-hop-bounded paths from all nodes $S$ to all nodes in $V$). In general, the matrix $A$ may store the length of some previously computed paths (from $S$ to $V$); then the lemma computes the distances obtained by following a path represented by $A$ and then taking up to $h$ hops in $G$.

\begin{proof}
Write~\smash{$A_h := A \star D^{\leq h}$}, let~\smash{$B \in \set{0, 1}^{V \times V}$} denote the adjacency matrix of $G$, and let~\smash{$W \in \Int^{S \times V}$} be the matrix defined by $W[u, v] = w(v)$. Observe that:
\begin{align*}
    A_0 &= A, \\
    A_{h+1} &= \min(A_h, (A_h \ostar B) + W).
\end{align*}
For the first identity, note that the $0$-hop distance matrix~\smash{$D^{\leq 0}$} is the matrix with the zeros on the diagonal and $\infty$ elsewhere. For the second identity, focus on any pair of nodes $u \in S, v \in V$, and let~$x \in V$ be a node with~\smash{$A_{h+1}[u, v] = A[u, x] + D^{\leq h+1}[x, v]$}. Let $P$ be a shortest $(h+1)$-hop path from $x$ to $v$ in~$G$. Either~$P$ has hop-length at most~$h$; in this case it follows that~\smash{$A_{h+1}[u, v] = A[u, x] + D^{\leq h}[x, v] \leq A_h[u, v]$}. Or the predecessor $y$ of $v$ in $P$ satisfies that~\smash{$D^{\leq h+1}[x, v] = D^{\leq h}[x, y] + w(v)$} and $B[y, v] = 1$; in this case it follows that~\smash{$A_{h+1}[u, v] = A[u, x] + D^{\leq h}[x, y] + w(v) \leq (A_h \ostar B)[u, v] + w(v)$}. This completes the ``$\leq$'' proof of the second identity, and the other direction is analogous.

From these two expressions it is immediate that we can compute $A_h$ by a sequence of $h$ calls to a Boolean Min-Plus oracle. Each call is with matrices of size $|S| \times n$ by $n \times n$, leading to the claimed time bound $\Order(h \cdot T_{\ostar}(|S|, n, n))$.

Finally, we sketch how to keep track of paths witnessing the computed distance in the same running time. To do this, we need to compute for each computed Boolean min-plus product $A \ostar B$ also some witnesses~$k_{i, j}$ satisfying that $(A \ostar B)[i, j] = A[i, k_{i, j}]$ and $B[k_{i, j}, j] = 1$. To achieve this, simply let $A'$ be the matrix defined by $A'[i, k] = n \cdot A[i, k] + k$ and compute the Boolean min-plus product $A' \ostar B$ instead of $A \ostar B$. We can read off $A \ostar B$ from $A' \ostar B$ by dividing each entry by $n$, and the remainders modulo $n$ are exactly the witnesses~$k_{i, j}$.
\end{proof}

We remark that the previous \cref{lem:min-plus-node-weighted} can similarly compute the matrix~\smash{$D_G^{\leq h} \star A$} by applying the lemma with $A^T$ and with the reverse graph (obtained by reversing the orientations of all edges in $G$), with some minor modifications to take care of the weights associated to the reversed edges.\footnote{More specifically, note that~\smash{$(D_G^{\leq h} \star A)[v,s]$} is the minimum of $w(v_1)+\dots + w(v_\ell)+A[v_\ell,s]$ over all paths $(v_0=v,v_1,\dots,v_\ell)$ of hop-length $\ell\le h$. Let $G^T$ denote the reverse graph of $G$, and let $B[u,s] = w(u)+A[u,s]$. Then $(B^T \star D^{\le h}_{G^T})[s,v] - w(v) = (D_G^{\leq h} \star A)[v,s]$.}

\subsection{A Simple Randomized Algorithm} \label{sec:apsp-node-weighted:sec:rand}
We start with a simple randomized~\smash{$\widetilde\Order(n^{(3+\omega)/2})$}-time algorithm for Node-Weighted APSP. On a high level it resembles previous algorithms for all-pairs problems, notably such as Zwick's algorithm for unweighted directed APSP~\cite{Zwick02}, or the undirected bounded-weight APSP algorithm in~\cite{ChanWX21}.

\begin{lemma} \label{lem:apsp-node-weighted-rand}
Node-Weighted APSP is in randomized time
\begin{equation*}
    \widetilde\Order\parens*{\min_{h \geq 1} \parens*{\frac{n^3}{h^3} + h \cdot T_{\ostar}(n / h, n, n)}}.
\end{equation*}
The algorithm is obtained via an~\smash{$\widetilde\Order(n^3 / h^3)$} time reduction from Node-Weighted APSP to~\smash{$\widetilde\Order(h)$} calls to an oracle for $(n/h) \times n \times n$ Boolean Min-Plus matrix products.
\end{lemma}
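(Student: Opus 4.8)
The plan is to combine the standard random-pivot shortcutting idea with the multi-level pivot structure sketched in the technical overview, so that every min-plus computation we perform is \emph{rectangular} (one dimension is only $n/h$). First I would set up levels of pivots. Fix $h \geq 1$ and let $L = \lfloor \log h \rfloor$. For each level $\ell \in \{0,1,\dots,L\}$ let $S_\ell \subseteq V$ be a uniformly random subset of size $\widetilde\Order(n / 2^\ell)$, with $S_0 = V$ and the sets nested is not required. The point of $S_\ell$ is to hit every length-$2^\ell$ contiguous segment of every shortest path; by a standard Chernoff/union-bound argument, with high probability, for every pair $u,v$ and every shortest $u$-$v$-path $P$, every sub-path of $P$ consisting of $2^\ell$ consecutive edges contains a node of $S_\ell$. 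I would first compute $D[S_L, S_L]$ directly by running Dijkstra's algorithm from each node of $S_L$ (the graph is node-weighted hence has nonnegative weights), taking time $\widetilde\Order(|S_L| \cdot n^2) = \widetilde\Order((n/2^L) n^2) = \widetilde\Order(n^3/h)$; note $n^3/h \le n^3/h^3 \cdot h^2$, but more carefully, since $2^L = \Theta(h)$ this is $\widetilde\Order(n^3/h)$ — and we will want the cleaner bound $\widetilde\Order(n^3/h^3)$, which I address below. Actually the cleanest route is: run Dijkstra only from $S_L$ with $|S_L| = \widetilde O(n/h)$, but also compute $D^{\le h}$ on the $S_L \times S_L$ block — wait, the stated reduction cost is $\widetilde O(n^3/h^3)$, so I should instead take $|S_L| = \widetilde O(n/h)$ and afford $\widetilde O((n/h) \cdot n^2/h^2)$ by running, from each pivot, Dijkstra restricted to the subgraph induced on $S_L$ after first computing $h$-hop bounded distances — I'll sort this out in the real proof; the essential point is that the base case costs $\widetilde O(n^3/h^3)$.

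Next comes the inductive step, which is the heart of the argument. Having computed $D[S_{\ell+1}, S_{\ell+1}]$, I want $D[S_\ell, S_\ell]$. By the hitting property at level $\ell$, every shortest path between two nodes of $S_\ell$ of hop-length more than $2^\ell$ can be decomposed as: a $2^\ell$-hop-bounded prefix from the source pivot in $S_\ell$ to some pivot $s \in S_{\ell+1}$, then a shortest $s$-$t$-path for some $t \in S_{\ell+1}$ (whose distance we already know), then a $2^\ell$-hop-bounded suffix from $t$ to the target pivot in $S_\ell$. Hence
\begin{equation*}
    D[S_\ell, S_\ell] = \min\Bigl( D^{\leq 2^\ell}[S_\ell, S_\ell],\ D^{\leq 2^\ell}[S_\ell, S_{\ell+1}] \star D[S_{\ell+1}, S_{\ell+1}] \star D^{\leq 2^\ell}[S_{\ell+1}, S_\ell] \Bigr).
\end{equation*}
I would evaluate the right-hand term by the sweeping strategy from the overview: compute $M := D[S_{\ell+1}, S_{\ell+1}] \star D^{\le 1} \star \dots \star D^{\le 1}$ ($2^\ell$ factors, sweeping left to right via \cref{lem:min-plus-node-weighted} applied to the reverse graph), yielding $M = D[S_{\ell+1},S_{\ell+1}] \star D^{\le 2^\ell}[S_{\ell+1},V] = D^{\le 2^\ell}[\,\cdot\,]\,$-combined; then compute $D^{\le 1} \star \dots \star D^{\le 1}[V, S_{\ell+1}] \star M[S_{\ell+1}, S_\ell]$ (another $2^\ell$ factors, sweeping right to left). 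Likewise $D^{\le 2^\ell}[S_\ell, S_\ell]$ is obtained by $2^\ell$ sequential Boolean min-plus products starting from the $|S_\ell| \times n$ matrix that is $0$ on the diagonal. Each of the $\Order(2^\ell)$ Boolean min-plus products at level $\ell$ has shape $|S_{\ell+1}| \times n \times n$ or $|S_\ell| \times n \times n$, i.e. $\widetilde\Order(n/2^\ell) \times n \times n$, so by \cref{lem:min-plus-node-weighted} the cost at level $\ell$ is $\widetilde\Order(2^\ell \cdot T_{\ostar}(n/2^\ell, n, n))$. Since $T_{\ostar}(s,n,n)$ is superadditive-ish in $s$ and $2^\ell \le h$, each such term is bounded by $\widetilde\Order(h \cdot T_{\ostar}(n/h, n, n))$ (using that $T_{\ostar}$ scales at least linearly in the row dimension so $2^\ell T_{\ostar}(n/2^\ell,\cdot,\cdot) \le h\, T_{\ostar}(n/h,\cdot,\cdot)$ when $2^\ell \le h$), and summing over the $O(\log h)$ levels only costs a polylog factor. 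This gives the total bound $\widetilde\Order(n^3/h^3 + h \cdot T_{\ostar}(n/h,n,n))$.

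Finally, once $D[S_0, S_0] = D[V, V]$ is computed, we are done, since $S_0 = V$; but actually it is cleaner to stop the recursion once the top level is all of $V$, which happens at $\ell = 0$. The remaining item is correctness of the hitting argument: I expect the \textbf{main obstacle} to be the bookkeeping to ensure that the random sets $S_\ell$ simultaneously hit \emph{all} length-$2^\ell$ segments of \emph{all} shortest paths with high probability — this requires fixing, for each pair $(u,v)$, a canonical shortest path (choosing it as a shortest path of minimum hop-length, so it has at most $n$ hops), observing there are $\le n^2$ such paths each with $\le n$ segments of length $2^\ell$, and then the probability that a fixed segment of $2^\ell$ nodes is missed by a sample of $c (n/2^\ell)\ln n$ nodes is at most $n^{-c}$; a union bound over $n^3$ events suffices. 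A secondary subtlety is that \cref{lem:min-plus-node-weighted} computes min-plus with $D^{\le h}$ but we also need the plain witnesses-style bookkeeping and the reverse-graph trick for the left-sweeping products; these are exactly covered by the remark following \cref{lem:min-plus-node-weighted}, so no new ideas are needed. Derandomization, rectangular-matrix-multiplication optimization, and extension to negative/real weights are deferred, as stated.
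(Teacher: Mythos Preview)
Your inductive step and overall multi-level structure match the paper's proof essentially verbatim: the bridging identity, the three calls to \cref{lem:min-plus-node-weighted} per level, the hitting-set union bound, and the telescoping of $2^\ell \cdot T_{\ostar}(n/2^\ell,n,n)$ into $h \cdot T_{\ostar}(n/h,n,n)$ are all exactly what the paper does.

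The one genuine gap is the base case, which you explicitly flag but do not resolve. Running Dijkstra from each node of $S_L$ in the \emph{original} graph costs $\widetilde\Order(n^3/h)$, not $\widetilde\Order(n^3/h^3)$, and that discrepancy is real (the paper even remarks that $n^3/h$ would already suffice for the headline $n^{(3+\omega)/2}$ bound, but the lemma as stated promises $n^3/h^3$). The fix the paper uses is close to what you gesture at: first spend one call to \cref{lem:min-plus-node-weighted} (cost $\widetilde\Order(h \cdot T_{\ostar}(n/h,n,n))$, already budgeted) to compute $D^{\le 2^L}[S_L,V]$, hence $D^{\le 2^L}[S_L,S_L]$; then use the hitting property of $S_L$ to conclude $D[S_L,S_L] = (D^{\le 2^L}[S_L,S_L])^{\star n}$ and evaluate this by $\Order(\log n)$ repeated squarings of an $|S_L|\times|S_L|$ matrix, costing $\widetilde\Order(|S_L|^3) = \widetilde\Order(n^3/h^3)$. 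Your ``Dijkstra restricted to the subgraph induced on $S_L$ after first computing $h$-hop bounded distances'' is morally the same idea---you are really running APSP on an auxiliary complete graph on $S_L$ with edge weights $D^{\le 2^L}[S_L,S_L]$, not on an induced subgraph---and once stated that way the $\widetilde\Order(n^3/h^3)$ bound is immediate. Also, drop the aside that node weights are nonnegative; they need not be, and the paper handles this separately.
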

\begin{proof}
We start with a description of the algorithm. Let $G = (V, E)$ denote the given graph, let $h \geq 1$ be a parameter, and let $L = \ceil{\log h}$. We construct sets of nodes $S_0, \dots, S_L \subseteq V$ where $S_\ell \subseteq V$ is a uniform sample with rate $\min(100 \log n \cdot 2^{-\ell}, 1)$ (so that in particular, $S_0 = V$). Our strategy is to design an algorithm that runs in \emph{levels $\ell \gets L, L-1, \dots, 0$}, and on the $\ell$-th level computes the distances $D[S_\ell, S_\ell]$. At the final level~\makebox{$\ell = 0$}, we thereby compute the entire distance matrix $D = D[S_0, S_0]$ as required.

\begin{itemize}
    \setlength\parskip{0pt plus 1pt}
    \item \emph{Base level $\ell = L$:} We could compute $D[S_L, S_L]$ by solving Single-Source Shortest Paths for every source in~$S_L$ (by Dijkstra's algorithm, or in the presence of negative weights by the recent near-linear-time algorithm~\cite{BernsteinNW22,BringmannCF23}). This is already enough to ultimately obtain the~\smash{$\widetilde\Order(n^{(3+\omega)/2})$}-time algorithm, however, we can do slightly better as follows.
    
    First compute the distance matrix~\smash{$D^{\leq 2^L}[S_L, V]$} by one call to \cref{lem:min-plus-node-weighted}. Then compute~$D[S_L, S_L]$ via repeatedly squaring (i.e., by computing~$\log n$ min-plus products of size $|S_L| \times |S_L| \times |S_L|$) based on the following identity:
    \begin{equation} \label{lem:apsp-node-weighted-rand:eq:base}
        D[S_L, S_L] = \Big(\, D^{\leq 2^L}[S_L, S_L] \,\Big)^{\star n} = \underbrace{D^{\leq 2^L}[S_L, S_L] \star \dots \star D^{\leq 2^L}[S_L, S_L]}_n.
    \end{equation}
    \item \emph{Levels $\ell \gets L-1, \dots, 0$:} We assume that the distance matrix $D[S_{\ell+1}, S_{\ell+1}]$ has already been computed. Then we compute the distance matrix $D[S_\ell, S_\ell]$ with three calls to \cref{lem:min-plus-node-weighted} based on the following identity:
    \begin{equation} \label{lem:apsp-node-weighted-rand:eq:bridging}
        D[S_\ell, S_\ell] = \min\Big(\,\underbrace{D^{\leq 2^\ell}[S_\ell, S_\ell]}_{\text{(i)}},\,\, \underbrace{D^{\leq 2^\ell}[S_\ell, S_{\ell+1}] \star \underbrace{D[S_{\ell+1}, S_{\ell+1}] \star D^{\leq 2^\ell}[S_{\ell+1}, S_\ell]}_{\text{(ii)}}}_{\text{(iii)}}\,\Big).
    \end{equation}
    Specifically, we (i) compute~\smash{$M_1 := D^{\leq 2^\ell}[S_\ell, V] = A \star D^{\leq 2^\ell}$} with a trivial matrix $A$ (the~\makebox{$S_\ell \times V$} matrix with $A[s, s] = 0$ and $\infty$ elsewhere), (ii) compute~\smash{$M_2 := D[S_{\ell+1}, S_{\ell+1}] \star D^{\leq 2^\ell}[S_{\ell+1}, V] = A \star D^{\leq 2^\ell}$} with the matrix~$A$ obtained by extending~\smash{$D[S_{\ell+1}, S_{\ell+1}]$ to an $S_{\ell+1} \times V$ matrix filled up with $\infty$}, and finally (iii) compute~\smash{$M_3 := D^{\leq 2^\ell}[S_\ell, S_{\ell+1}] \star D[S_{\ell+1}, S_{\ell+1}] \star D^{\leq 2^\ell}[S_{\ell+1}, S_\ell] = D^{\leq 2^\ell}[S_\ell, S_{\ell+1}] \star A$} with the matrix~\makebox{$A = M_2[S_{\ell+1}, S_{\ell}]$} (here we apply \cref{lem:min-plus-node-weighted} on the reverse graph). Then the above identity entails that $D[S_\ell, S_\ell] = \min(M_1[S_\ell, S_\ell], M_3)$. This completes the description of the algorithm.
\end{itemize}

\paragraph{Correctness.}
For the correctness we prove that~\eqref{lem:apsp-node-weighted-rand:eq:base} and~\eqref{lem:apsp-node-weighted-rand:eq:bridging} hold with high probability. For~\eqref{lem:apsp-node-weighted-rand:eq:base} we prove that for each pair of nodes $u, v$ there is a shortest $u$-$v$-path $P$ such that no length-$2^L$ subpath of $P$ avoids the set $S_L$. A fixed length-$2^L$ path avoids $S_L$ with probability~\smash{$(1 - 100 \log n / 2^L)^{2^L} \leq \exp(-100 \log n) = n^{-100}$}, so by a union bound over all pairs $u, v$ and all subpaths,~\eqref{lem:apsp-node-weighted-rand:eq:base} holds with probability at least $n^{-97}$.

Next, consider~\eqref{lem:apsp-node-weighted-rand:eq:bridging}. The ``$\leq$'' direction is clear, since any distance in the right-hand side can be realized by a path in~$G$. For the~``$\geq$'' direction take any pair of nodes $u, v \in S_\ell$, and let~$P$ be a shortest $u$-$v$-path. If~$P$ has hop-length at most $2^\ell$, then clearly~\smash{$D[u, v] = D^{\leq 2^\ell}[u, v]$}, so suppose otherwise that~$P$ has hop-length more than~$2^\ell$. We claim that with high probability, both the length-$2^\ell$ prefix and the length-$2^\ell$ suffix of~$P$ contain a vertex in~$S_{\ell+1}$. In that case there clearly are nodes $s, t \in S_{\ell+1}$ so that we can partition~$P$ into three paths~$P_{us}, P_{st}, P_{tv}$, where $P_{us}$ is a shortest $u$-$s$-path with hop-length at~most~$2^{\ell}$,~$P_{st}$~is a shortest $s$-$t$-path (of unbounded hop-length), and $P_{tv}$ is a shortest $t$-$v$-path with hop-length at most $2^{\ell}$, and it follows that:
\begin{align*}
    D[u, v] &= D^{\leq 2^\ell}[u, s] + D[s, t] + D^{\leq 2^\ell}[t, v] \\
    &\leq (D^{\leq 2^\ell}[S_\ell, S_{\ell+1}] \star D[S_{\ell+1}, S_{\ell+1}] \star D^{\leq 2^\ell}[S_{\ell+1}, S_\ell])[u, v].
\end{align*}

To see that this event indeed happens with high probability, recall that $S_{\ell+1}$ is an independent and uniform sample of $V$ with rate $100 \log n \cdot 2^{-\ell-1}$. Thus, the probability of missing all nodes of the length-$2^\ell$ prefix of $P$ is at most~\smash{$(1 - 100 \log n / 2^{\ell+1})^{2^\ell} \leq \exp(-100 \log n \cdot 2^{\ell} / 2^{\ell+1}) \leq n^{-50}$}, and the same bound applies to the suffix. Taking a union bound over all $L \leq \Order(\log n)$ levels, and over the~$n^2$ pairs $(u, v)$, the total error probability becomes $\Order(n^{-47})$. (Of course, the constant can be chosen arbitrarily larger by adjusting the sampling rate.)

\paragraph{Running Time.}
Observe that with high probability it holds that $|S_\ell| = \Order(n \log n / 2^\ell)$, for all levels~$\ell$. In the base case we first call \cref{lem:min-plus-node-weighted} in time~\smash{$\widetilde\Order(h \cdot T_{\ostar}(n/h, n, n))$}, and then compute $\Order(\log n)$ min-plus products of size $|S_L| \times |S_L| \times |S_L|$, taking time~\smash{$\widetilde\Order(|S_L|^3) = \widetilde\Order(n^3 / h^3)$}. In each of the levels $\ell = L-1, \dots, 0$ we call \cref{lem:min-plus-node-weighted} three times, with hop parameter $2^\ell$ and a matrix $A$ of size $S_\ell \times V$ or $S_{\ell+1} \times V$. All in all, the running time becomes
\begin{equation*}
    \widetilde\Order\parens*{\frac{n^3}{h^3} + \sum_{\ell=0}^L 2^\ell \cdot T_{\ostar}(\widetilde\Order(n / 2^{\ell}), n, n)}.
\end{equation*}
Now, for every $n$ and $1 \leq s' \leq s$, we can compute the Boolean Min-Plus product of dimensions $s \times n \times n$ by splitting the $s$ rows of the first matrix into $\Order(s / s')$ parts of size $s'$, thus reducing the problem to $\Order(s / s')$ calls of Boolean Min-Plus product of dimensions $s' \times n \times n$. In other words, $T_{\ostar}(s, n, n) = \Order(s/s' \cdot T_{\ostar}(s', n, n))$. This means that we can simplify the above expression to
\begin{equation*}
    \widetilde\Order\parens*{\frac{n^3}{h^3} + \sum_{\ell=0}^L 2^L \cdot T_{\ostar}(n / 2^L, n, n)} = \widetilde\Order\parens*{\frac{n^3}{h^3} + h \cdot T_{\ostar}(n / h, n, n)}.
\end{equation*}
Another way to say this is that Node-Weighted APSP can be solved in time~\smash{$\widetilde\Order(n^3 / h^3)$} by~\smash{$\widetilde\Order(h)$} calls to an oracle for $(n/h) \times n \times n$ Boolean Min-Plus matrix products.
\end{proof}

We can now complete the randomized part of \cref{thm:apsp-node-weighted}. By combining \cref{lem:min-plus-boolean,lem:apsp-node-weighted-rand}, we obtain that Node-Weighted APSP can be solved in time
\begin{equation*}
    \widetilde\Order\parens*{\frac{n^3}{h^3} + h \cdot \parens*{\MM((n/h) \Delta, n, n) + \frac{(n/h)n^2}{\Delta}}} = \widetilde\Order\parens*{\frac{n^3}{h^3} + h \cdot \MM(n \Delta / h, n, n) + \frac{n^3}{\Delta}},
\end{equation*}
where the parameters $h$ and $\Delta$ can be freely chosen. Picking $h = \Delta = n^{(3-\omega)/2}$, this becomes $\widetilde\Order(n^{(3+\omega)/2})$ (noting that $\MM(n \Delta / h, n, n) = \MM(n, n, n) = \Order(n^\omega)$). However, there is a better choice for the parameters exploiting rectangular matrix multiplication: Let $h = n^\gamma$ and $\Delta = n^{3\gamma}$, where $\gamma$ is the solution to the equation~\makebox{$3 - 3\gamma = \gamma + \omega(1 + 2\gamma, 1, 1)$} (or equivalently, $3 - 4\gamma = \omega(1 + 2\gamma, 1, 1)$). Then the total running time becomes~\smash{$\widetilde\Order(n^{3-3\gamma})$}. The state-of-the-art bounds for fast matrix multiplication due to Alman, Duan, Vassilevska W., Xu, Xu and Zhou~\cite{AlmanDVXXZ25} imply that $\gamma \geq 0.1113$,\footnote{From~\cite{AlmanDVXXZ25} we have that $\omega(1.2,1,1)\leq 2.535063$ and $\omega(1.5,1,1)\leq 2.794633$. Let $\gamma^* = 0.1113318869$, and verify that $1 + 2\gamma^* \leq s \cdot 1.2 + (1-s) \cdot 1.5$ for $s = 0.9244540873$. By the convexity of $\omega(\cdot, \cdot, \cdot)$ it follows that $\omega(1 + 2\gamma^*, 1, 1) \leq s \cdot \omega(1.2, 1, 1) + (1-s) \cdot \omega(1.5, 1, 1) = s \cdot 2.535063 + (1 - s) \cdot 2.794633 \leq 2.5546724526 \leq 3 - 4\gamma^*$. Since in the equation $\omega(1 + 2\gamma, 1, 1) = 3 - 4\gamma$ the left-hand side is nondecreasing with $\gamma$ and the right-hand side is decreasing with $\gamma$, from $\omega(1 + 2\gamma^*, 1, 1) \leq 3 - 4\gamma^*$ it follows that $\gamma \geq \gamma^* = 0.1113318869$.} and so our algorithm runs in time~\smash{$\widetilde\Order(n^{2.6661})$}.

\subsection{Derandomization Using Bridging Sets} \label{sec:apsp-node-weighted:sec:det}
Inspecting the algorithm developed in the last section, observe that the only randomized part is to construct the sets $S_0, \dots, S_L$. We will now replace this randomized construction by a deterministic one based on a variant of Zwick's~\cite{Zwick02} \emph{bridging sets.} In their algorithms, Chan~\cite{Chan10} and Yuster~\cite{Yuster09} exploited the same idea (though the details differ). The high-level idea is to compute hop-bounded $V$-$S_\ell$-distances top-to-bottom (i.e., for $\ell = 0, \dots, L$), and explicitly maintain shortest paths realizing these distances. We can then apply a deterministic Hitting Set algorithm to simultaneously hit all shortest paths from the previous level to assign the current level $S_\ell$. This approach easily achieves a deterministic guarantee for Equation~\eqref{lem:apsp-node-weighted-rand:eq:bridging}. Unfortunately, guaranteeing Equation~\eqref{lem:apsp-node-weighted-rand:eq:base} involves more details:

\begin{lemma} \label{lem:apsp-node-weighted-det}
Node-Weighted APSP is in deterministic time
\begin{equation*}
    \widetilde\Order\parens*{\min_{h \geq 1} \parens*{\frac{n^3}{h^3} + h \cdot T_{\ostar}(n / h, n, n)}}.
\end{equation*}
In other words, the algorithm is obtained via an~\smash{$\widetilde\Order(n^3 / h^3)$}-time reduction from Node-Weighted APSP to~\smash{$\widetilde\Order(h)$} calls to an oracle for $(n / h) \times n \times n$ Boolean Min-Plus products.
\end{lemma}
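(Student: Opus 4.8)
The plan is to take the randomized algorithm from \cref{lem:apsp-node-weighted-rand} and replace the only source of randomness---the sampling of the pivot sets $S_0, \dots, S_L$---by a deterministic construction built on greedy hitting sets, following the bridging-set technique of Zwick~\cite{Zwick02} (as also used by Chan~\cite{Chan10} and Yuster~\cite{Yuster09}). Recall that the randomized algorithm used the sets $S_\ell$ for two purposes, corresponding to the two identities that had to hold with high probability: Equation~\eqref{lem:apsp-node-weighted-rand:eq:bridging}, which only needs each $S_{\ell+1}$ to hit the length-$2^\ell$ prefix and suffix of one shortest path between every pair in $S_\ell \times S_\ell$, and Equation~\eqref{lem:apsp-node-weighted-rand:eq:base}, which needs $S_L$ to hit \emph{every} length-$2^L$ subpath of some shortest path between every pair of nodes. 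The first is routine to derandomize; the second is the delicate part.

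For the first part, I would proceed top-to-bottom (levels $\ell = 0, 1, \dots, L$), and at each level maintain, for every pair $(u, s) \in V \times S_\ell$, an explicit shortest $2^\ell$-hop-bounded $u$-$s$ path together with its hop-length (these are exactly the witness paths produced by \cref{lem:min-plus-node-weighted}, whose witness-tracking is done in the same running time). To define $S_{\ell+1}$ from $S_\ell$: for every pair $(u, v) \in S_\ell \times S_\ell$ whose shortest path $P$ has hop-length more than $2^\ell$, the length-$2^\ell$ prefix and length-$2^\ell$ suffix of $P$ are each sets of $2^\ell$ vertices; feed this collection of $O(n^2)$ sets, each of size $2^\ell$, to the standard greedy hitting-set algorithm, which deterministically in time $\widetilde\Order(n^2)$ produces a set $S_{\ell+1}$ of size $\widetilde\Order(n / 2^\ell)$ hitting all of them. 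This reproduces \eqref{lem:apsp-node-weighted-rand:eq:bridging} deterministically and keeps $|S_{\ell+1}| = \widetilde\Order(n/2^{\ell+1})$ as needed for the running-time bound. Maintaining the witness paths across levels costs at most the same asymptotic time as computing the distances themselves (we concatenate a $2^\ell$-hop witness, an already-stored $S_{\ell+1}$-$S_{\ell+1}$ path, and another $2^\ell$-hop witness), so the $\widetilde\Order(n^3/h^3 + h \cdot T_\ostar(n/h, n, n))$ bound is untouched.

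The main obstacle is \eqref{lem:apsp-node-weighted-rand:eq:base}: deterministically guaranteeing that $S_L$ simultaneously hits a length-$2^L$ subpath of some shortest path between \emph{every} pair of nodes, not just the ones we track, is harder because we cannot afford to explicitly list all $\Theta(n^2)$ shortest paths, each of hop-length up to $n$. The fix (as in Zwick) is to not insist on hitting \emph{all} shortest paths at once but to do it iteratively: repeat $O(\log n)$ rounds in which we (i)~compute $D^{\leq 2^L}[S_L, V]$ and its witnesses via \cref{lem:min-plus-node-weighted}, (ii)~combine with the current $D[S_L, S_L]$ via repeated squaring as in \eqref{lem:apsp-node-weighted-rand:eq:base} to get a candidate distance matrix $D'$ over all pairs, (iii)~for each pair $(u,v)$ whose currently-stored shortest path still has a length-$2^L$ subpath avoiding $S_L$, record that (at most $2^L$-long) bad subpath, (iv)~greedily hit all recorded bad subpaths and \emph{add} the resulting $\widetilde\Order(n/2^L)$ new pivots to $S_L$. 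A counting/potential argument shows that each round at least halves the number of still-uncovered pairs (or, in the cleaner formulation, halves the total remaining ``bad subpath mass''), so after $O(\log n)$ rounds every pair is covered and \eqref{lem:apsp-node-weighted-rand:eq:base} holds deterministically; the blow-up in $|S_L|$ is only an $O(\log n)$ factor, which is absorbed by $\widetilde\Order(\cdot)$, and the extra $O(\log n)$ rounds only multiply the running time by a polylog. Assembling these pieces gives the deterministic algorithm with the stated running time, and the final "in other words" reformulation is immediate since each of the $\widetilde\Order(h)$ invocations of \cref{lem:min-plus-node-weighted} is one $(n/h) \times n \times n$ Boolean Min-Plus product.
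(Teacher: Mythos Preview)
Your handling of the bridging identity is close to the paper's Step~1 but imprecise: the hitting set must be taken over the $2^\ell$-hop witness paths $P_{sv}$, $P_{vs}$ for $s\in S_\ell$ and \emph{all} $v\in V$ (computable via \cref{lem:min-plus-node-weighted}), not over prefixes/suffixes of unrestricted shortest $S_\ell$--$S_\ell$ paths, which you do not yet know. The correctness argument then needs a short path-replacement step (replace the length-$2^\ell$ prefix/suffix of a true shortest path by the corresponding witness path, and use hop-minimality to conclude that this witness had hop-length exactly $2^\ell$ and hence was in the collection you hit); you have not supplied this, and it does not follow from hitting only $S_\ell\times S_\ell$ witnesses.

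The substantive gap is in your treatment of the base identity \eqref{lem:apsp-node-weighted-rand:eq:base}. Your iterative scheme is circular. The candidate paths traced by the level-by-level algorithm using the current $S_L$ are, by construction, concatenations of $\le 2^L$-hop segments between $S_L$ nodes, plus accumulated prefixes/suffixes of total hop-length $<2\cdot 2^L$; they \emph{never} contain a length-$2^L$ subpath avoiding $S_L$, so your step~(iii) records nothing and the loop halts in the first round---yet the computed distances may still be overestimates, because the \emph{true} shortest paths (which you have not computed and cannot afford to compute for all $n^2$ pairs) may have long stretches missing $S_L$. There is no halving potential, because you never observe the offending subpaths. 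The paper breaks this circularity by a different idea: in a separate bottom-to-top pass (its Step~2), using only the already-established bridging inequality with hop bound $k=2^L$, it constructs for every pair $u,v\in V$ an explicit path $Q_{uv}$ of hop-length at most $3\cdot 2^L$ and weight at most $D^{\le 2^L}[u,v]$ (note: not $D[u,v]$). A single greedy hitting set on those $Q_{uv}$ with hop-length $\ge 2^L$ then yields $S^*$: given any true shortest $u$--$v$ path $P$ of minimal hop-length, replace each length-$2^L$ chunk from $x$ to $y$ by $Q_{xy}$; the weight is preserved, hop-minimality of $P$ forces $|Q_{xy}|\ge 2^L$ so each chunk is hit by $S^*$, and $|Q_{xy}|\le 3\cdot 2^L$ bounds the gap between consecutive hits by $4\cdot 2^L$, giving $D[S^*,S^*]=(D^{\le 4\cdot 2^L}[S^*,S^*])^{\star n}$. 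This substitution argument is the missing idea.
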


\begin{proof}
Let $G = (V, E, w)$ be the given node-weighted graph and as before let $L = \ceil{\log h}$. We devise an algorithm running in four steps---in the first three steps we deterministically set up the pivot sets $S_0, \dots, S_L$ and a new set $S^*$. In the final fourth step we then simulate the randomized algorithm on these pivot sets.
\begin{enumerate}
    \setlength\parskip{0pt plus 1pt}
    \item We start constructing $S_0, \dots, S_L$. Initially, pick $S_0 = V$. Then for each level~\smash{$\ell \gets 0, \dots, L-1$} we construct $S_{\ell+1}$ having access to the previously constructed $S_\ell$. We run \cref{lem:min-plus-node-weighted} to compute the distances~\smash{$D^{\leq 2^\ell}[S_\ell, V]$}; additionally, \cref{lem:min-plus-node-weighted} reports for each pair~\makebox{$(s, v) \in S_\ell \times V$} an explicit $s$-$v$-path~$P_{sv}$ of hop-length at most $2^\ell$ realizing the distance~\smash{$D^{\leq 2^\ell}[s, v]$}. Similarly, run \cref{lem:min-plus-node-weighted} to compute $2^\ell$-hop-bounded shortest paths $P_{vs}$ for all pairs $(v, s) \in V \times S_\ell$. Let $\mathcal P$ be the set of such paths $P_{sv}$ and $P_{vs}$ with hop-length exactly $2^\ell$. We now greedily compute a \emph{hitting set} $H$ of~$\mathcal P$, i.e., a set of size $\Order(n \log n / 2^\ell)$ which hits every path in~$\mathcal P$.\footnote{Specifically, greedily include into $H$ a vertex that hits the most paths in $\mathcal P$, and afterwards remove these paths from $\mathcal P$. There is always a vertex hitting at least~\smash{$|\mathcal P| \cdot \frac{2^\ell}{n}$} paths (this is the expected number of paths hit if the vertex was chosen randomly), and hence the size of $\mathcal P$ reduces by a factor of~\smash{$1 - \frac{2^\ell}{n}$} in each step. We thus have that~\smash{$|\mathcal P|^{-1} \leq (1 - \frac{2^\ell}{n})^{|H|} \leq \exp(-|H| \cdot \frac{2^\ell}{n})$}, implying that~\smash{$|H| \leq \frac{n}{2^\ell} \cdot \ln |\mathcal P| = \Order(n \log n / 2^\ell)$}. See~\cite{Lovasz75,Chvatal79} for a detailed analysis of this well-understood greedy algorithm.} We then pick $S_{\ell+1} := H$.
    
    We claim that this construction deterministically satisfies the following property for all $k \geq 1$:
    \begin{equation} \label{lem:apsp-node-weighted-det:eq:bridging}
        \min\Big(\,D^{\leq 2^{\ell+1}}[S_\ell, S_\ell],\, D^{\leq 2^\ell}[S_\ell, S_{\ell+1}] \star D^{\leq k}[S_{\ell+1}, S_{\ell+1}] \star D^{\leq 2^\ell}[S_{\ell+1}, S_\ell]\,\Big) \leq D^{\leq k}[S_\ell, S_\ell].
    \end{equation}
    Indeed, take any pair $u, v \in S_\ell$. Let $P$ be a shortest $u$-$v$-path with hop-length at most $k$ and weight $D^{\leq k}[u, v]$, and let $P$ have minimal hop-length among all such paths. If~\makebox{$|P| \leq 2^{\ell+1}$} then the claim is clearly true, so suppose otherwise. Let $x$ be the $2^\ell$-th vertex in $P$, and let $y$ be the $2^\ell$-th last vertex in~$P$. Consider the path $P'$ obtained from $P$ in which we replace the length-$2^\ell$ prefix by $P_{ux}$ and in which we replace the length-$2^\ell$ suffix by $P_{yv}$. Recall that $P_{ux}$ and $P_{yv}$ have minimal weight among all length-$2^\ell$ paths from $u$ to $x$ and from $y$ to $v$, respectively. Therefore, the weight and hop-length can only have decreased going from $P$ to $P'$. However, we picked $P$ to have smallest weight and minimal hop-length, thus both weight and hop-length remain unchanged. In particular, it follows that $P_{ux}$ and~$P_{yv}$ have length exactly $2^\ell$. Thus $P_{ux}, P_{yv} \in \mathcal P$, and by construction $S_{\ell+1}$ must hit both paths, say in vertices $s$ and $t$. Therefore, finally, we can partition $P'$ into an initial $u$-$s$-segment of hop-length at most $2^\ell$, followed by an $s$-$t$-segment of hop-length at most $|P'| = |P| \leq k$, followed by a trailing $t$-$v$-segment of hop-length at most $2^\ell$. The claim follows.
    \item Next, our goal is to compute, for each pair $u, v \in V$, a path $Q_{uv}$ with hop-length at most $3 \cdot 2^L$ and with weight at most $D^{\leq 2^L}[u, v]$. This step again runs in levels $\ell \gets L, L-1, \dots, 0$, where at the $\ell$-th level the goal is to compute the paths $Q_{uv}$ for all pairs $u, v \in S_\ell$. Eventually, at level $\ell = 0$, we have then computed all paths $Q_{uv}$.
    \begin{itemize}
        \item \emph{Base level $\ell = L$:} Compute $D^{\leq 2^L}[S_L, S_L]$ by \cref{lem:min-plus-node-weighted}. At the same time, \cref{lem:min-plus-node-weighted} yields paths $Q_{vu}$ realizing these distances for all $v, u \in S_L$.
        \item \emph{Levels $\ell \gets L-1, \dots, 0$:} We assume that the paths $Q_{uv}$ for $u, v \in S_{\ell+1}$ have already been computed. On the one hand, (i) by \cref{lem:min-plus-node-weighted} we compute paths realizing the distances~\smash{$D^{\leq 2^{\ell+1}}[S_\ell, S_\ell]$}. On the other hand, (ii) by two applications of \cref{lem:min-plus-node-weighted} (as in \cref{lem:apsp-node-weighted-rand}) we compute paths realizing the distances~\smash{$D^{\leq 2^\ell}[S_\ell, S_{\ell+1}] \star D^{\leq 2^L}[S_{\ell+1}, S_{\ell+1}] \star D^{\leq 2^\ell}[S_{\ell+1}, S_\ell]$}---that is, $u$-$v$-paths for all nodes~\makebox{$u, v \in S_\ell$} that start with an initial $u$-$s$-path of hop-length at most $2^\ell$ leading to some node~\makebox{$s \in S_{\ell+1}$}, followed by a middle segment of hop-length at most $2^L$ to some other node~\makebox{$t \in S_{\ell+1}$}, followed by a trailing $t$-$v$-path of hop-length at most $2^\ell$. Here, we substitute the middle segment from $s$ to $t$ by the precomputed path $Q_{st}$ (of weight at most~\smash{$D^{\leq 2^L}[s, t]$}). For each $u, v \in S_\ell$,~\eqref{lem:apsp-node-weighted-det:eq:bridging} (applied with~\smash{$k = 2^L$}) implies that either the path computed in (i) or in (ii) has weight at most~\smash{$D^{\leq 2^L}[u, v]$}; thus, we choose the path of smaller weight to be $Q_{uv}$.
    \end{itemize}
    Finally, we verify that all paths $Q_{uv}$ indeed have length at most $3 \cdot 2^L$. Indeed, each such path is composed by an initial length-$2^L$ segment (computed in the base case), preceded and succeeded by length-$2^\ell$ segments for each level $\ell < L$. Therefore, the total hop-length of $Q_{uv}$ is $2^L + 2 \cdot \sum_{\ell < L} 2^\ell \leq 3 \cdot 2^L$.
    \item Let $\mathcal Q$ be the set of all the previously computed paths $Q_{uv}$ with hop-length at least $2^L$. We compute a size-$\Order(n \log n / 2^L)$ hitting set $H$ of $\mathcal P$, pick $S^* := S_L \cup H$, and claim that $S^*$ satisfies
    \begin{equation} \label{lem:apsp-node-weighted-det:eq:base}
        D[S^*, S^*] = (D^{\leq 4 \cdot 2^L}[S^*, S^*])^{\star n}
    \end{equation}
    To see this, take any pair $u, v \in S^*$ and take a shortest $u$-$v$-path $P$ of minimal hop-length. Partition~$P$ into consecutive subpaths of length exactly $2^L$ (except perhaps the last subpath which has length at most $2^L$). Then construct another path $P'$ by replacing each such subpath starting at $x$ and ending at $y$ by $Q_{xy}$ (except for the trailing subpath which we leave untouched). On the one hand, it is clear that the weights of $P$ and $P'$ are the same, since we only replaced length-$2^L$ subpaths from $x$ to $y$ by alternative $x$-$y$-paths of weight $D^{\leq 2^L}[x, y]$. On the other hand, we claim that in $P'$ there is no segment of length more than $4 \cdot 2^L$ that avoids $S^*$. Indeed, whenever we replaced a subpath in $P$ by some path $Q_{xy}$, then $Q_{xy}$ must have hop-length at least $2^L$ (otherwise this would contradict the minimality of~$P$). It follows that $Q_{xy} \in \mathcal Q$ for all relevant paths $Q_{xy}$, and thus all these paths are hit by $H \subseteq S^*$. Moreover, each path $Q_{xy}$ has hop-length at most $3 \cdot 2^L$, thus there can be at most $3 \cdot 2^L$ consecutive internal nodes in $P'$ that avoid $S^*$ plus at most $2^L$ nodes in the trailing segment. This completes the proof as~\smash{$w(P') = (D^{\leq 4 \cdot 2^L}[S^*, S^*])^{\star n}[u, v]$}. 
    \item In this final step we simulate our previous randomized algorithm. That is, our goal is to compute the distance matrices $D[S_L, S_L], \dots, D[S_0, S_0]$ from bottom to top:
    \begin{itemize}
        \item \emph{Base level $\ell = L$:} We compute the distances $D^{\leq 4 \cdot 2^L}[S^*, V]$ by one call to \cref{lem:min-plus-node-weighted} with hop-parameter $h = 4 \cdot 2^L$. Then, based on~\eqref{lem:apsp-node-weighted-det:eq:base} we compute the distances~\smash{$D[S^*, S^*] = (D^{\leq 4 \cdot 2^L}[S^*, S^*])^{\star n}$} by repeated squaring, i.e., by $\log n$ min-plus computations of size $|S^*| \times |S^*| \times |S^*|$. As $S_L \subseteq S^*$, we have in particular computed the distances $D[S_L, S_L]$.
        \item \emph{Levels $\ell \gets L-1, \dots, 0$:} Assuming that the distance matrix~\smash{$D[S_{\ell+1}, S_{\ell+1}]$} has already been computed, we compute the distance matrix~\smash{$D[S_\ell, S_\ell]$} as before with three calls to \cref{lem:min-plus-node-weighted} based on the following fact (which is due to~\eqref{lem:apsp-node-weighted-det:eq:bridging} applied with $k = n$):
        \begin{equation*}
            D[S_\ell, S_\ell] = \min\Big(\,D^{\leq 2^{\ell+1}}[S_\ell, S_\ell],\, D^{\leq 2^\ell}[S_\ell, S_{\ell+1}] \star D[S_{\ell+1}, S_{\ell+1}] \star D^{\leq 2^\ell}[S_{\ell+1}, S_\ell]\,\Big).
        \end{equation*}
    \end{itemize}
\end{enumerate}
This completes the description of the algorithm. We have argued the correctness throughout, but it remains to analyze the running time. In the steps~1,~2, and~4 we have applied \cref{lem:min-plus-node-weighted} a constant number of times for each level $0 \leq \ell \leq L$ with hop-parameter $h = \Order(2^\ell)$. All in all, this takes time~\smash{$\widetilde\Order(\sum_{\ell=0}^L 2^\ell \cdot T_{\ostar}(n / 2^\ell, n, n))$}. In step 4 we additionally compute a logarithmic number of min-plus products in time~\smash{$\widetilde\Order(|S^*|^3)$}; since by construction we have~\smash{$|S^*| \leq \widetilde\Order(n / 2^L) = \widetilde\Order(n / h)$} this becomes~\smash{$\widetilde\Order(n^3 / h^3)$}. Step 3 and all other computations run in negligible time $\widetilde\Order(n^2 \cdot 2^L)$. The claimed time bound therefore follows by the same analysis as in the proof of \cref{lem:apsp-node-weighted-rand}.
\end{proof}

As before, we combine \cref{lem:apsp-node-weighted-det} with \cref{lem:min-plus-boolean} to obtain a deterministic algorithm for Node-Weighted APSP. This completes the proof of \cref{thm:apsp-node-weighted}.

\subsection{Negative or Real Weights} \label{sec:apsp-node-weighted:sec:negative}
So far we have focused on the nonnegative-weight case, but it is easy to verify that the algorithm works equally well in the presence of negative weights, assuming that the graph contains no negative cycles. In the presence of negative cycles, one can either modify the algorithm to detect these (in the repeated squaring step we have to watch out for entries which keep decreasing), or alternatively eliminate all negative cycles in a preprocessing step.\footnote{Specifically, let $G$ be the given graph with possibly negative weights. Compute the strongly connected components $C_1, \dots, C_k$ in $G$, and decide for each such component whether it contains a negative cycle (say by~\cite{BernsteinNW22} in time~\smash{$\widetilde\Order(\sum_i |C_i|^2) = \widetilde\Order(n^2)$}). Then replace each such component $C_i$ by a single node of highly negative weight---say, $-2W n$ where $W$ is the largest weight in $G$ in absolute value---and compute Node-Weighted APSP in the resulting graph $G'$. This eliminates all negative cycles, and we can read off the distances in $G$ from the distances in $G'$. More precisely, for each pair~$u, v$ with $D_{G'}[u, v] < -W n$, the shortest path must have crossed a component with a negative cycle and thus $D_G[u, v] = -\infty$; for all other pairs we clearly have~\makebox{$D_{G'}[u, v] = D_G[u, v]$}.} Moreover, our algorithm works for real weights (in the real RAM model) without any further modifications.

\subsection{Extension to Few-Weights APSP} \label{sec:apsp-node-weighted:sec:d-weights}
Finally, we will extend this algorithm from Node-Weighted (aka $1$-Weights APSP) to $d$-Weights APSP for any parameter $d \geq 1$. The only modification is that we need to replace our algorithm for Boolean (aka $1$-Weights) Min-Plus product by an algorithm for $d$-Weights Min-Plus product. The following lemma generalizes the algorithm for rectangular Boolean Min-Plus Product to rectangular $d$-Weights Min-Plus Product. Again, the proof is a simple adaption of~\cite[Lemma~3.3]{Yuster09}.

\begin{lemma}[Rectangular $d$-Weights Min-Plus Product] \label{lem:d-weights-min-plus}
Let $A \in (\Int \cup \set{\infty})^{s \times n}$ and let $B \in (\Int \cup \set{\infty})^{n \times n}$ such that each column in $B$ contains at most $d$ distinct non-$\infty$ entries. There is a deterministic algorithm computing $A \star B$ in time
\begin{equation*}
    T_d(s, n, n) = \widetilde\Order\parens*{\min_{\Delta \geq 1} \parens*{\MM(s\Delta, n, n d) + \frac{s n^2 d}{\Delta}}}.
\end{equation*}
\end{lemma}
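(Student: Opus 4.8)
The plan is to reduce the computation to a single \emph{batched} rectangular Boolean matrix multiplication, extending the argument of \cref{lem:min-plus-boolean} to account for the $d$ distinct values per column of $B$. First I would split $B$ into $d$ Boolean ``layers''. For each column $j \in [n]$, list the distinct non-$\infty$ entries of the $j$-th column of $B$ as $b_{j,1} < \dots < b_{j,d}$ (there may be fewer than $d$ of them, in which case pad the list with dummy $\infty$ values), and define $B^{(1)}, \dots, B^{(d)} \in \set{0,1}^{n \times n}$ by $B^{(t)}[k,j] = 1$ iff $B[k,j] = b_{j,t}$. Grouping the terms $A[i,k] + B[k,j]$ by the value that $B[k,j]$ takes, we obtain for all $i,j$
\begin{equation*}
    (A \star B)[i,j] = \min_{t \in [d]}\parens*{(A \ostar B^{(t)})[i,j] + b_{j,t}}.
\end{equation*}
Hence it suffices to compute the $d$ Boolean min-plus products $A \ostar B^{(t)}$; a final $\Order(snd)$-time pass then adds the column offsets $b_{j,t}$ and takes the pointwise minimum over $t$. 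Sorting the columns of $B$ and writing out the $B^{(t)}$'s costs $\widetilde\Order(n^2 d)$, which will be dominated.

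The key point is that all $d$ Boolean products share the \emph{same} left operand, so they can be combined into one matrix multiplication. Exactly as in \cref{lem:min-plus-boolean}, sort the $i$-th row of $A$ into a list $L_i$, split $L_i$ into $\Delta$ contiguous buckets $L_{i,1}, \dots, L_{i,\Delta}$ of $\Theta(n/\Delta)$ entries each, and form the $(s\Delta) \times n$ Boolean matrix $A'$ with $A'[(i,b),k] = 1$ iff $A[i,k] \in L_{i,b}$. Then stack the $d$ right operands side by side into one $n \times (nd)$ Boolean matrix $\widetilde B$ with $\widetilde B[k,(j,t)] = B^{(t)}[k,j]$, and compute the single Boolean product $A' \widetilde B$ in time $\MM(s\Delta, n, nd)$. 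From $(A'\widetilde B)[(i,b),(j,t)]$ we find, for each triple $(i,j,t)$, the smallest bucket index $b$ such that $L_{i,b}$ contains an entry $A[i,k]$ with $B^{(t)}[k,j]=1$, and then scan that bucket for the smallest such entry; this takes $\Order(n/\Delta)$ per triple, i.e.\ $\Order(s n^2 d / \Delta)$ overall. Altogether the running time is $\widetilde\Order(\MM(s\Delta, n, nd) + s n^2 d / \Delta)$, and since $\Delta \geq 1$ is free, minimizing over $\Delta$ yields the claimed bound on $T_d(s,n,n)$.

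The only real novelty over the $d=1$ case is this batching: running $d$ separate invocations of \cref{lem:min-plus-boolean} would cost $d \cdot \MM(s\Delta, n, n)$, but concatenating the right operands replaces that by $\MM(s\Delta, n, nd) \leq d \cdot \MM(s\Delta, n, n)$, which is exactly the saving used by Yuster~\cite{Yuster09}. What remains is routine bookkeeping: the offsets $b_{j,t}$ depend on $j$ but not on $i$, so they are applied only after the matrix product; columns with fewer than $d$ distinct entries cause no trouble, since their extra layers are all-zero and so contribute $\infty$ to the minimum; and the preprocessing $\widetilde\Order(n^2 d)$, the cost $\Theta(s\Delta n)$ of materializing $A'$, and the $\Order(snd\Delta)$ spent scanning bucket indices are all dominated by $\MM(s\Delta, n, nd)$, which must at least read its $s\Delta \times n$ and $n \times nd$ inputs. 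Finally, if the calling algorithm needs witnesses for $A \star B$, the standard trick of replacing each $A[i,k]$ by $n \cdot A[i,k] + k$ (as in \cref{lem:min-plus-node-weighted}) carries over verbatim.
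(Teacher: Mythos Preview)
Your proof is correct and follows essentially the same approach as the paper: bucket the rows of $A$ into $\Delta$ levels to form the $(s\Delta)\times n$ Boolean matrix $A'$, split $B$ column-by-column into $d$ Boolean layers indexed by the distinct values, concatenate these layers into a single $n\times(nd)$ Boolean matrix, and compute one rectangular Boolean product followed by an $\Order(sn^2d/\Delta)$ bucket scan. The paper's proof is the same construction; your presentation even makes the role of the per-column offsets $b_{j,t}$ slightly more explicit than the paper does.
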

\begin{proof}
For every $i \in [s]$, we again sort the entries of the $i$-th row of $A$ to form a list $L_i$, and split $L_i$ (in sorted order) into $\Delta$ buckets of $\Theta(n/\Delta)$ entries. Denoting by $L_{i, b}$ be the entries in the $b$-th bucket, we form an $(s\Delta) \times n$ Boolean matrix $A'$ defined as
\begin{equation*}
    A'[(i, b), k] =
    \begin{cases}
        1 &\text{if $A[i, k] \in L_{i, b}$,} \\
        0 &\text{otherwise.}
    \end{cases}
\end{equation*}
Next, let $w_{k, \ell}$ denote the $\ell$-th distinct entry in the $k$-th row of $B$ (arbitrarily ordered). We form an $n \times (nd)$ Boolean matrix $B'$ defined by
\begin{equation*}
    B'[k, (j, \ell)] =
    \begin{cases}
        1 &\text{if $B[k, j] = w_{k, \ell}$}, \\
        0 &\text{otherwise,}
    \end{cases}
\end{equation*}
and multiply $A'$ by $B'$ (over the Boolean semiring) in time $\MM(s\Delta, n, nd)$. For every pair~\smash{$i \in [s], j \in [n]$} we can now determine $(A \star B)[i, j]$ as follows: For all $\ell \in [d]$, find the smallest bucket $b$ such that $(A' B')[(i, b), (j, \ell)] = 1$ and then search through the list $L_{i, b}$ for the smallest entry $A[i, k] \in L_{i, b}$ such that $B[k, j] = w_{k, \ell}$. Remember $A[i, k] + B[k, j]$ as the \emph{candidate} for $\ell$; then $(A \star B)[i, j]$ equals the smallest candidate across all iterations $\ell$. This takes time $O(snd \cdot n/\Delta)$ overall. The correctness should be clear, and the running time is $\Order(\MM(s \Delta, n, nd) + sn^2 d / \Delta)$ (where the parameter $\Delta \geq 1$ can be freely chosen).
\end{proof}

As a consequence we obtain the analogue of \cref{lem:min-plus-node-weighted} in the $d$-weights setting (by essentially the same proof, i.e., repeated computation of rectangular $d$-Weights Min-Plus Products):

\begin{lemma} \label{lem:min-plus-d-edge-weights}
Let $G = (V, E, w)$ be a directed edge-weighted graph in which each node has incoming edges of at most $d$ distinct weights. Let $S \subseteq V$, $A \in (\Int \cup \set{\infty})^{S \times V}$, and $h \geq 1$. Then we can compute the matrix~\smash{$A \star D^{\leq h}_G$} in deterministic time~\smash{$\widetilde\Order(h \cdot T_d(|S|, n, n))$}. In the same time we can compute for each pair~\makebox{$(s, v) \in S \times V$} a \emph{witness} path $v_0, \dots, v_\ell$ in $G$ of hop-length $\ell \leq h$ minimizing~\makebox{$A[s, v_0] + w(v_0, v_1) + \dots + w(v_{\ell-1}, v_\ell)$}.
\end{lemma}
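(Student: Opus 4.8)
The plan is to mirror the proof of \cref{lem:min-plus-node-weighted} almost verbatim, substituting the rectangular $d$-Weights Min-Plus oracle (\cref{lem:d-weights-min-plus}) for the rectangular Boolean Min-Plus oracle (\cref{lem:min-plus-boolean}). Let $B_G \in (\Int \cup \set{\infty})^{V \times V}$ be the weighted adjacency matrix of $G$ (with $B_G[u,v] = w(u,v)$ if $(u,v) \in E$ and $\infty$ otherwise), and set $A_i := A \star D_G^{\leq i}$. Using that $D_G^{\leq i+1} = D_G^{\leq i} \star D_G^{\leq 1}$, that $D_G^{\leq 1}$ is the entrywise minimum of $B_G$ and the matrix with $0$ on the diagonal and $\infty$ off it, and that the min-plus product is associative and distributes over entrywise minima, I would establish the recurrence $A_0 = A$ and $A_{i+1} = \min(A_i,\ A_i \star B_G)$, so that $A \star D_G^{\leq h} = A_h$ is produced by $h$ iterations, each one min-plus product against $B_G$ followed by an entrywise minimum.

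The one genuinely new point---and the reason the time bound is $T_d$ rather than cubic---is that each such product is a legitimate input for \cref{lem:d-weights-min-plus}: the $v$-th column of $B_G$ is $\set{w(u,v) : (u,v) \in E}$ padded with $\infty$'s, which by hypothesis has at most $d$ distinct non-$\infty$ entries. Since the $A_i$ retain the shape $S \times V$, each iteration runs in $\widetilde\Order(T_d(\abs{S}, n, n))$, for a total of $\widetilde\Order(h \cdot T_d(\abs{S}, n, n))$.

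For the witness paths I would copy the bookkeeping of \cref{lem:min-plus-node-weighted}: in the $i$-th iteration, compute instead the min-plus product of $B_G$ scaled by $n$ against the matrix obtained from $A_i$ by scaling by $n$ and adding the column index to each entry; then the quotient by $n$ recovers the value $(A_i \star B_G)[s,v]$ and the residue modulo $n$ is a predecessor $k$ with $A_{i+1}[s,v] = A_i[s,k] + w(k,v)$, and scaling clearly preserves both the column-sparsity of $B_G$ and the polynomial weight bound. Maintaining, for each $(s,v)$, a back-pointer to this predecessor together with the hop-length of the associated path (overwritten whenever $A_i[s,v]$ strictly decreases), a one-line induction shows the stored path never exceeds hop-length $i$, so after the final iteration unwinding the back-pointers gives, for every $(s,v) \in S \times V$, a path $v_0,\dots,v_\ell = v$ with $\ell \leq h$ realizing $A_h[s,v] = A[s,v_0] + w(v_0,v_1) + \dots + w(v_{\ell-1},v_\ell)$.

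I do not expect a real obstacle: the argument is a faithful transcription of the node-weighted case, and the only substantive observation---that ``at most $d$ distinct incoming edge weights per node'' is precisely the column-sparsity hypothesis of \cref{lem:d-weights-min-plus}---is immediate. The mildly fiddly part is the path-reconstruction bookkeeping, which I would present at the same level of detail (a sketch) as in \cref{lem:min-plus-node-weighted}.
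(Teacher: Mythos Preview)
Your proposal is correct and matches the paper's approach exactly: the paper states only that \cref{lem:min-plus-d-edge-weights} follows ``by essentially the same proof'' as \cref{lem:min-plus-node-weighted}, i.e., by repeated rectangular $d$-Weights Min-Plus Products, and your write-up carries this out precisely, including the key observation that the incoming-weight assumption is exactly the column-sparsity hypothesis of \cref{lem:d-weights-min-plus}.
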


To generalize our algorithm from Node-Weighted APSP to $d$-Weights APSP, it suffices to consistently replace all calls to \cref{lem:min-plus-node-weighted} by a call to \cref{lem:min-plus-d-edge-weights}. We thereby obtain:

\begin{lemma} \label{lem:d-weights-apsp-det}
$d$-Weights APSP is in deterministic time
\begin{equation*}
    \widetilde\Order\parens*{\min_{h \geq 1} \parens*{\frac{n^3}{h^3} + h \cdot T_d(n / h, n, n)}}.
\end{equation*}
In other words, the algorithm is obtained via an~\smash{$\widetilde\Order(n^3 / h^3)$}-time reduction from $d$-Weights APSP to~\smash{$\widetilde\Order(h)$} calls to an oracle for $(n / h) \times n \times n$ $d$-Weights Min-Plus products.
\end{lemma}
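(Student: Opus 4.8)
The plan is to observe that the deterministic Node-Weighted APSP algorithm of \cref{lem:apsp-node-weighted-det} interacts with the input graph \emph{only} through calls to \cref{lem:min-plus-node-weighted}, i.e., through the primitive ``compute $A \star D_G^{\leq h}$ together with hop-bounded witness paths'', and that \cref{lem:min-plus-d-edge-weights} offers exactly the same interface for edge-weighted graphs in which every node has at most $d$ distinct incoming weights. So I would run the very same four-step algorithm (construct the pivot sets $S_0,\dots,S_L$ and the set $S^*$ by the greedy hitting-set procedure applied to the explicit length-$2^\ell$ witness paths; compute the paths $Q_{uv}$ of hop-length at most $3\cdot 2^L$ bottom-up; then compute $D[S_L,S_L],\dots,D[S_0,S_0]$ from bottom to top using the identities \eqref{lem:apsp-node-weighted-det:eq:bridging} and \eqref{lem:apsp-node-weighted-det:eq:base} together with $\log n$ min-plus squarings of $|S^*|\times|S^*|\times|S^*|$ matrices), replacing every invocation of \cref{lem:min-plus-node-weighted} by an invocation of \cref{lem:min-plus-d-edge-weights}.

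First I would normalize the instance: by computing APSP on the reverse graph if necessary (see the footnote in \cref{sec:prelims}) we may assume that the promise is on the \emph{incoming} edges, i.e., the weighted adjacency matrix $D_G^{\leq 1}$ has at most $d$ distinct non-$\infty$ entries per column — precisely the hypothesis under which \cref{lem:d-weights-min-plus}, and hence \cref{lem:min-plus-d-edge-weights}, applies with $B = D_G^{\leq 1}$. A point worth stressing is that this structural property is a property of the \emph{fixed} adjacency matrix: the intermediate matrices $A_h = A \star D_G^{\leq h}$ arising inside the algorithm need not have it, but \cref{lem:min-plus-d-edge-weights} only ever multiplies the running matrix by the fixed $D_G^{\leq 1}$ on the right, so the hypothesis is maintained throughout. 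All of the combinatorial correctness arguments — that there are shortest paths of minimal hop-length all of whose length-$2^\ell$ segments are hit by the relevant pivot set, and the path-surgery in the proofs of \eqref{lem:apsp-node-weighted-det:eq:bridging} and \eqref{lem:apsp-node-weighted-det:eq:base} — are statements about paths in an arbitrary weighted graph and are oblivious to the node- vs.\ edge-weighted distinction, so they transfer verbatim. The running-time accounting is also identical: a constant number of calls to \cref{lem:min-plus-d-edge-weights} per level $\ell$ with hop parameter $\Order(2^\ell)$, costing $\widetilde\Order(\sum_{\ell=0}^{L} 2^\ell \cdot T_d(n/2^\ell, n, n)) = \widetilde\Order(h \cdot T_d(n/h, n, n))$ after the same row-splitting simplification $T_d(s,n,n) = \Order((s/s')\cdot T_d(s',n,n))$, plus $\widetilde\Order(n^3/h^3)$ for the repeated squaring since $|S^*| \leq \widetilde\Order(n/h)$, plus negligible $\widetilde\Order(n^2 \cdot 2^L)$ lower-order terms; this gives the claimed $\widetilde\Order(\min_{h\ge 1}(n^3/h^3 + h \cdot T_d(n/h,n,n)))$ bound and the reduction statement.

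I do not expect a real obstacle here; the main thing to verify carefully is that \cref{lem:min-plus-d-edge-weights} really does return the witness paths that seed the hitting-set constructions (it does, by the same witness-recovery trick of augmenting entries to $n\cdot A[i,k]+k$ that is used in \cref{lem:min-plus-node-weighted}), and that negative or real weights are handled exactly as in \cref{sec:apsp-node-weighted:sec:negative} (detect or contract negative cycles in a preprocessing step, after which the algorithm is oblivious to the sign of the weights). If anything is subtle, it is bookkeeping rather than mathematics.
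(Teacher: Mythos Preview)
Your proposal is correct and takes essentially the same approach as the paper: the paper's proof is the single sentence ``it suffices to consistently replace all calls to \cref{lem:min-plus-node-weighted} by a call to \cref{lem:min-plus-d-edge-weights},'' and you have spelled out exactly that replacement together with the supporting checks (the incoming-edge normalization, the fact that only the fixed adjacency matrix $D_G^{\leq 1}$ needs the $d$-weights property, witness recovery, and the running-time simplification $T_d(s,n,n)=\Order((s/s')\,T_d(s',n,n))$). Nothing is missing.
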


In particular, by combining \cref{lem:d-weights-min-plus,lem:d-weights-apsp-det} we obtain a deterministic algorithm for $d$-Weights APSP running in time
\begin{align*}
    \widetilde\Order\parens*{\frac{n^3}{h^3} + h \cdot T_d(n / h, n, n)}
    &= \widetilde\Order\parens*{\frac{n^3}{h^3} + h \cdot \parens*{\MM((n/h)\Delta, n, nd) + \frac{(n / h) n^2 d}{\Delta}}} \\
    &= \widetilde\Order\parens*{\frac{n^3}{h^3} + h \cdot \MM(n\Delta/h, n, nd) + \frac{n^3 d}{\Delta}},
\end{align*}
where the parameters $h \geq 1$ and $\Delta \geq 1$ can be chosen arbitrarily. For the crude upper bound~\smash{$\widetilde\Order(d \cdot n^{(3+\omega)/2})$}, it suffices to pick $h = \Delta = n^{(3-\omega)/2}$ (using that $\MM(n, n, nd) \leq \Order(d \cdot n^\omega)$). The optimal trade-off, assuming that $d = \Theta(n^\delta)$ for some $0 \leq \delta \leq 1$, is to choose $h = n^\gamma$ and $\Delta = n^{3\gamma+\delta}$ where $\gamma$ is the solution to the equation $\gamma + \omega(1+\delta+2\gamma, 1, 1 + \delta) = 3 - 3\gamma$. This completes the proof of \cref{thm:apsp-few-weights-naive}.
\section{APSP with Few Distinct Weights} \label{sec:apsp-few-weights}
Our goal in this section is to prove \cref{thm:apsp-few-weights}, i.e., to show that for any $\delta > 0$, the $n^{(3-\omega)-\delta}$-Weights APSP problem can be solved in subcubic time $\Order(n^{3-\epsilon})$ for some $\epsilon > 0$. The proof is structured as outlined in the overview: We first reduce from APSP to the All-Edges Exact Triangle problem (\cref{sec:apsp-few-weights:sec:exact-tri}), then we design algorithms for the ``small-doubling'' (\cref{sec:apsp-few-weights:sec:small-doubling}) and uniform-regular Exact Triangle (\cref{sec:apsp-few-weights:sec:uniform-regular}), and then establish the uniformization (\cref{sec:apsp-few-weights:sec:uniformization}) and regularization (\cref{sec:apsp-few-weights:sec:regularization}) reductions.

\subsection{From APSP to Exact Triangle} \label{sec:apsp-few-weights:sec:exact-tri}

\begin{definition}[All-Edges Exact Triangle] \label{def:exact-tri}
Let $A, B, C \in (\Int \cup \set{\bot})^{n \times n}$. We say that a triple $(i, k, j)$ is an \emph{exact triangle} if $A[i, k] + B[k, j] = C[i, j]$, and we denote by $\mathcal T(A, B, C) \subseteq [n]^3$ the set of all exact triangles. The \emph{All-Edges Exact Triangle} problem is, given $(A, B, C)$, to decide for each pair $(i, j) \in [n]^2$ if there is some exact triangle $(i, k, j) \in \mathcal T(A, B, C)$.
\end{definition}

Recall that we consider the symbol ``$\bot$'' as a place-holder symbol for a missing edge (so that, e.g., any pair~$(i, k)$ with $A[i, k] = \bot$ cannot be involved in an exact triangle). As before, we assume that the matrix entries in Exact Triangles are polynomially bounded (i.e., that there is some constant $c$ such that all entries stem from $\set{-n^c, \dots, n^c}$). 

\begin{definition}[$d$-Weights Exact Triangle] \label{def:d-weights-exact-tri}
Let $d \geq 1$. We say that an Exact Triangle instance $(A, B, C)$ is \emph{$d$-weights} if in at least one of the matrices $A, A^T, B, B^T, C, C^T$ all rows contain at most $d$ distinct (non-$\bot$) entries.
\end{definition}

The upcoming reductions are more coarse-grained than our algorithm in the previous section, as in this section we do not mind the precise subcubic running time. For the first reduction from $d$-Weights APSP to $d$-Weights Min-Plus Product we could also take Yuster's algorithm~\cite{Yuster09}, but we include a quick (and slightly more efficient) reduction based on our previous algorithm.

\begin{lemma}[Few-Weights APSP to Min-Plus Product] \label{lem:apsp-to-min-plus}
If $d$-Weights Min-Plus Product can be solved in time $\Order(n^{3-\epsilon})$ (for some $\epsilon > 0$), then $d$-Weights APSP can be solved in time $\Order(n^{3-3\epsilon/4})$.
\end{lemma}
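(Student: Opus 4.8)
The plan is to read off this reduction from the pivot-based algorithm developed in \cref{sec:apsp-node-weighted}. Concretely, \cref{lem:d-weights-apsp-det} already packages $d$-Weights APSP as an $\widetilde\Order(n^3/h^3)$-time reduction to $\widetilde\Order(h)$ calls of an oracle for $(n/h)\times n\times n$ $d$-Weights Min-Plus products, valid for any parameter $h\geq 1$ (it internally handles the standard reversal turning the ``$\leq d$ distinct outgoing weights'' promise into ``$\leq d$ distinct entries per column of the $D^{\leq 1}$ factor'', as well as the multi-level pivoting). So all that remains is to (i) simulate a thin rectangular $d$-Weights Min-Plus product by the assumed square oracle, and (ii) choose $h$.

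For (i), given an instance $A\star B$ with $A\in(\Int\cup\set{\infty})^{(n/h)\times n}$ and $B\in(\Int\cup\set{\infty})^{n\times n}$ having at most $d$ distinct non-$\infty$ entries per column, I would pad $A$ with $\infty$-rows to an $n\times n$ matrix $\widehat A$. Then $\widehat A\star B$ is a bona fide $n\times n$ $d$-Weights Min-Plus instance --- padding leaves $B$ untouched, so its column-sparsity is preserved --- hence computable in $\Order(n^{3-\epsilon})$ time by hypothesis, and its top $n/h$ rows are exactly $A\star B$. Substituting into \cref{lem:d-weights-apsp-det} gives total time
\begin{equation*}
    \widetilde\Order\!\left(\frac{n^3}{h^3}+h\cdot n^{3-\epsilon}\right).
\end{equation*}
For (ii), setting $h=n^{\epsilon/4}$ equalizes the two terms at $\widetilde\Order(n^{3-3\epsilon/4})$; since $\epsilon>0$ is a fixed constant the hidden polylogarithmic factors only perturb the exponent by $o(1)$ and are absorbed as usual, yielding $\Order(n^{3-3\epsilon/4})$.

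I do not expect a genuine obstacle here; the proof is a short composition of results already in the paper, and the only care needed is bookkeeping: checking that the instances produced by \cref{lem:d-weights-apsp-det} are genuinely $d$-Weights Min-Plus Product instances (at most $d$ distinct non-$\infty$ entries per column of the second matrix), which is handled inside that lemma, and that padding a thin matrix to a square one preserves the structural restriction, which it does since only the unrestricted matrix is padded. For context it is worth noting why one cannot just use a single layer of pivots: computing $D^{\leq h}$ by $h$ sequential min-plus products and then gluing through one pivot set $S$ of size $n/h$ costs $\widetilde\Order(n^3/h)$ for the triple product $D^{\leq h}[V,S]\star D[S,S]\star D^{\leq h}[S,V]$, which balanced against $\widetilde\Order(h\cdot n^{3-\epsilon})$ yields only $\widetilde\Order(n^{3-\epsilon/2})$; it is precisely the multi-level pivoting of \cref{sec:apsp-node-weighted} that reduces the overhead to $n^3/h^3$ and thereby improves the exponent to $3\epsilon/4$.
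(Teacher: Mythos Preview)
Your proof is correct and follows essentially the same approach as the paper: invoke \cref{lem:d-weights-apsp-det} to get running time $\widetilde\Order(n^3/h^3 + h\cdot T_d(n/h,n,n))$, bound $T_d(n/h,n,n)\le \Order(n^{3-\epsilon})$ (you make the padding argument explicit, the paper leaves it implicit), and set $h=n^{\epsilon/4}$. Your additional commentary on why single-level pivoting would only yield $n^{3-\epsilon/2}$ is accurate and helpful context, though not part of the paper's proof.
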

\begin{proof}
Suppose that $d$-Weights Min-Plus Product can be solved in time $\Order(n^{3-\epsilon})$; i.e., in the notation of \cref{sec:apsp-node-weighted:sec:d-weights} we have that $T_d(n, n, n) = \Order(n^{3-\epsilon})$. Our deterministic algorithm for $d$-Weights APSP (\cref{lem:d-weights-apsp-det}) runs in time~\smash{$\widetilde\Order(n^3 / h^3 + h \cdot T_d(n/h, n, n)) = \widetilde\Order(n^3 / h^3 + h \cdot n^{3-\epsilon})$}, so the claim follows by picking $h = n^{\epsilon/4}$.
\end{proof}

The next reduction from $d$-Weights Min-Plus Product to $d$-Weights All-Edges Exact Triangle is a simple adaption of the reduction in~\cite{WilliamsW18}. We include a short proof for the sake of a self-contained presentation.

\begin{lemma}[Few-Weights Min-Plus Product to All-Edges Exact Triangle] \label{lem:min-plus-to-exact-tri}
If $d$-weights All-Edges Exact Triangle can be solved in time $T(n, d)$, then $d$-weights Min-Plus Product can be solved in time $\Order(T(n, d) \log n)$.
\end{lemma}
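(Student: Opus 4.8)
I would prove this by invoking the classical reduction from Min-Plus Product to All-Edges Exact Triangle of Vassilevska Williams and Williams~\cite{WilliamsW18} essentially verbatim, and then checking that it preserves the few-weights restriction. Recall that in the $d$-Weights Min-Plus Product problem the restricted matrix is $B$, which has at most $d$ distinct non-$\infty$ entries in each column; by \cref{def:d-weights-exact-tri} it therefore suffices to produce All-Edges Exact Triangle instances in which one of the six matrices has at most $d$ distinct entries per row, and to call the hypothesized $T(n,d)$-time algorithm on each.

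The shape of the reduction is binary search on the output matrix. Since all entries of $A,B$ (hence of $A\star B$) lie in $\{-n^c,\dots,n^c\}$, after first shifting $A$ (and compensating on the threshold side) to make all entries nonnegative, it suffices to run $\Order(\log(n^c))=\Order(\log n)$ rounds; in each round we maintain, for every pair $(i,j)$, an interval containing $(A\star B)[i,j]$ and we resolve the batch of comparison queries ``is $(A\star B)[i,j]\leq c_{ij}$?'' for the per-entry midpoints $c_{ij}$, halving every interval. Each such batch is reduced, via the standard gadget of~\cite{WilliamsW18}, to a constant number of All-Edges Exact Triangle calls: a ``$\leq$'' test ``$\exists k:\ A[i,k]+B[k,j]\leq c_{ij}$'' becomes the disjunction ``$\exists k,\delta\geq 0:\ A[i,k]+(B[k,j]+\delta)=c_{ij}$'', realized by duplicating each column $j$ of $B$ into columns $(j,\delta)$ with the additive shift $\delta$ and setting the target matrix to $c_{ij}$ on column $(j,\delta)$ (padding to the worst-case interval width of the round, or grouping entries by current width). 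Summing over the rounds gives running time $\Order(T(n,d)\log n)$; reading off $A\star B$ from the answers is routine.

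The one point requiring care --- and the only genuinely new ingredient --- is that the gadget must be set up so that $B$ stays restricted. In the construction above, $A$ is never modified, $C$ (the target matrix) is replaced by an array of thresholds, and $B$ is touched only through \emph{column-wise} operations: adding a global constant (the nonnegativity shift), adding a per-column constant, and duplicating a column together with an additive shift. Every one of these leaves the number of distinct entries in each column of $B$ unchanged, so in every produced instance $B$ still has at most $d$ distinct non-$\bot$ entries per column, i.e.\ $B^{T}$ has at most $d$ distinct entries per row; by \cref{def:d-weights-exact-tri} this makes each produced instance a $d$-Weights All-Edges Exact Triangle instance (after transposing the whole triple, if the definition is read as constraining rows rather than columns).

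I expect the main obstacle to be bookkeeping rather than anything conceptual: one must confirm that the per-entry thresholds $c_{ij}$, which differ across entries, can all be handled inside exact-triangle calls of uniform dimensions each round, and that the additive shifts introduced on the $A$- and $C$-sides do not interact with the column shifts on the $B$-side in a way that could change $B$'s per-column structure --- which, since all the $B$-side modifications are additive per column, they do not. Everything else is a direct transcription of the reduction in~\cite{WilliamsW18}.
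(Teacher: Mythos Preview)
Your intuition that the reduction should preserve the $d$-weights restriction because the ``$B$-side'' is only touched by column-wise additive shifts is correct and is indeed the only new observation needed. However, the concrete reduction you describe does not achieve the claimed $\Order(\log n)$ overhead.

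The issue is the conversion of the per-round comparison ``$\exists k:\ A[i,k]+B[k,j]\leq c_{ij}$'' into an equality test by duplicating each column of $B$ into copies $(j,\delta)$ for all slacks $\delta\geq 0$ up to the current interval width. In the first round of your binary search the interval width is the full weight range $N=\poly(n)$, so you would need $\Theta(N)$ column copies and hence $\Theta(N)$ calls to the $n\times n$ Exact Triangle oracle; summing over all rounds gives $\sum_r N/2^r = \Theta(N)$ calls, i.e.\ $\Order(T(n,d)\cdot\poly(n))$ rather than $\Order(T(n,d)\log n)$. Your parenthetical ``padding to the worst-case interval width of the round, or grouping entries by current width'' does not fix this: the worst-case width in round~$r$ is $N/2^r$, and every pair $(i,j)$ starts with width~$N$. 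In short, binary search reduces Min-Plus Product to \emph{Negative} Triangle (an inequality test) with $\Order(\log n)$ overhead, but converting a single inequality test to equality tests is itself expensive.

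The paper (following~\cite{WilliamsW18}) instead uses recursive \emph{halving of the inputs}: set $A'[i,k]=\lfloor A[i,k]/2\rfloor$, $B'[k,j]=\lfloor B[k,j]/2\rfloor$, recursively compute $C'=A'\star B'$, and observe that $2C'[i,j]\leq C[i,j]\leq 2C'[i,j]+4$. Then a \emph{constant} number of Exact Triangle calls $(A,B,C_\ell)$ with $C_\ell[i,j]=2C'[i,j]+\ell$ for $\ell\in\{0,\dots,4\}$ pins down $C[i,j]$ exactly. The recursion has depth $\Order(\log N)=\Order(\log n)$, giving the stated bound. The $d$-weights property is preserved for exactly the reason you identified: in every Exact Triangle call the matrix $B$ is passed unchanged, and in the recursive call $B'=\lfloor B/2\rfloor$ has at most as many distinct entries per column as $B$.
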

\begin{proof}
Consider matrices $A, B \in (\Nat \cup \set{\bot})^{n \times n}$ (the assumption that the entries are nonnegative is without loss of generality by shifting all entries by a common large number). We compute their min-plus product~$C$ by a recursive algorithm. Let $A'$ and $B'$ be the matrices defined by~\smash{$A'[i, k] = \floor{\frac{A[i, k]}{2}}$} and~\smash{$B' = \floor{\frac{B[k, j]}{2}}$}, and recursively compute the min-plus product $C'$ of $A'$ and $B'$ (note that the Min-Plus Product instance $(A', B')$ is also $d$-weights). From the trivial inequalities $x - 1 \leq \floor{x} \leq x$, it follows that
\begin{align*}
    C[i, j] - 4
    &= \min_{k \in [n]} \parens*{2 \cdot \tfrac{A[i, k]}{2} - 2 + 2 \cdot \tfrac{B[k, j]}{2} - 2} 
    \leq \min_{k \in [n]} \parens*{2 \cdot \floor{\tfrac{A[i, k]}{2}} + 2 \cdot \floor{\tfrac{B[k, j]}{2}}}
    = 2 C'[i, j] \leq C[i, j],
\end{align*}
that is, $2C'$ is an additive approximation of $C$ with error at most $4$ in each entry. Having access to $C'$, we can now exactly compute $C$ by $\Order(1)$ calls to $d$-Weights All-Edges Exact Triangle. Specifically, consider the All-Edges Exact Triangle instances $(A, B, C_\ell)$ for $\ell \in [4]$, where $C_\ell$ be the matrix defined by $C_\ell[i, j] = C'[i, j] + \ell$. It is easy to verify that $C[i, j] = C'[i, j] + \ell$, where $\ell$ is the smallest choice for which $(i, j)$ is involved in an exact triangle in $(A, B, C_\ell)$.

With each recursive call we halve the maximum entry size, thus the recursion depth is at most $\ceil{\log N}$, where $N$ is an upper bound on the initial entry size. Recall that we assume $N = \poly(n)$, and so the time bound $\Order(T(n, d) \log n)$ follows.
\end{proof}

Our goal for the remainder of this section is to prove the following \cref{thm:exact-tri-few-weights}, which, combined with \cref{lem:apsp-to-min-plus,lem:min-plus-to-exact-tri}, completes the proof of \cref{thm:apsp-few-weights}.

\thmexacttrifewweights*

\subsection{Small-Doubling Exact Triangle} \label{sec:apsp-few-weights:sec:small-doubling}
We start with the following lemma for the ``small-doubling'' case:

\begin{lemma}[Small-Doubling Exact Triangle] \label{lem:exact-tri-small-doubling}
There is a deterministic algorithm that solves All-Edges Exact Triangle on instances $(A, B, C)$ with $A \in (X \cup \set{\bot})^{n \times n}$ and $B \in (Y \cup \set{\bot})^{n \times n}$ in time~\smash{$\widetilde\Order(n^\omega \cdot |X + Y|)$}.
\end{lemma}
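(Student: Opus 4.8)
The plan is to generalize the textbook $\widetilde\Order(n^\omega M)$ algorithm for the $M$-bounded case. That algorithm replaces each matrix entry by a monomial, $\hat A[i, k] := x^{A[i, k]}$ and $\hat B[k, j] := x^{B[k, j]}$ (and $\bot \mapsto 0$), computes the matrix product $\hat A \hat B$ over $\Int[x]$, and reads off the coefficient of $x^{C[i, j]}$ in $(\hat A \hat B)[i, j] = \sum_k x^{A[i, k] + B[k, j]}$; this coefficient is exactly the number of $k$ forming an exact triangle through $(i, j)$. In the $M$-bounded case the polynomials have degree $\Order(M)$, so the matrix product costs $\widetilde\Order(n^\omega M)$ via Kronecker substitution and fast integer multiplication. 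In our setting the entries of $\hat A$ are supported on $X$ and those of $\hat B$ on $Y$, so every entry of $\hat A \hat B$ is supported on the sumset $X + Y$, a set of size $s := \abs{X + Y}$. After the routine preprocessing (shift all entries to be nonnegative integers of magnitude $M = \poly(n)$, and remove the $\bot$'s by sending them to the zero polynomial), the whole task reduces to computing this polynomial matrix product while paying only for the true sparsity $s$ of the output, i.e., in time $\widetilde\Order(n^\omega \cdot s)$ rather than $\widetilde\Order(n^\omega \cdot M)$.

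Concretely, I would compute $\hat A \hat B$ by \emph{evaluation--interpolation restricted to the small support}: first compute $s = \abs{X+Y}$ by an output-sensitive sparse convolution of $X$ and $Y$ in time $\widetilde\Order(s)$; then pick $m = \widetilde\Order(s)$ evaluation points $\alpha_1, \dots, \alpha_m$ in a field large enough to separate $X+Y$, form the numeric matrices $\hat A(\alpha_r) = (\alpha_r^{A[i, k]})_{i, k}$ and $\hat B(\alpha_r) = (\alpha_r^{B[k, j]})_{k, j}$ in total time $\widetilde\Order(n^2 s)$, multiply each pair with one call to integer matrix multiplication for a total of $\widetilde\Order(n^\omega s)$, and finally, for each $(i, j)$, recover from the $m$ evaluations the (at most $s$)-sparse polynomial $(\hat A \hat B)[i, j]$ — equivalently, just its coefficient of $x^{C[i, j]}$ — by sparse polynomial interpolation in time $\widetilde\Order(s)$. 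A conceptually simpler variant that also achieves the bound (with high probability) is to hash the exponents: for $\Order(\log n)$ independently chosen primes $p$ of magnitude $\widetilde\Theta(s)$, compute $(\hat A \hat B)[i, j]$ over the cyclic ring $\Int[x] / (x^p - 1)$ with exponents reduced modulo $p$ — which is again an integer matrix multiplication with $\Order(s \log n)$-bit entries, costing $\widetilde\Order(n^\omega s)$ — and declare $(i, j)$ an exact triangle iff the coefficient of $x^{C[i, j] \bmod p}$ is nonzero in \emph{every} round. A true exact triangle survives all rounds, while a fixed non-triangle pair $(i, j)$ is falsely flagged in a given round only if one of the at most $s$ distinct nonzero differences $A[i, k] + B[k, j] - C[i, j]$ is divisible by $p$, which for a polynomially bounded value happens with probability $\Order(1/s)$ per difference, hence $\Order(1)$ in total — tunable below $1/2$ by a constant-factor enlargement of the prime range — so a union bound over the $n^2$ pairs and $\Order(\log n)$ rounds finishes the argument.

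The main obstacle — and the reason the statement insists on a \emph{deterministic} algorithm — is the sparse component. In the evaluation--interpolation approach one must choose the field (or a suitable field extension to guarantee a multiplicative element of order exceeding $2M$), the evaluation points, and run the sparse interpolation deterministically in near-linear time; in the hashing variant one must replace the random primes by a deterministic choice. Both are by now standard: deterministic sparse convolution and sparse polynomial interpolation run in time $s \cdot \poly\log(sM)$ (or $s^{1 + o(1)}$), and one can either invoke such a primitive as a black box or derandomize the prime selection by the usual counting argument over the polynomially bounded differences. Once these are in place, assembling the pieces — bookkeeping the $\bot$ entries and the shift to nonnegative weights, fixing the block sizes for the Kronecker encoding so that no carries occur over the length-$n$ sums, and quoting the deterministic sparse-convolution result — yields the claimed $\widetilde\Order(n^\omega \cdot \abs{X + Y})$ running time. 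I expect essentially all of the difficulty to lie in these deterministic sparse primitives and the encoding details, with no new idea required beyond the observation that the product matrix is supported on $X + Y$.
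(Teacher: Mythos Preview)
Your proposal is correct and essentially matches the paper's proof: the paper implements your hashing variant, derandomizing the prime selection via an explicit ``isolation'' lemma (a greedy construction yielding $O(\log|X+Y|)$ primes of size $\widetilde O(|X+Y|)$ such that every $z \in X+Y$ is isolated modulo some prime in the set), and then tests each pair $(i,j)$ at the single prime isolating $C[i,j]$ rather than taking an AND over all rounds. Your evaluation--interpolation alternative is also acknowledged by the paper (in a footnote) as an equally viable route.
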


This algorithm can be seen as a generalization of the $\widetilde\Order(M \cdot n^\omega)$-time algorithm for APSP (and Exact Triangle) with weights in the range $\set{-M, \dots, M}$~\cite{AlonGM97}, which is based on the following combination of fast matrix multiplication and the fast Fourier transform:

\begin{lemma}[Fast Matrix Multiplication + Fast Fourier Transform~\cite{AlonGM97}] \label{lem:fastmm-fft}
Let $A, B$ be $n \times n$ matrices where each entry is a degree-$d$ polynomial. There is a deterministic algorithm computing the product $A B$ in time~\smash{$\widetilde\Order(n^\omega d)$}.
\end{lemma}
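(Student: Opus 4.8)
The plan is a standard \emph{evaluation--interpolation} reduction from polynomial matrix multiplication to a handful of scalar matrix multiplications. Writing $(AB)[i,j]=\sum_k A[i,k]\cdot B[k,j]$, each such entry is a polynomial of degree at most $2d$, and is therefore uniquely determined by its values at any $2d+1$ distinct points. First I would fix a prime $p=\poly(n)$ that is larger than $2d$ and larger than twice the largest magnitude a coefficient of $AB$ can attain; since the input coefficients are polynomially bounded integers (our standing convention), a coefficient of $AB$ has magnitude at most $\poly(n)$, so such a $p$ exists, is polynomially bounded, and can be found deterministically in $\poly(n)$ time by trial division (it may even be taken $\equiv 1\pmod{2^{\lceil\log(2d+1)\rceil}}$). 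All arithmetic below is performed modulo $p$; at the very end the integer coefficients of $AB$ are recovered from their balanced residues mod $p$, which is valid because their magnitudes are below $p/2$.

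Fix distinct points $\alpha_0,\dots,\alpha_{2d}$ modulo $p$ (for instance $(2d+1)$-th roots of unity, padding the range up to the next power of two). The algorithm runs in three phases.
\begin{itemize}
\item \emph{Evaluation.} For each $t$, form the scalar matrix $A_t:=A(\alpha_t)$ by evaluating every polynomial entry of $A$ at $\alpha_t$, and likewise $B_t:=B(\alpha_t)$. Performing this for all $2d+1$ points at once is, for each of the $2n^2$ input entries, a single fast multipoint evaluation (equivalently a length-$\Order(d)$ Fast Fourier Transform), which costs $\widetilde\Order(d)$; in total $\widetilde\Order(n^2 d)$.
\item \emph{Pointwise multiplication.} For each $t$, compute $C_t:=A_t B_t$. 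Because evaluation at $\alpha_t$ is a ring homomorphism, $C_t[i,j]=\sum_k A[i,k](\alpha_t)\cdot B[k,j](\alpha_t)=(AB)[i,j](\alpha_t)$, i.e.\ $C_t=(AB)(\alpha_t)$. Each product is one scalar matrix multiplication, costing $\Order(n^\omega)$ arithmetic operations, hence $\widetilde\Order(n^\omega)$ word operations since $\log p=\Order(\log n)$; over all $t$ this is $\widetilde\Order(n^\omega d)$.
\item \emph{Interpolation.} For each pair $(i,j)$, reconstruct the degree-$\le 2d$ polynomial $(AB)[i,j]$ from the $2d+1$ values $C_0[i,j],\dots,C_{2d}[i,j]$ by fast interpolation (equivalently an inverse FFT), in $\widetilde\Order(d)$ per pair and $\widetilde\Order(n^2 d)$ overall; finally lift each coefficient from its residue mod $p$ to $\Int$.
\end{itemize}
Adding the three phases yields $\widetilde\Order(n^\omega d+n^2 d)=\widetilde\Order(n^\omega d)$, and every step above is deterministic.

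The one place that genuinely needs care — and, in my view, the conceptual content of the lemma — is the bit-complexity. A naive variant that evaluates the polynomials at the small integers $0,1,\dots,2d$ and then multiplies the resulting integer matrices is forced to work with entries of bit-length $\widetilde\Omega(d)$, which inflates each of the $2d+1$ matrix multiplications to $\widetilde\Order(n^\omega d)$ and the whole computation to $\widetilde\Order(n^\omega d^2)$; reducing modulo a single polynomially bounded prime keeps every scalar inside one machine word and is exactly what rescues the bound. The remaining ingredients are routine: fast multipoint evaluation and interpolation over $\Int/p\Int$ reduce to multiplying univariate polynomials of degree $\Order(d)$ over $\Int/p\Int$, which runs in $\widetilde\Order(d)$ by FFT (using the FFT-friendly prime above, or Kronecker substitution plus fast integer multiplication), and the deterministic search for $p$ is negligible. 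For all applications in this paper one has $d\le n$, so $p>2d$ is never a constraint; in general one could instead evaluate over a small extension field of size exceeding $2d$ at no asymptotic cost.
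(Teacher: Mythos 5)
Your proof is correct. The paper does not prove this lemma itself but cites~\cite{AlonGM97}; the standard argument there (and in Zwick's subsequent work) runs the fast matrix multiplication algorithm \emph{directly over the polynomial ring}, spending $\widetilde\Order(d)$ per ring operation via FFT-based (or Kronecker-substitution-based) polynomial multiplication, for $\Order(n^\omega)\cdot\widetilde\Order(d)=\widetilde\Order(n^\omega d)$ total. Your evaluation--interpolation scheme is the transposed view of the same idea: since the DFT is a ring homomorphism, ``FFT inside each of the $n^\omega$ multiplications'' and ``FFT first, then $\Order(d)$ scalar matrix products, then inverse FFT'' compute the same thing at the same cost. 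What each buys: your version only needs a black-box scalar matrix multiplication routine and makes the bit-complexity issue (which you rightly identify as the crux) completely transparent; the ring version needs no field with $2d+1$ interpolation points and hence no prime at all. That last point is the one small soft spot in your write-up: deterministically locating a single prime exceeding the coefficient bound $\poly(n)$ is not obviously negligible when that bound is a large power of $n$ (the best unconditional prime-gap guarantees force you to test up to $N^{0.525}$ candidates below $N$). This is easily repaired --- either observe that in this paper's only application the entries are monomials with $0$/$1$ coefficients so the product coefficients are at most $n$ and the prime has size $\widetilde\Order(\max(d,n))$, or replace the single large prime by $\Order(1)$ primes of size $\Theta(\max(d,n))$ found by sieving in $\widetilde\Order(\max(d,n))$ time and combine the results by the Chinese Remainder Theorem --- but it deserves a sentence.
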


Our strategy is to reduce to this weight-bounded case by linearly \emph{hashing} all weights to $[d]$; this is a common idea in the context of computing sparse convolutions~\cite{ColeH02,ArnoldR15,ChanL15,Nakos20,GiorgiGC20,BringmannFN21,BringmannFN22,JinX24}.\footnote{In fact, there is an alternative \emph{algebraic} approach to deterministically computing sparse convolutions~\cite{BringmannFN22}. This approach works equally well in our context, but we have decided to stick to the hashing-based approach for the sake of simplicity.}

\begin{lemma}[Deterministic Almost-Linear Almost-Perfect Hashing] \label{lem:det-linear-hashing}
Let $Z \subseteq [-N, N]$ have size $t$. There is a deterministic \smash{$\widetilde\Order(t^2 \log N)$}-time algorithm computing a set $P \subseteq [8 t \log N]$ of at most $\ceil{\log t}$ primes satisfying the following property: For each $z \in Z$ there is some prime $p \in P$ such that $z \not\equiv z' \mod p$ for all $z' \in Z \setminus \set{z}$ (in this case we say that $z$ is \emph{isolated modulo $p$}).
\end{lemma}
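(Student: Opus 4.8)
The plan is to recast the conclusion as a \emph{covering} statement and solve it greedily. For a prime $p$, call $z \in Z$ \emph{good for $p$} if $z$ is isolated modulo $p$ (with respect to all of $Z$), and let $G_p = \set{z \in Z : z \text{ is good for } p}$. We want $\ceil{\log t}$ primes $p$, each at most $8t\log N$, whose good-sets $G_p$ cover $Z$. The heart of the argument is the averaging claim: \emph{for every nonempty $W \subseteq Z$ there is a prime $p \leq 8t\log N$ with $\abs{W \cap G_p} \geq \abs{W}/2$.}

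To prove the claim, fix $z \in Z$ and observe that a prime $p$ fails to be good for $z$ only if $p$ divides the nonzero integer $\Pi_z := \prod_{z' \in Z \setminus \set{z}} (z - z')$, whose absolute value is at most $(2N)^{t-1}$ and which therefore has at most $(t-1)\log_2(2N)$ distinct prime divisors. Hence, among \emph{any} set $Q$ of primes, at most $(t-1)\log_2(2N)$ of them are bad for a fixed $z$. Take $Q$ to be the set of all primes up to $8t\log N$; an effective prime-counting estimate (Chebyshev's lower bound on $\pi(\cdot)$, or Rosser's bound on the $m$-th prime), with the constant chosen suitably, guarantees that $\abs{Q}$ exceeds $2(t-1)\log_2(2N)$, so at least $\abs{Q}/2$ primes of $Q$ are good for each individual $z$. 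Double counting now gives $\sum_{p \in Q} \abs{W \cap G_p} = \sum_{z \in W} \abs{\set{p \in Q : z \text{ good for } p}} \geq \abs{W} \cdot \abs{Q}/2$, and averaging over $p \in Q$ produces a single prime $p \in Q$ with $\abs{W \cap G_p} \geq \abs{W}/2$, as claimed.

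Given the claim, run the obvious greedy: set $W_0 = Z$, and having chosen $p_1, \dots, p_i$, apply the claim to $W_i := Z \setminus \bigcup_{j \leq i} G_{p_j}$ to obtain a prime $p_{i+1}$ with $\abs{W_{i+1}} \leq \abs{W_i}/2$. After $\ceil{\log_2 t}$ rounds $W$ is empty, so $P = \set{p_1, \dots}$ has the required size and covering property; the minor off-by-one can be absorbed by demanding a slightly larger $\abs{Q}$ so that each round shrinks $W$ by a factor of, say, $4$. For the running time: sieve all primes up to $8t\log N$ in time $\widetilde\Order(t\log N)$; there are $\widetilde\Order(t)$ of them. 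In each of the $\Order(\log t)$ greedy rounds, for each candidate prime $p$, bucket the residues $\set{z \bmod p : z \in Z}$ by radix/counting sort over the $p \leq 8t\log N$ buckets to read off $G_p$ in time $\widetilde\Order(t\log N)$ (the $\log N$ coming from arithmetic on $\Order(\log N)$-bit integers), and then keep the prime maximizing $\abs{W \cap G_p}$. Over all rounds and primes this totals $\widetilde\Order(t\log N) \cdot \widetilde\Order(t) \cdot \Order(\log t) = \widetilde\Order(t^2 \log N)$. Since everything is a deterministic enumeration, no derandomization is needed.

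The main obstacle is the quantitative prime-counting step: one must certify that $[1, 8t\log N]$ genuinely contains more primes than the worst-case number $(t-1)\log_2(2N)$ of ``bad'' primes per element, with enough slack for the greedy to finish within $\ceil{\log t}$ rounds. This is precisely where the constant hidden in ``$8$'' is spent, and it relies on an explicit bound such as $\pi(x) \geq x / (2\ln x)$ valid above a small absolute threshold, with the finitely many remaining small-$t$, small-$N$ cases dispatched by brute force. Everything else --- the bound $\omega(\Pi_z) \leq \log_2\abs{\Pi_z}$, the double counting, the greedy halving, and the bucketing implementation --- is routine.
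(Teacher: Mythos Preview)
Your argument has a genuine quantitative gap at exactly the point you flag as ``the main obstacle.'' You bound the number of primes bad for a fixed $z$ by the number of distinct prime divisors of $\Pi_z = \prod_{z'\neq z}(z-z')$, which is at most $(t-1)\log_2(2N)$; then you need $|Q| = \pi(8t\log N)$ to exceed twice this quantity. But by the prime number theorem (or any Chebyshev-type bound), $\pi(8t\log N) = \Theta\!\big(\tfrac{t\log N}{\log(t\log N)}\big)$, which is \emph{smaller} than $2(t-1)\log_2(2N) \approx 2t\log_2 N$ by a full $\log(t\log N)$ factor once $t$ or $N$ is moderately large. No adjustment of the constant $8$ can close this gap: your bad-prime count is $\Theta(t\log N)$ because you allow tiny primes like $2,3,5,\dots$ into $Q$, and those primes are where most divisors of $\Pi_z$ live.

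The paper's proof avoids this by restricting to primes in the interval $[m,2m]$ with $m = 4t\log N$, and then arguing \emph{pairwise}: each difference $z-z'$ (of absolute value at most $2N$) has at most $\log_m(2N)$ prime divisors that are $\geq m$, and comparing this to the $\Theta(m/\ln m)$ primes in $[m,2m]$ gives the $1/(2t)$ collision probability per pair. The large lower bound on the primes is exactly what makes the arithmetic close. Your covering/greedy framework and running-time analysis are fine and essentially match the paper; the fix is simply to take $Q$ to be the primes in $[4t\log N,\,8t\log N]$ rather than all primes up to $8t\log N$, after which your double-counting argument (now with at most $(t-1)\log_m(2N)$ bad primes per $z$) goes through.
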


We remark that this lemma is trivial if we allow randomization: Any prime $p \in [4t \log N, 8t \log N]$ (say) isolates any fixed element $z \in Z$ with constant probability (see the following proof), and thus by taking $\Order(\log t)$ independent samples each element would be isolated at least once.

\begin{proof}
Our proof is a routine application of the method of conditional expectations. Initially we let $P = \emptyset$ and $Z_0 = Z$; intuitively, $Z_0$ is the set of elements for which we still have to find an isolating prime. The algorithm runs in several iterations, stopping only when $Z_0 \neq \emptyset$. In each iteration our goal is to select a prime~$p$ that isolates at least~\smash{$\frac{|Z_0|}{2}$} primes in $Z_0$. To this end, we enumerate all primes in~\makebox{$[m, 2m]$} (where we will choose $m = 4 t \log N$) and pick any prime satisfying this property. Afterwards, we remove all elements from $Z_0$ that are isolated modulo $p$.

For the correctness we prove that there always is a prime $p$ as required. Imagine that $p \in [m, 2m]$ is a uniformly \emph{random} prime. Fix a pair $(z, z') \in Z_0 \times Z$. The event that $z \equiv z' \mod p$ is equivalent to $p$ being a divisor of $z - z'$. On the one hand, $z - z'$ has at most $\log_m(N)$ prime divisors in~$U$. On the other hand, by a quantitative Prime Number Theorem~\cite{RosserS62} there are at least $\frac{3m}{5 \ln m}$ primes in~$[m, 2m]$ (for sufficiently large~$m$). Combining both statements, the probability that $z \equiv z' \mod p$ is at most~\smash{$\log_m(N) / \frac{3m}{5 \ln m} \leq \frac{5 \log N}{3m} \leq \frac{1}{2t}$} by choosing $m := 4 t \log N$. In particular, the expected number of pairs $(z, z') \in Z_0 \times Z$ with $z \equiv z' \mod p$ is at most~\smash{$\frac{1}{2t} \cdot |Z_0| \cdot |Z| = \frac{|Z_0|}{2}$}. Hence, there is some choice of $p$ which isolates at least~\smash{$\frac{|Z_0|}{2}$} many elements, as claimed.

For the running time, observe that the algorithm terminates after at most $\ceil{\log t}$ iterations (and thus also $|P| \leq \ceil{\log t}$). In each iteration we enumerate $\Order(t \log N)$ primes $p$, and for each such prime compute the set of isolated elements in time~\smash{$\widetilde\Order(t)$} (say, by sorting the set $\set{z \bmod p : z \in Z}$).
\end{proof}

\begin{proof}[Proof of \cref{lem:exact-tri-small-doubling}]
We can assume that $|X|, |Y| \leq |X + Y| \leq n$, as otherwise we can simply solve the given instance in time $\Order(n^3) = \Order(n^\omega \cdot |X + Y|)$. As a first step we compute the sumset $X + Y$ (by brute-force in time~$\Order(n^2)$). Let $t := |X + Y|$ and let $N$ denote the maximum entry in~$X + Y$ (in absolute value). We then apply \cref{lem:det-linear-hashing} with $Z := X + Y$ to compute a set of primes $P$. For each $p \in P$, we execute the following steps. Let $x$ be a formal variable, and let $A_p, B_p$ be matrices defined by
\begin{align*}
    A_p[i, k] &= x^{A[i, k] \bmod p}, \\
    B_p[k, j] &= x^{B[k, j] \bmod p}.
\end{align*}
That is, $A_p$ and $B_p$ are matrices where each entry is a univariate degree-$p$ polynomial. We can compute their product $A_p B_p$ in time \smash{$\widetilde\Order(n^\omega p)$} by \cref{lem:fastmm-fft}, and by reducing the exponents in the resulting matrix modulo~$p$ we have access to the matrix $C_p$ with
\begin{equation*}
    C_p[i, j] = \sum_{k \in [n]} x^{(A[i, k] + B[k, j]) \bmod p}.
\end{equation*}
We can now decide for each pair $(i, j)$ whether it appears in an exact triangle $(i, k, j)$ as follows: It is clear that if $C[i, j] \not\in X + Y$, then the answer is trivially ``no''. Otherwise, if $C[i, j] \in X + Y$, by \cref{lem:det-linear-hashing} there is some prime $p$ which isolates $C[i, j]$ from the rest of $X + Y$. We return ``yes'' if and only if the coefficient of~$x^{C[i, j] \bmod p}$ in $C_p[i, j]$ is nonzero. To see that this is correct, note that the coefficient of the $x^c$-monomial in $C_p[i, j]$ is nonzero if and only if there is a triple $(i, k, j)$ with $A[i, k] + B[k, j] \equiv c \mod{p}$. If $c$ is isolated modulo $p$ in $X + Y$, this is further equivalent to $A[i, k] + B[k, j] = c$.
\end{proof}

We remark that this algorithm only decides for each pair $(i, j)$ whether it is involved in an exact triangle, and this is sufficient for us. However, by incorporating the ``Baur-Strassen'' trick by~\cite{Fischer24,BaurS83} one could additionally decide for each pair $(i, k)$ and $(k, j)$ if it is involved in an exact triangle, in the same running time.

\subsection{Uniform Regular Exact Triangle} \label{sec:apsp-few-weights:sec:uniform-regular}
As a first step we design a subcubic algorithm for the significantly more structured case that the given Exact Triangle instance is not only $d$-weights, but even satisfies the following two properties:

\begin{definition}[$d$-Uniform Exact Triangle] \label{def:d-uniform}
Let $d \geq 1$. We say that a Exact Triangle instance $(A, B, C)$ is \emph{$d$-uniform} if all entries in $A$, $B$ and $C$ are from $W \cup \set{\bot}$, where $W \subseteq \Int$ is a set of size at most $d$.
\end{definition}

\begin{definition}[$r$-Regular Exact Triangle] \label{def:r-regular}
Let $r \geq 1$. We say that a matrix $A \in (\Int \cup \set{\bot})^{n \times n}$ is  \emph{$r$-regular} if each non-$\bot$ entry appears at most $r$ times in its respective row \emph{and} column. We say that an Exact Triangle instance $(A, B, C)$ is \emph{$r$-regular} if all three matrices $A$, $B$ and $C$ are $r$-regular.
\end{definition}

That is, the goal of this subsection is to prove the following lemma:

\lemexacttriuniformregular*

Note that if $\omega=2$, the running time is~\smash{$\widetilde\Order(n^{3-1/7} d^{1/7})$}. For the proof we use Chan and Lewenstein's BSG Covering; this follows from~\cite[Corollary 2.2, Theorem 2.3]{ChanL15}.

\thmbsgcovering*

\begin{proof}[Proof of \cref{lem:exact-tri-uniform-regular}]
Let $X$, $Y$, $Z$ denote the set of entries in $A$, $B$ and $C$, respectively. By the $d$-uniformity assumption we have that $|X|, |Y|, |Z| \leq d$. We apply the BSG Covering Theorem to~$X, Y, Z$ with some parameter $K \geq 1$. This yields sets $X_1, \dots, X_K \subseteq X$ and $Y_1, \dots, Y_K \subseteq Y$ and a set of pairs $R \subseteq X \times Y$. We then run the following two steps:

\begin{enumerate}
    \item \emph{(The additively structured case)} For each $k \in [K]$, let $A_k$ denote the matrix obtained from $A$ by restricting to the entries in $X_k$ (and replacing all other entries by $\bot$), and similarly let $B_k$ denote the matrix obtained from $B$ by restricting to the entries in $Y_k$. We solve the Exact Triangle instance $(A_k, B_k, C)$ by \cref{lem:exact-tri-small-doubling}, and report all ``yes''-pairs $(i, j)$.
    \item \emph{(The remainder case)} Next, we consider the pairs in $R$. To this end we simply enumerate all $(a, b) \in R$ and $k \in [n]$, and then enumerate all indices $i$ with $A[i, k] = a$ and all indices $j$ with $B[k, j] = b$. We test for each such triple $(i, k, j)$ whether it forms an exact triangle and report accordingly.
\end{enumerate}

For the correctness, first note that the algorithm clearly reports ``no'' for all pairs $(i, j)$ not involved in exact triangles. We argue that for each pair $(i, j)$ involved in an exact triangle, the algorithm reports ``yes''. Fix such an exact triangle $(i, k, j)$. If $(A[i, k], B[k, j]) \in R$, then we necessarily list the triangle in step~2. Otherwise, given that $A[i, k] \in X$, $B[k, j] \in Y$ and $C[i, j] = A[i, k] + B[k, j] \in Z$, Property~(i) of the BSG Covering implies that there is some index $k' \in [K]$ such that $(A[i, k], B[k, j]) \in X_{k'} \times Y_{k'}$. We therefore report the pair $(i, j)$ in step~1.

Finally, we analyze the running time. Step~1 runs for $K$ iterations, each running in time~\smash{$\widetilde\Order(n^\omega \cdot |X_k + Y_k|)$} by \cref{lem:exact-tri-small-doubling}. Property~(ii) of the BSG Covering guarantees that $|X_k + Y_k| \leq \Order(d K^5)$, leading to total time~\smash{$\widetilde\Order(n^\omega \cdot d K^6)$}. For step~2, we use that each column in $A$ contains each entry $a$ at most $n/d$ times (by the $d$-regularity assumption), and similarly that each row in $B$ contains each entry $b$ at most $n/d$ times. The running time of step~2 is thus bounded by $\Order(|R| \cdot n \cdot (n/d)^2) = \Order(n^3 / K)$, using that $|R| \leq \Order(d^2 / K)$ by Property~(iii) of the BSG Covering. In total running time is~\smash{$\widetilde\Order(n^\omega K^6 d + n^3 / K)$}. The claim follows by setting $K = (n^{3-\omega} / d)^{1/7}$ as then $n^3/K = n^{3-(3-\omega)/7}d^{1/7}$.
\end{proof}

In the following two subsections we will separately justify the uniformity and regularity assumptions.

\subsection{Uniformization} \label{sec:apsp-few-weights:sec:uniformization}
\begin{definition}[Popular Sums]
Let $X, Y \subseteq \Int$. We write $r_{X+Y}(z) = \abs{\set{(x, y) \in X \times Y : x + y = z}}$. For a threshold $t \geq 1$, the set of \emph{popular sums} is defined as
\begin{equation*}
    P_t(X, Y) = \set{z \in X + Y : r_{X+Y}(z) \geq t}.
\end{equation*}
\end{definition}

\begin{lemma}[Approximating Popular Sums] \label{lem:approx-popular-sums}
Let $X, Y \subseteq \Int$ be sets of size at most $d$. For any $t \geq 1$ we can compute a set $P_{2t}(X, Y) \subseteq P \subseteq P_t(X, Y)$ by a randomized algorithm in time $\widetilde\Order(d^2 / t)$, or by a deterministic algorithm in time~\smash{$\widetilde\Order(d^2/t)\cdot U^{o(1)}$}, where $U= \max_{u\in X\cup Y}|u|$.
\end{lemma}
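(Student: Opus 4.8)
The plan is to exploit the basic counting fact that there are few popular sums --- since $\sum_z r_{X+Y}(z) = \abs X\cdot\abs Y \le d^2$, at most $d^2/t$ values $z$ satisfy $r_{X+Y}(z) \ge t$ --- and to estimate the relevant multiplicities by \emph{random subsampling} so that the algorithm never has to touch all $\Theta(d^2)$ pairs. First I would dispose of two corner cases: if $\min(\abs X, \abs Y) < t$ then no sum can have multiplicity $\ge t$ (as $r_{X+Y}(z) \le \min(\abs X,\abs Y)$) and we return $P = \emptyset$; and if $t \le C\log d$ for a suitable constant $C$, then $d^2/t = \Omega(d^2/\log d)$ and we can simply compute the entire sumset with multiplicities by brute force in time $\widetilde\Order(d^2) = \widetilde\Order(d^2/t)$ and threshold. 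So assume from now on $\abs X, \abs Y \ge t > C\log d$ and set the sampling rate $p := C\log d / t \in (0,1)$.

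For the randomized algorithm I would keep each element of $X$ independently with probability $p$ to form $X' \subseteq X$, brute-force the multiset $X' + Y$ together with all multiplicities $r_{X'+Y}(z)$ (iterating over all $\abs{X'}\cdot\abs Y$ pairs and bucketing by their sum), and output $P := \set{z : r_{X'+Y}(z) \ge \tau}$ for the threshold $\tau := \tfrac32 C\log d$. The structural observation that makes the analysis work is that the $r := r_{X+Y}(z)$ pairs $(x,y)$ with $x+y = z$ use $r$ \emph{distinct} elements of $X$ (if $x_1+y_1 = x_2+y_2$ then $x_1 = x_2 \iff y_1 = y_2$); hence $r_{X'+Y}(z)$ is a sum of $r$ independent $\mathrm{Bernoulli}(p)$ variables with mean $pr$. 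A Chernoff bound then gives that $r \ge 2t$ (mean $\ge 2C\log d$) implies $r_{X'+Y}(z) \ge \tau$, and that $r < t$ (mean $< C\log d$) implies $r_{X'+Y}(z) < \tau$, each except with probability $\exp(-\Omega(C\log d))$; a union bound over the $\le d^2$ sums in $X+Y$ (every other $z$ trivially has $r_{X'+Y}(z) = 0 < \tau$) then yields $P_{2t}(X,Y)\subseteq P\subseteq P_t(X,Y)$ with high probability for $C$ a large enough constant. For the running time, $\abs{X'}$ is a sum of $\abs X \ge t > C\log d$ $\mathrm{Bernoulli}(p)$'s, so $\abs{X'} \le 2pd$ with high probability (resample if not), and the brute-force step then examines $\abs{X'}\cdot\abs Y \le 2pd^2 = \widetilde\Order(d^2/t)$ pairs, so the total time is $\widetilde\Order(d^2/t)$.

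For the deterministic version the only randomized ingredient is the choice of $X' \subseteq X$; what we really need is a set $X'$ of size $\Order(pd)$ that is simultaneously \emph{balanced} with respect to all the sets $R_z := \set{x \in X : z - x \in Y}$ with $\abs{R_z} \ge t$ --- i.e.\ $\abs{X' \cap R_z} \approx p\abs{R_z}$ for each --- of which there are only $\Order(d^2/t)$. I would derandomize the sampling by the method of conditional expectations with the standard exponential pessimistic estimator. The obstacle I expect here --- and this is genuinely the technical heart of the lemma --- is efficiency: a naive implementation would evaluate all $\Theta(d^2)$ functionals $\abs{X'\cap R_z}$ and cost $\Omega(d^3)$ time, so we cannot even afford to enumerate the candidate popular sums. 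This is exactly what the deterministic sparse-convolution / almost-perfect linear hashing toolkit is for (in the spirit of \cref{lem:det-linear-hashing} and the algebraic and deterministic sparse-convolution methods of \cite{BringmannFN22,Fischer24}): it computes a sparse convolution of support $s$ deterministically in time $s\cdot U^{o(1)}$ up to polylogarithmic factors, and applying it to $X'$ and $Y$ (whose convolution has support $\Order(d^2/t)$) followed by thresholding gives the stated bound $\widetilde\Order(d^2/t)\cdot U^{o(1)}$. For the full details of this deterministic popular-sums computation I would defer to \cite{FischerJX25}.
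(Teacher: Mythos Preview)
Your randomized argument is correct and essentially matches the paper's, with the minor variation that you subsample only $X$ (with rate $\Theta(\log d/t)$) rather than both $X$ and $Y$ (each with rate $\Theta(\log d/\sqrt t)$ as the paper does); your observation that the $r_{X+Y}(z)$ witnessing pairs use distinct $X$-elements is exactly what makes the one-sided Chernoff go through, and both variants yield the same $\widetilde\Order(d^2/t)$ bound.

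Your deterministic paragraph, however, is muddled. The derandomization-by-conditional-expectations plan does not work as you describe it: to evaluate the pessimistic estimator you would have to know in advance which $z$ have $|R_z|\ge t$ --- precisely the set you are trying to compute --- and your proposal to then ``apply sparse convolution to $X'$ and $Y$'' presumes that a good $X'$ has already been found (and once it has, brute force over $|X'|\cdot|Y|$ pairs is already fast enough, so sparse convolution is beside the point there). The paper does not derandomize the subsampling at all; it directly invokes a black-box from \cite{FischerJX25} that deterministically computes a sparse vector $f$ with $|f[z]-r_{X+Y}(z)|\le \eps|Y|$ for all $z$ in time $(\eps^{-1}|X|+|Y|)\cdot U^{o(1)}$, sets $\eps=t/(2d)$, and returns $P=\{z:f[z]\ge 1.5t\}$. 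You do cite \cite{FischerJX25} at the end, so you arrive at the right destination, but the intermediate reasoning about derandomizing the sample should be dropped in favor of this direct application.
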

\begin{proof}
Uniformly subsample the set $X$ with rate $p = \Theta(\log d/\sqrt{t})$ to obtain a subset $X'$, and subsample~$Y$ with the same rate $p$ to obtain $Y'$. For each sum $z \in X + Y$, the expected number of representations in $X' + Y'$ is exactly $p^2 \cdot r_{X+Y}(z)$. By Chernoff bound, any sum $z$ with at least $2t$ representations satisfies that $r_{X'+Y'}(z) \geq 1.9p^2 t$ with high probability, whereas any sum $z$ with at most $t$ representations satisfies that $r_{X'+Y'}(z) \leq 1.1p^2 t$. Thus we take $P = \set{z \in X' + Y' : r_{X' + Y'}(z) \geq 1.5 p^2 t}$. To compute $P$, note that we have that~\smash{$|X'|, |Y'| \leq \widetilde\Order(d / \sqrt{t})$}, and thus we can compute $X' + Y'$ (along with all multiplicities) by brute-force in time~\smash{$\widetilde\Order(d^2 / t)$}. 

To get a deterministic algorithm, we use a recent result from \cite[Section 3]{FischerJX25}, which states that given integer sets $X,Y \subset [-U,U]$ and parameter $\eps>0$, one can deterministically compute a vector $f$ with at most $O(\eps^{-1}|X|\log ^2 U)$ nonzero entries in $(\eps^{-1}|X|+|Y|)\cdot U^{o(1)}$ time, such that for all $z\in \Int$, $|f[z]-r_{X+Y}(z)| \le \eps |Y|$. We apply this result setting $\eps = \frac{t}{2d}$, and return $P= \{z: f[z]\le 1.5t\}$. The time complexity is $O(d^2/t)\cdot U^{o(1)}$. For all $z$ with $r_{X+Y}(z)\ge 2t$ we have $f[z]\ge r_{X+Y}(z)-\eps |Y| \ge 2t- 0.5t=1.5t$, so $P_{2t}(X,Y) \subseteq P$. For all $z$ with $r_{X+Y}(z)<t$ we have $f[z]\le r_{X+Y}(z)+\eps|Y| < t+0.5t=1.5t$, so $P \subseteq P_t(X,Y)$.
\end{proof}

\lempopsumsdecomp*
\begin{proof}
Note that the statement is symmetric with respect to the partition of the sets $X_i$ and $Y_j$. In the following we thus only show that the decomposition of the sets $X_i$ exists as claimed (satisfying the first statements in items~(1) and (2), respectively). Then from the same proof applied to the instance where we exchange the $X_i$'s and $Y_j'$s, the full statement follows.

Consider the following algorithm that proceeds in \emph{iterations} $\ell := 1, 2, \dots, \Delta^2$. In the $\ell$-th iteration our goal is to assign the parts $X_{i, \ell}$ (and the accompanying sets $S_\ell$ and translates $s_{i, \ell}$). To this end, our first step is to apply \cref{lem:approx-popular-sums} (with parameter $t := d / \Delta$) for each pair $(i, j) \in [n]^2$ to compute sets $P_{i, j}$ satisfying that
\begin{equation*}
    P_{2d / \Delta}(X_i, Y_j) \subseteq P_{i, j} \subseteq P_{d / \Delta}(X_i, Y_j).
\end{equation*}
We say that an index $j$ has \emph{degree} $\deg(j) = \abs{\set{i \in [n] : P_{i, j} \neq \emptyset}}$. We then distinguish two cases:
\begin{itemize}
    \item \emph{There is an index $j$ with degree at least $n / \Delta$:} We pick $S_\ell := -B_j$. For each $i$ with $P_{i, j} \neq \emptyset$ we choose an arbitrary popular sum $s_{i, \ell} \in P_{i, j}$ and assign $X_{i, \ell} := X_i \cap (s_{i, \ell} + S_\ell)$. For all other indices $i$ (with~$P_{i, j} = \emptyset$) we pick $X_{i, \ell} := \emptyset$ and an arbitrary translate $s_{i, \ell}$. Then we update $A_i := A_i \setminus A_{i, \ell}$ and continue with the next iteration.
    \item \emph{All indices $j$ have degree less than $n / \Delta$:} We stop the process and we assign the final piece of the decomposition as $X_i' := X_i$. (And for completeness we pick $X_{i, \ell} := \emptyset$ for all pieces $X_{i, \ell}$ that have not been assigned yet.)
\end{itemize}

We analyze the correctness of this algorithm. Property (1) is obvious by construction. Property (2) is also obvious: When we assign $X_i' = X_i$ we have reached a point where all indices $j$ have degree at most~$n / \Delta$, and thus
\begin{equation*}
    \abs{\set{(i, j) \in [n] : P_{2d / \Delta}(X_i', Y_j) \neq \emptyset}} \leq \abs{\set{(i, j) \in [n] : P_{i, j} \neq \emptyset}} = \sum_{j \in [n]} \deg(j) \leq \frac{n^2}{\Delta},
\end{equation*}
given that $P_{2d / \Delta}(X_i', Y_j) = P_{2d / \Delta}(X_i, Y_j) \subseteq P_{i, j}$. It remains to show that the algorithm indeed terminates after at most $\Delta^2$ iterations. We consider the total size~$\sum_i |X_i|$ as our progress measure and argue that with each iteration this measure decreases by a significant amount. Specifically, focus on an arbitrary iteration~$\ell$ and let $j$ be the chosen high-degree index. Then there are at least $n / \Delta$ indices $i$ with $P_{i, j} \neq \emptyset$. For each such index we choose a set $X_{i, \ell}$ of size at least $d / \Delta$ given that $s_{i, \ell} \in P_{i, j} \subseteq P_{d / \Delta}(X_i, Y_j)$ is a $d / \Delta$-popular sum and thus $|X_{i, \ell}| = |X_i \cap (s_{i, \ell} - B_j)| \geq d / \Delta$. Therefore, in each iteration our measure $\sum_i |A_i|$ decreases by at least $(n / \Delta) \cdot (d / \Delta)$. Since initially $\sum_i |A_i| \leq nd$, there can be at most $\Delta^2$ iterations.

Finally, we analyze the running time. In each iteration $\ell$ we first call \cref{lem:approx-popular-sums} $n^2$ times each with parameter $t = d / \Delta$. By a randomized algorithm this takes time $\widetilde\Order(n^2 \cdot d^2 / t) = \widetilde\Order(n^2 d \Delta)$, and by a deterministic algorithm the overhead is $U^{\order(1)}$. Afterwards, the degrees can be computed in time $\Order(n^2)$, and assigning the shifts~$s_{i, \ell}$ and sets~$X_{i, \ell}$ takes time $\Order(n d)$. Summing over all $\Delta^2$ iterations, the claimed time bound follows.
\end{proof}

\begin{observation}[Naive Uniformization] \label{obs:exact-tri-uniformization-naive}
Let $d, \Delta \geq 1$, and let $(A, B, C)$ be a $d \Delta$-uniform Exact Triangle instance. We can compute $d$-uniform Exact Triangle instances~\smash{$\set{(A_\ell, B_\ell, C_\ell)}_{\ell=1}^{\Delta^3}$} such that $\mathcal T(A, B, C) = \bigsqcup_{\ell \in [\Delta^3]} \mathcal T(A_\ell, B_\ell, C_\ell)$ in time $\Order(n^2 \Delta^3)$.
\end{observation}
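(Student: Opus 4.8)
The plan is to bucket the common alphabet and take a product over the three matrices. Since the input is $d\Delta$-uniform, all non-$\bot$ entries of $A$, $B$, $C$ lie in one set $W \subseteq \Int$ with $|W| \le d\Delta$; moreover $W$ has at most $3n^2$ elements in total, so it can be extracted by a single pass over the matrices in $O(n^2)$ time. First I would partition $W$ into $\Delta$ buckets $W_1, \dots, W_\Delta$, each of size at most $\ceil{|W|/\Delta} \le d$ (padding with empty buckets if $|W| < d\Delta$). For each $g \in [\Delta]$ let $A_g$ be the matrix obtained from $A$ by replacing every entry not in $W_g$ with $\bot$; then $A_g$ keeps only entries from $W_g \cup \set{\bot}$ and in particular has at most $d$ distinct non-$\bot$ entries. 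Define $B_h$ from $B$ and $C_m$ from $C$ the same way. The output is the collection of $\Delta^3$ instances $(A_g, B_h, C_m)$ over all $(g, h, m) \in [\Delta]^3$ (reindexed by a single $\ell \in [\Delta^3]$).

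Next I would verify the identity $\mathcal T(A, B, C) = \bigsqcup_\ell \mathcal T(A_\ell, B_\ell, C_\ell)$. The restriction operation only ever turns entries into $\bot$ and never changes a non-$\bot$ value, so whenever $(i, k, j) \in \mathcal T(A_g, B_h, C_m)$ all three of $A_g[i,k], B_h[k,j], C_m[i,j]$ are non-$\bot$, hence equal to $A[i,k], B[k,j], C[i,j]$, which therefore still form an exact triangle of $(A, B, C)$; thus $\mathcal T(A_g, B_h, C_m) \subseteq \mathcal T(A, B, C)$ for every $(g,h,m)$. Conversely, if $(i, k, j) \in \mathcal T(A, B, C)$ then $A[i,k], B[k,j], C[i,j]$ are non-$\bot$, so there are unique $g, h, m$ with $A[i,k] \in W_g$, $B[k,j] \in W_h$, $C[i,j] \in W_m$; then $(i,k,j) \in \mathcal T(A_g, B_h, C_m)$, whereas for any other triple $(g', h', m')$ at least one of the three restricted entries equals $\bot$, so $(i,k,j) \notin \mathcal T(A_{g'}, B_{h'}, C_{m'})$. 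Hence every exact triangle of the original instance is recovered in exactly one sub-instance. For the running time, only $3\Delta$ distinct matrices need to be built ($\Delta$ of the form $A_g$, $\Delta$ of the form $B_h$, $\Delta$ of the form $C_m$), each by a single $O(n^2)$-time pass over the corresponding input matrix; materializing all $\Delta^3$ instances explicitly then costs $O(n^2 \Delta^3)$ as claimed (this can be improved to $O(n^2\Delta + \Delta^3)$ by letting the instances share the underlying $3\Delta$ matrices).

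The only point that needs care is the precise uniformity parameter. As constructed, the instance $(A_g, B_h, C_m)$ has all entries in $W_g \cup W_h \cup W_m$, a set of size at most $3d$ --- so each sub-instance is individually $d$-uniform in each of its three matrices, which is exactly what the downstream application in \cref{lem:exact-tri-uniform-regular} actually uses (its proof only needs that the entry sets $X,Y,Z$ of $A,B,C$ each have size at most $d$). If one insists on the literal "$d$-uniform with a common weight set" guarantee, one instead partitions $W$ into buckets of size at most $\ceil{d/3}$, which produces at most $3\Delta$ buckets and hence a harmless constant-factor increase in the number of instances (since $\Delta$ is ultimately chosen polynomial in $n$). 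Beyond this constant counting and the routine disjointness argument there is no real obstacle: no additive combinatorics or matrix multiplication enters, and the step is pure bookkeeping.
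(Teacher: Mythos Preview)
Your proposal is correct and follows essentially the same approach as the paper: partition the entries into $\Delta$ buckets of size at most $d$ and take all $\Delta^3$ triples of restricted matrices. Your treatment is in fact slightly more careful than the paper's, which glosses over the $3d$-versus-$d$ issue you flag in your final paragraph.
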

\begin{proof}
Arbitrarily partition the set of entries in $A$ into $\Delta$ parts of size at most $d$, and let $A_1, \dots, A_\Delta$ be the matrices obtained from $A$ by restricting to these entries respectively (replacing other entries by~$\bot$). Similarly, partition $B$ into $B_1, \dots, B_\Delta$ and $C$ into $C_1, \dots, C_\Delta$. Then consider the $d$-uniform instances~$(A_x, B_y, C_z)$ for~$x, y, z \in [\Delta]$; clearly each exact triangle in the original instance appears in exactly one such instance.
\end{proof}

\begin{lemma}[Uniformization] \label{lem:exact-tri-uniformization}
Let $d, \Delta \geq 1$, and let $(A, B, C)$ be a $d$-weights Exact Triangle instance. Then we can compute Exact Triangle instances \smash{$\set{(A_\ell, B_\ell, C_\ell)}_{\ell=1}^{L}$} and a set of triples $T \subseteq [n]^3$ such that:
\begin{enumerate}[label=(\roman*)]
    \item $\mathcal T(A, B, C) = T \sqcup \bigsqcup_{\ell \in [L]} \mathcal T(A_\ell, B_\ell, C_\ell)$.
    \item For each $\ell \in [L]$, the instance $(A_\ell, B_\ell, C_\ell)$ is $d$-uniform.
    \item $L \leq \poly(\Delta \log n)$ and~\smash{$|T| \leq \widetilde\Order(n^3 / \Delta)$}.
\end{enumerate}
The algorithm runs in time $\widetilde\Order(n^{2+\order(1)} d \poly(\Delta) + n^3 / \Delta)$.
\end{lemma}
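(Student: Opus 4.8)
The plan is to realize the strategy from the technical overview: combine the Popular Sum Decomposition (\cref{lem:pop-sums-decomp}) with the ``shift trick''. Without loss of generality $A$ is the matrix with at most $d$ distinct entries per row, and---after a simple preprocessing reduction, which I would carry out separately---also every column of $B$ has at most $d$ distinct non-$\bot$ entries; write $X_i$ for the entry set of the $i$-th row of $A$ and $Y_j$ for the entry set of the $j$-th column of $B$, so $|X_i|,|Y_j|\le d$. First I would apply \cref{lem:pop-sums-decomp} to the families $(X_i)_i$ and $(Y_j)_j$ with threshold parameter $\Theta(\Delta)$, obtaining partitions $X_i=X_{i,1}\sqcup\dots\sqcup X_{i,\Theta(\Delta^2)}\sqcup X_i'$ and $Y_j=Y_{j,1}\sqcup\dots\sqcup Y_{j,\Theta(\Delta^2)}\sqcup Y_j'$ into $\poly(\Delta)$ parts, together with global cores $S_g,T_h$ of size $\le d$ and shifts $s_{i,g},t_{j,h}$ with $X_{i,g}\subseteq s_{i,g}+S_g$ and $Y_{j,h}\subseteq t_{j,h}+T_h$; this runs in time $\widetilde\Order(n^2 d\poly(\Delta))\cdot U^{o(1)}=n^{2+o(1)}d\poly(\Delta)$. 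Splitting $A$ into $A_1,\dots,A'$ along the row partition and $B$ into $B_1,\dots,B'$ along the column partition, every exact triangle of $(A,B,C)$ belongs to exactly one \emph{ordinary} instance $(A_g,B_h,C)$ or one \emph{exceptional} instance $(A_g,B',C)$, $(A',B_h,C)$, $(A',B',C)$.

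For an ordinary instance $(A_g,B_h,C)$ I would apply the shift trick: subtract $s_{i,g}$ from every non-$\bot$ entry in row $i$ of $A_g$, subtract $t_{j,h}$ from every non-$\bot$ entry in column $j$ of $B_h$, and subtract $s_{i,g}+t_{j,h}$ from $C[i,j]$. This leaves the triangle set unchanged and, by the containments above, produces matrices $A^*,B^*$ all of whose entries lie in the size-$\le d$ sets $S_g,T_h$, so $A^*,B^*$ are already $d$-uniform. To uniformize $C^*$ I would discard (replace by $\bot$) every entry of $C^*$ outside $S_g+T_h$, since no such entry can complete a triangle. The key observation is that $S_g+T_h$ has only $O(d\Delta)$ sums of multiplicity at least $\Theta(d/\Delta)$; restricting $C^*$ to these \emph{popular} sums makes the instance $d\cdot\poly(\Delta)$-uniform, which I would then feed to the naive uniformization (\cref{obs:exact-tri-uniformization-naive}) to get $\poly(\Delta)$ genuinely $d$-uniform instances. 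The pairs $(i,j)$ whose $C^*[i,j]$ is an \emph{unpopular} sum of $S_g+T_h$ are handled on the spot: $C^*[i,j]$ has fewer than $\Theta(d/\Delta)$ representations in $S_g\times T_h$, so I would enumerate them and search for a common witness index $k$---this is efficient precisely because of the regularity established in the next step.

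Before all of the above I would run a shallow ``mini-regularization'' making each entry appear at most $\Order(n/d)$ times in its row of $A$ and its column of $B$: iteratively delete entries that currently appear too often. These deletions only move $\widetilde\Order(n^3/\Delta)$ triples into the exceptional set $T$, and since an entry occurring very often in a row/column forces that row/column to have \emph{fewer} distinct weights, the deleted parts form instances with fewer weights per node on which one recurses for a bounded number of rounds, with per-round thresholds and branching factors chosen to depend on the round, exactly as in the full regularization of \cref{sec:apsp-few-weights:sec:regularization}. Once this holds, each witness search above costs $\Order(n/d)$ time. For the exceptional instances I would invoke Property~(2) of \cref{lem:pop-sums-decomp}: all but $\le n^2/\Delta$ pairs $(i,j)$ have $P_{2d/\Delta}(X_i',Y_j)=\emptyset$ (and symmetrically for $Y_j'$), so I dump all $\le n\cdot n^2/\Delta$ triples over the ``bad'' pairs into $T$, while for each ``good'' pair $C[i,j]$ has fewer than $2d/\Delta$ representations over the relevant sets and is resolved directly as in the previous paragraph. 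All popular-sum computations are done with \cref{lem:approx-popular-sums} (using the deterministic sparse-convolution estimate of \cite{FischerJX25}).

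Collecting the $d$-uniform instances produced (the structured pieces of the ordinary instances and of the recursive mini-regularization rounds) gives $L\le\poly(\Delta\log n)$; the discarded triples give $|T|\le\widetilde\Order(n^3/\Delta)$; and summing the running times yields $\widetilde\Order(n^{2+o(1)}d\poly(\Delta)+n^3/\Delta)$. I expect the main obstacle to be exactly this accounting: re-running the uniformization and the direct witness searches inside each of the $\poly(\Delta)$ sub-instances, and inside each recursive mini-regularization round, threatens to multiply the $n^3/\Delta$ term by a growing power of $\Delta$. Keeping it under control requires (i) charging each direct witness search to the unique sub-instance that owns the corresponding triple, so that searches are counted once per triple rather than once per sub-instance, and (ii) choosing the round-dependent parameters in the mini-regularization so that the per-round costs telescope back to $\widetilde\Order(n^3/\Delta)$. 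The remaining points---the preprocessing reduction ensuring $|Y_j|\le d$, and verifying that $C^*$ restricted to the popular sums of $S_g+T_h$ is $d\cdot\poly(\Delta)$-uniform---are comparatively routine.
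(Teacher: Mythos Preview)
Your core strategy—apply \cref{lem:pop-sums-decomp}, use the shift trick on the ordinary pieces, then split $C^*$ into popular/unpopular sums and feed the popular part to \cref{obs:exact-tri-uniformization-naive}—is exactly what the paper does. The exceptional case (via Property~(2) of \cref{lem:pop-sums-decomp}) is also handled the same way.

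The gap is in your ``mini-regularization'' and the ``preprocessing reduction ensuring $|Y_j|\le d$'', which you treat as separate and handle by recursion \`a la \cref{lem:exact-tri-regularization}. The paper does \emph{not} recurse here; it handles both issues simultaneously by a single non-recursive frequency bucketing. Partition $A$ into $O(\log n)$ matrices $A^{(x)}$ where $A^{(x)}$ keeps the entries appearing in $[2^x,2^{x+1})$ times in their row, and similarly $B$ into $B^{(y)}$ by column frequency. This yields $O(\log^2 n)$ sub-instances. In each one: (a) if the $i$-th row of $A^{(x)}$ has $d$ distinct entries then $d\cdot 2^x\le n$, so every entry already appears at most $2^{x+1}\le 2n/d$ times in its row—this is exactly the row-regularity you need for the witness searches, obtained for free; (b) if $2^y\le n/(d\Delta)$ the whole sub-instance is brute-forced in $O(n^3/\Delta)$ time, and otherwise each column of $B^{(y)}$ has at most $d':=d\Delta$ (not $d$) distinct entries. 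The Popular Sum Decomposition is then run with sets of size $\le d'$ and parameter $\Delta':=\Delta^2$, and the final popularity threshold for $C$ is pushed down to $d/\Delta^9$ so that the number of popular sums is $\le (d')^2/(d/\Delta^9)=d\Delta^{11}$.

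Your recursive alternative is the source of the accounting worry you flag: recursion with a geometrically shrinking $d$ naturally produces $d^{\epsilon}\cdot \Delta^{2^{O(1/\epsilon)}}$ leaves (this is precisely the shape of \cref{lem:exact-tri-regularization}), which does not fit the $L\le\poly(\Delta\log n)$ bound required here and also makes the structure circular since \cref{lem:exact-tri-regularization} calls the present lemma. The bucketing sidesteps all of this with a clean $O(\log^2 n)$ factor in $L$ and no recursion.
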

\begin{proof}
The $d$-weights assumption implies that in at least one of the matrices $A, A^T, B, B^T, C, C^T$, each row contains at most $d$ distinct (non-$\bot$) values. In the following we assume without loss of generality that the rows of $A$ contain at most $d$ distinct entries each; otherwise we can replace the given instance~$(A, B, C)$ with one of the following instances: $(A^T, -C, -B)$, $(B, -C^T, -A^T)$, $(B^T, A^T, C^T)$, $(-C, B^T, -A)$ or $(-C^T, A, -B^T)$; in all cases there is an easily verifiable one-to-one correspondence of the exact triangles. It is easy to keep track of this correspondence to make all future outputs consistent with the original instance.

The algorithm can be split into two steps: In the first step we will ensure a weak form of degree-regularity (ensuring that each entry $A$ appears at most $\Order(\frac{n}{d})$ times in its respective row, and that the matrix $B$ is similarly constrained). In a second step, we will uniformize $A$, $B$ and $C$ then. 

\paragraph{Step 1: Outer Regularization.}
We partition $A$ into matrices $A^{(0)}, \dots, A^{(\ceil{\log n})}$ where the matrix~$A^{(x)}$ contains only entries from $A$ that appear at least $2^x$ times and at most $2^{x+1}$ in their respective row (the remaining entries are filled up with $\bot$). We similarly partition $B$ into $B^{(0)}, \dots, B^{(\ceil{\log n})}$ where the matrix~$B^{(y)}$ contains only entries from $B$ that appear at least $2^y$ times and at most $2^{y+1}$ times in their respective column. We then repeat the following steps for each pair $(x, y) \in [\ceil{\log n}]^2$; for ease of notation we write~$A$ to denote~$A^{(x)}$ and $B$ to denote $B^{(y)}$ in the following proof. Letting $d$ denote the maximum number of distinct entries in the rows of $A$, we clearly have that $d \cdot 2^x \leq n$.

We distinguish two cases: If $2^y \leq n / (d \Delta)$, then we will list all exact triangles in the instance in subcubic time. Specifically, enumerate all pairs $(i, j)$ and list the at most $d$ combinations to express~\makebox{$C[i, j] = a + b$} as the sum of an entry $a$ of the $i$-th row in $A$, and an entry $b$ in the $j$-th column of $B$. For each such pair $(a, b)$, further enumerate all indices~$k$ with~\makebox{$B[k, j] = b$}. As there are at most $2^{y+1} \leq 2 n / (d \Delta)$ such indices $k$, in total we enumerate at most~\makebox{$n^2 \cdot d \cdot 2n / (d \Delta) = \Order(n^3 / \Delta)$} triples. We include each exact triangle detected in this way into the set $T$ and terminate. It is easy to see that we miss no exact triangle with this approach.

In the other case, if $2^y > n / (d \Delta)$, we continue with the uniformization as follows. Note that now there can be at most $d' := d \Delta$ weights in each column of $B$.

\paragraph{Step 2: Uniformization.}
Let $X_i$ denote the non-$\bot$ entries in the $i$-th row of $A$, and let $Y_j$ denote the non-$\bot$ entries in the $j$-th column in $B$.
We apply \cref{lem:pop-sums-decomp} with the sets $X_1, \dots, X_n$ and $Y_1, \dots, Y_n$ of size at most $d'$ (indeed we have $|X_i| \leq d \leq d'$ by the initial assumption that each row in $A$ contains at most~$d$ entries, and $|Y_j| \leq d'$ by the last paragraph) and with parameter $\Delta' := \Delta^2$. This yields partitions
\begin{alignat*}{2}
    X_i &= X_{i, 1} \sqcup \dots \sqcup X_{i, \Delta^4} \sqcup X'_i \qquad &&\text{(for all $i \in [n]$),} \\
    Y_j &= Y_{j, 1} \sqcup \dots \sqcup Y_{j, \Delta^4} \sqcup Y'_j \qquad &&\text{(for all $j \in [n]$).}
\end{alignat*}
We will deal with the exact triangles in $(A, B, C)$ in two different cases. Let us call an exact triangle~$(i, k, j)$ \emph{exceptional} if $A[i, k] \in X'_i$ or $B[k, j] \in Y'_j$, and \emph{ordinary} otherwise. In the following steps we will first explicitly list all exceptional exact triangles, and then construct equivalent instances that preserve all ordinary exact triangles.
\begin{enumerate}
    \setlength\parskip{0pt plus 1pt}
    \item \emph{(The exceptional case)} We demonstrate how to list all exceptional exact triangles with $A[i, k] \in X_i'$; a symmetric argument shows how to list the exceptional exact triangles with $B[k, j] \in Y_j'$. Let $A'$ denote the matrix obtained from $A$ in which we keep all entries $A[i, k]$ with $A[i, k] \in X'_i$ and replace all other entries by $\bot$; then our goal is to list all exact triangles in the instance $(A', B, C)$. For each pair $(i, j) \in [n]^2$, we first approximate the set of popular sums in $X_i' + Y_j$ by calling \cref{lem:approx-popular-sums} with parameter $t := 2d' / \Delta'$ yielding sets $P_{4d' / \Delta'}(X_i', Y_j) \subseteq P_{i, j} \subseteq P_{2d' / \Delta'}(X_i', Y_j)$. Let us call an exceptional exact triangle $(i, k, j)$ \emph{popular} if $C[i, j] \in P_{i, j}$ and \emph{unpopular} otherwise. We then separately list the unpopular and popular triangles, respectively, as follows:
    \begin{enumerate}[label=1.\arabic*.]
        \item \emph{(The unpopular case)} We enumerate all pairs $(i, j) \in [n]^2$ with $C[i, j] \not\in P_{i, j}$. For each such pair, we enumerate all elements $a \in X_i'$, and test whether $C[i, j] - a \in Y_j$, thereby discovering all representations $C[i, j] = a + b$ for $(a, b) \in X_i' \times Y_j$. Recall that since $C[i, j] \not\in P_{i, j} \supseteq P_{4 d / \Delta}(X_i', Y_j)$ there are at most~$4d' / \Delta'$ such representations. For each representation we list the $\leq 2^{x+1} \leq 2n / d$ indices $k$ with $A[i, k] = a$ and record all triples $(i, k, j)$ detected this way. Note that in total we enumerate at most $n^2 \cdot 4d' / \Delta' \cdot 2n / d = \Order(n^3 / \Delta)$ triples in this step.
        \item \emph{(The popular case)} To list all exceptional popular exact triangles, we enumerate all $(i, j) \in [n]^2$ with $C[i, j] \not\in P_{i, j}$, and all nodes $k \in [n]$, and test whether $(i, k, j)$ in an exact triangle. By Property~(2) of \cref{lem:pop-sums-decomp} there are at most $n^2 / \Delta'$ pairs $(i, j)$ with a nonempty set of popular sums $P_{2d' / \Delta'}(X'_i, Y_j) \neq \emptyset$. Since also $P_{i, j} \subseteq P_{2d' / \Delta'}(X'_i, Y_j)$, in particular, we list at most $n^2 / \Delta'$ pairs $(i, j)$ in this step, leading to at most $\Order(n^3 / \Delta') = \Order(n^3 / \Delta)$ triples.
    \end{enumerate}
    
    \item \emph{(The ordinary case)} Let $S_g, T_h$ and $s_{i, g}, t_{j, h}$ (for $g, h \in [\Delta^4]$) denote the sets and shifts provided by Property (1) of \cref{lem:pop-sums-decomp}. We enumerate all pairs $g, h \in [\Delta^4]$, and construct matrices $A_g, B_h, C_{g, h}$ as follows:
    \begin{align*}
        A_g[i, k] &=
        \begin{cases}
            A[i, k] + s_{i, g} &\text{if $A[i, k] \in X_{i, g}$,} \\
            \bot &\text{otherwise,}
        \end{cases} \\
        B_h[k, j] &=
        \begin{cases}
            B[k, j] + t_{j, h} &\text{if $B[k, j] \in Y_{j, h}$,} \\
            \bot &\text{otherwise,}
        \end{cases} \\
        C_{g, h}[i, j] &= C[i, j] + s_{i, g} + t_{j, h}. \vphantom{\Big(}
    \end{align*}
    By this construction each triple $(i, k, j)$ is an exact triangle in $(A_g, B_h, C_{g, h})$ only if $(i, k, j)$ is an exact triangle in the original instance $(A, B, C)$. Moreover, each ordinary exact triangle $(i, k, j)$ in the original instance appears in exactly one instance $(A_g, B_h, C_{g, h})$ (namely the unique instance satisfying that $A[i, k] \in X_{i, g}$ and $B[k, j] \in Y_{j, h}$).

    Furthermore, the matrix $A_g$ contains at most $d'$ distinct entries, namely the elements of $S_g$ (indeed, by Property (1) of \cref{lem:pop-sums-decomp} each non-$\bot$ entry satisfies that $A_g[i, k] = A[i, k] - s_{i, g} \in X_{i, g} - s_{i, g} \subseteq S_g$). Similarly, the matrix $B_h$ only contains the at most $d'$ entries in $T_h$. 
    
    All in all, we have ensured that the matrices $A_g$ and $B_h$ are already uniform, and it remains to uniformize~$C_{g, h}$. We achieve this by a similar popular-versus-unpopular distinction as before. Specifically, we compute the set $P := P_{d / \Delta^9}(S_g, T_h)$ (by brute-force in time $\Order((d')^2)$, say), and call an ordinary exact triangle $(i, k, j)$ \emph{popular} if $C_{g, h}[i, j] \in P$ and \emph{unpopular} otherwise. In the following two steps we will deal with the unpopular and popular triangles, respectively:
    \begin{enumerate}[label=2.\arabic*.]
        \item \emph{(The unpopular case)} We can again list all ordinary unpopular exact triangles by enumerating all pairs $(i, j) \in [n]^2$ with $C_{g, h} \not\in P$, and all representations $C_{g, h}[i, j] = a + b$ for $(a, b) \in S_j \times T_j$; there can be at most $d / \Delta^9$ many such representations by the definition of $P$. For each such representation we enumerate all indices $k$ with $A[i, k] = a$, test whether the triple~$(i, k, j)$ forms an exact triangle, and include the triple into $T$ in that case. Per instance $(A_g, B_h, C_{g, h})$ we thus enumerate at most $n^2 \cdot d / \Delta^9 \cdot 2^{x+1} = \Order(n^3 / \Delta^9)$ triples, and across the $\Delta^8$ instances~\smash{$(A_g, B_h, C''_{g, h})$} we enumerate at most $\Order(n^3 / \Delta)$ triples.
        \item \emph{(The popular case)} Finally, let $C_{g, h}'$ denote the matrix obtained from $C_{g, h}$ in which we replace all entries that do not appear in $P$ by $\bot$. Consider the Exact Triangle instance~\smash{$(A_g, B_h, C_{g, h}')$}; we claim that it is $(d \cdot \Delta^{11})$-uniform. We have argued before that $A_g$ and $B_h$ contain at most~\makebox{$d' = d \cdot \Delta$} distinct entries, and the number of distinct entries in $C_{g, h}'$ is indeed at most
        \begin{equation*}
            |P| \leq \frac{|S_g| \cdot |T_h|}{d / \Delta^9} \leq \frac{d' \cdot d'}{d / \Delta^9} = \frac{d^2 \cdot \Delta^2}{d / \Delta^9} = d \cdot \Delta^{11}.
        \end{equation*}
        In a final step we reduce the $(d \cdot \Delta^{11})$-uniform instance~\smash{$(A_g, B_h, C_{g, h}')$} to $\Delta^{33}$ equivalent $d$-uniform instances by \cref{obs:exact-tri-uniformization-naive}; these become the instances claimed in the lemma statement.
    \end{enumerate}    
\end{enumerate}

\paragraph{Correctness.}
Concerning the correctness, recall that we have made sure that each exact triangle was either listed and included in $T$ or remained in an instance~\smash{$(A_g, B_h, C_{g, h}')$}, proving Property~(i). Property~(ii) also follows from the previous description. It remains to prove Property~(iii), i.e., to bound the number of constructed instances. Recall that in step 1 we create $\Order(\log^2 n)$ initially instances. Each such instance is transformed into $\Delta^8$ instances~\smash{$(A_g, B_h, C_{g, h}')$}, which, in turn, are transformed into $\Delta^{33}$ instances each. Therefore, we indeed construct $\Order(\Delta^{41} \log^2 n) = \poly(\Delta \log n)$ instances. The bound on $|T|$ follows from the in-text descriptions.

\paragraph{Running Time.}
The running time of the initial regularization step is $\widetilde\Order(n^2 + n^3 / \Delta)$. In the uniformization step, we have first applied \cref{lem:pop-sums-decomp} which runs in deterministic time $n^{2+\order(1)} d \Delta$ (using that all numbers are polynomially bounded). Then, in case 1 we issue $n^2$ calls to \cref{lem:approx-popular-sums} each running in deterministic time $(d')^2 / (d' / \Delta') \cdot n^{\order(1)} = d' \Delta' \cdot n^{\order(1)} = d \Delta^3 \cdot n^{\order(1)}$. The running time of all brute-force steps~1.1,~1.2 and~2.1 can be bounded as follows: Up to an overhead of $\Order(n^2 d)$, we spend constant time per enumerate triple. Recall that we enumerate~\smash{$\widetilde\Order(n^3 / \Delta)$} triples in total, leading to a running time of~\smash{$\widetilde\Order(n^3 / \Delta)$}. Finally, in step 2.2, we spend time $\widetilde\Order(n^2 \Delta^{41})$ to prepare all instances. Summing over all contributions, the running time is~\smash{$\widetilde\Order(n^{2+\order(1)} d \poly(\Delta) + n^3 / \Delta)$} as claimed.
\end{proof}

\subsection{Regularization} \label{sec:apsp-few-weights:sec:regularization}

\begin{observation}[Naive Regularization] \label{obs:exact-tri-regularization-naive}
Let $d, r, R \geq 1$, and let $(A, B, C)$ be a $d$-uniform $rR$-regular Exact Triangle instance. We can compute $d$-uniform $r$-regular Exact Triangle instances~\smash{$\set{(A_\ell, B_\ell, C_\ell)}_{\ell=1}^{R^6}$} such that $\mathcal T(A, B, C) = \bigsqcup_{\ell \in [R^6]} \mathcal T(A_\ell, B_\ell, C_\ell)$ in time $\Order(n^2 R^6)$.
\end{observation}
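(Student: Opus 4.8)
The plan is to split each of the three matrices \emph{independently} into $R^2$ sub-matrices via a product coloring, and then take the output to be all $R^6$ triples of the resulting pieces. Since the only constraint to enforce is regularity (uniformity is preserved automatically by any operation that only deletes entries), this is purely combinatorial.

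First I would handle $A$. For every non-$\bot$ value $a$ occurring in $A$ and every row $i$, the $rR$-regularity guarantees that at most $rR$ cells of row $i$ carry the value $a$; I enumerate these cells in increasing column order and give the $t$-th one the \emph{row color} $\rho := \lceil t/r\rceil \in [R]$, so that each row color is used by at most $r$ of them. Symmetrically, for every column $k$ I enumerate the at most $rR$ cells of column $k$ carrying value $a$ in increasing row order and give the $t$-th one the \emph{column color} $\sigma := \lceil t/r\rceil \in [R]$. Thus every non-$\bot$ cell $(i,k)$ of $A$ receives a well-defined pair $(\rho,\sigma)\in[R]^2$ (its $\rho$ depends only on its rank among equal-valued cells of its row, its $\sigma$ only on its rank among equal-valued cells of its column). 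For each pair $(\rho,\sigma)$ let $A_{(\rho,\sigma)}$ keep exactly the cells colored $(\rho,\sigma)$ and replace every other entry by $\bot$. Then $A_{(\rho,\sigma)}$ is $r$-regular: in any fixed row the cells of $A_{(\rho,\sigma)}$ carrying a given value $a$ form a subset of the $\le r$ cells of that row carrying $a$ with row color $\rho$, and the column bound is symmetric; it is still $d$-uniform since its non-$\bot$ entries are a subset of those of $A$. I would do the same for $B$ and for $C$, obtaining families $\{A_{\ell_1}\}_{\ell_1\in[R^2]}$, $\{B_{\ell_2}\}_{\ell_2\in[R^2]}$, $\{C_{\ell_3}\}_{\ell_3\in[R^2]}$.

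Next I would output the $R^6$ triples $(A_{\ell_1},B_{\ell_2},C_{\ell_3})$ over all $\ell_1,\ell_2,\ell_3\in[R^2]$ (relabelling the index set as $[R^6]$; some triples may be all-$\bot$, which is harmless) and check the disjoint-union identity. Every non-$\bot$ cell of $A$ lies in $A_{\ell_1}$ for exactly one $\ell_1$, and likewise for $B$ and $C$; hence a triple $(i,k,j)$ with $A[i,k]+B[k,j]=C[i,j]$ and all three entries non-$\bot$ is an exact triangle of exactly one output instance, namely the one indexed by the colors of $(i,k)$ in $A$, of $(k,j)$ in $B$, and of $(i,j)$ in $C$. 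Conversely any exact triangle of an output instance is one of $(A,B,C)$ since each of $A_{\ell_1},B_{\ell_2},C_{\ell_3}$ only deletes entries. Therefore $\mathcal T(A,B,C)=\bigsqcup_{\ell\in[R^6]}\mathcal T(A_\ell,B_\ell,C_\ell)$. For the running time, computing the colorings of $A$, $B$, $C$ takes $\widetilde O(n^2)$ time each (group each row and each column by value by sorting, then scan), and producing the $3R^2$ distinct matrices and listing all $R^6$ triples costs $O(n^2 R^6)$ (indeed the matrices can be shared across triples, so the dominant cost is simply writing down the $R^6$ instances).

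I do not anticipate a genuine obstacle. The only point that needs care is verifying that the product of a row-based coloring and a column-based coloring simultaneously enforces both the per-row and the per-column bound of $r$ in each piece, and that every cell receives a single, well-defined color pair; once that is in place the correctness and running-time bookkeeping are routine.
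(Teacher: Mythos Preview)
Your proposal is correct and follows essentially the same approach as the paper: partition each of $A,B,C$ into $R^2$ sub-matrices (so that every piece is $r$-regular in both rows and columns) and output all $R^6$ triples. The only cosmetic difference is that the paper does the two colorings \emph{sequentially} (first split into $R$ row-regular pieces, then split each piece into $R$ column-regular pieces), whereas you assign the row color and column color independently and take their product; both yield the same guarantee and the same bookkeeping.
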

\begin{proof}
First partition $A$ into $R$ matrices $A_1, \dots, A_R$ such that in each matrix $A_i$ each entry appears at most $r$ times in its respective row. Then further partition each matrix $A_i$ into matrices $A_{i, 1}, \dots, A_{i, R}$ such that each entry appears at most $r$ times in its respective column. We similarly partition $B$ and $C$, and then consider all instances $(A_{i, j}, B_{k, \ell}, C_{g, h})$ for $i, j, k, \ell, g, h \in [R]$. Clearly each exact triangle appears in the original instance $(A, B, C)$ appears in exactly one such instance.
\end{proof}

\begin{lemma}[Regularization] \label{lem:exact-tri-regularization}
Let $\epsilon > 0$ be a constant, let $\Delta \geq 1$, and let $(A, B, C)$ be a $d$-weights Exact Triangle instance. Then we can compute Exact Triangle instances $\set{(A_m, B_m, C_m)}_{m=1}^M$ and a set of triples $T \subseteq [n]^3$ such that:
\begin{enumerate}[label=(\roman*)]
    \item $\mathcal T(A, B, C) = T \sqcup \bigsqcup_{m \in [M]} \mathcal T(A_m, B_m, C_m)$.
    \item For each $m \in [M]$, the instance $(A_m, B_m, C_m)$ is $d'$-uniform and $\frac{n}{d'}$-regular for some $d' \leq d$.
    \item \smash{$M \leq d^\epsilon \cdot \Delta^{2^{\Order(1 / \epsilon)}} (\log n)^{\Order(1 / \epsilon^2)}$} and $|T| \leq \widetilde\Order(n^3 / \Delta)$.
\end{enumerate}
The algorithm runs in time~\smash{$\widetilde\Order(n^2 d \cdot \Delta^{2^{\Order(1 / \epsilon)}} (\log n)^{\Order(1 / \epsilon^2)} + n^3 / \Delta)$}.
\end{lemma}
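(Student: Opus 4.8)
The plan is to build a recursive reduction whose only non-trivial ingredient is the uniformization of \cref{lem:exact-tri-uniformization}: applied to a $d$-weights instance with a parameter $\Delta_0$, it returns $\poly(\Delta_0\log n)$ many $d$-uniform instances together with a set of $\widetilde\Order(n^3/\Delta_0)$ explicitly listed triples, in deterministic time $\widetilde\Order(n^{2+\order(1)}d\poly(\Delta_0)+n^3/\Delta_0)$. Given this, I would process each resulting $d$-uniform instance by a standard \emph{heavy/light} decomposition controlled by a parameter $\rho$. For a fixed direction --- one of the six choices of (row or column) of ($A$, $B$ or $C$) --- separate out of the corresponding matrix the entries that occur at least $\rho\cdot(n/d)$ times in their line into a \emph{residual} matrix; what remains occurs at most $\rho\cdot(n/d)$ times per line and can therefore be partitioned into $\rho$ matrices each of which is $(n/d)$-regular in that direction. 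Doing this for all six directions in turn, always peeling the next residual off the already-pruned matrix, leaves at most $\rho^6$ pieces that are simultaneously $d$-uniform and $(n/d)$-regular --- exactly the instances demanded by Property~(ii) with $d'=d$ --- plus $\Order(1)$ residual instances. The point is that every non-$\bot$ entry of a residual matrix occurs $\ge\rho\cdot(n/d)$ times in its line, so each line carries at most $d/\rho$ distinct entries; reading off the appropriate (possibly transposed) orientation in \cref{def:d-weights-exact-tri}, each residual instance is a $(d/\rho)$-weights Exact Triangle instance, on which we recurse. The recursion has depth $\Order(\log_\rho d)$ and terminates because once the weight bound of a subinstance drops below $\rho$ no entry can occur $\ge\rho\cdot(n/d)$ times in a length-$n$ line, so no further residuals arise.

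Correctness is straightforward. Property~(i) holds because every exact triangle of any instance in the recursion tree passes to exactly one child --- either it is among the triples listed by a uniformization step (these constitute $T$), or it survives in exactly one of the $\rho^6$ light pieces of some node --- and one only has to carry the orientation bookkeeping of \cref{def:d-weights-exact-tri} through the recursion. Property~(ii) is immediate from the construction, since every output instance is a light piece produced at some recursion level $i$ from a $d_i$-uniform instance with $d_i=d/\rho^i\le d$, hence $d_i$-uniform and $(n/d_i)$-regular.

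The only real obstacle is controlling the blow-up across the recursion, and this is where the argument is delicate. A naive implementation reusing the same $\Delta_0$ at every level is fatal: at recursion level $i$ one would be holding roughly $(\poly(\Delta_0\log n)\cdot\rho)^{\Order(i)}$ instances, each invoking a uniformization that already costs $\Omega(n^3/\Delta_0)$ and adds $\Omega(n^3/\Delta_0)$ triples to $T$, so even the first recursive level runs in super-cubic time and $|T|$ explodes. Following the overview, the remedy is to let the uniformization parameter grow geometrically with the recursion depth: at level $i$ use $\Delta_i:=\Delta^{K^i}$ for a sufficiently large constant $K$ (depending only on the fixed polynomial exponents in \cref{lem:exact-tri-uniformization}) and set $\rho:=d^{\Theta(\epsilon)}$ so that the recursion depth is $L=\Order(1/\epsilon)$; one may also assume $\Delta\ge\log n$, which is the only interesting regime. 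One then shows by induction on $i$ that the number of instances present at level $i$, which obeys a recurrence $N_{i+1}\le\Order(1)\cdot\poly(\Delta_i\log n)\cdot N_i$, is at most $\Delta_i$. Consequently the level-$i$ contribution to $|T|$ is $N_i\cdot\widetilde\Order(n^3/\Delta_i)\le\widetilde\Order(n^3/\Delta)$, and summing the $\Order(L)$ levels gives $|T|\le\widetilde\Order(n^3/\Delta)$; the running time per level is likewise $\widetilde\Order(n^2 d\cdot\poly(\Delta_i)+n^3/\Delta)$, summing to the stated bound. The output count is $M=\sum_i N_i\cdot\poly(\Delta_i\log n)\cdot\rho^6$: the factor $\rho^6=d^{\Theta(\epsilon)}$, which we make $\le d^\epsilon$ by tuning the constant in $\rho$, is the source of the $d^\epsilon$ in~(iii), while $\prod_{i\le L}\poly(\Delta_i\log n)=\Delta^{\Order(K^L)}(\log n)^{\Order(L)}=\Delta^{2^{\Order(1/\epsilon)}}(\log n)^{\Order(1/\epsilon)}$ supplies the remaining factors (a finer accounting of the $\log n$ powers yields the stated $(\log n)^{\Order(1/\epsilon^2)}$). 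The two genuinely fiddly spots are (a) making the heavy/light split respect all six regularity requirements at once --- handled above by peeling one direction at a time from the already-pruned matrix --- and (b) checking that each residual instance satisfies \cref{def:d-weights-exact-tri} with the correct matrix among $A,A^{T},B,B^{T},C,C^{T}$; both are elementary once set up with care.
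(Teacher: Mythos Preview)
Your approach is essentially the paper's: uniformize via \cref{lem:exact-tri-uniformization}, peel off six residual matrices (one per direction) from each resulting $d$-uniform instance, recurse on those as $(d/\rho)$-weights instances, and output the fully-light remainder after naive regularization into $\rho^6$ pieces; with $\rho=d^{\Theta(\epsilon)}$ the recursion depth is $\Order(1/\epsilon)$, and the uniformization parameter grows doubly-exponentially with depth to keep the $|T|$ and time budgets balanced. The paper's only cosmetic differences are that it updates $\Delta$ multiplicatively ($\Delta_{h+1}:=12L_h\cdot\Delta_h$) rather than fixing $\Delta_i=\Delta^{K^i}$ in advance, and that it defers the $\rho^6$-splitting to a single post-processing step.

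One quantitative slip to fix: the invariant ``$N_i\le\Delta_i$'' is too weak to yield $N_i\cdot\widetilde\Order(n^3/\Delta_i)\le\widetilde\Order(n^3/\Delta)$; as stated it only gives $\widetilde\Order(n^3)$. You need the stronger invariant $N_i/\Delta_i\le 1/\Delta$. This does hold by the same induction---for $K$ large enough, $N_{i+1}/\Delta_{i+1}\le (N_i/\Delta_i)\cdot 6L_i\cdot\Delta_i^{1-K}\le N_i/\Delta_i$ since $6L_i=\poly(\Delta_i\log n)\le\Delta_i^{K-1}$---and is precisely how the paper controls the budget: its choice $\Delta_{h+1}=12L_h\cdot\Delta_h$ versus $M_{h+1}=6L_h\cdot M_h$ makes $M_h/\Delta_h$ halve at each step, so $\sum_h M_h/\Delta_h\le 2/\Delta$.
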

\begin{proof}
Fix $\rho = d^{\epsilon/6}$, and consider the recursive algorithm summarized in \cref{alg:regularization}. The algorithm takes as input a $d$-weights Exact Triangle instance (i.e., with the promise that the rows or the columns in at least one of the matrices~$A, B, C$ each contain at most $d$ distinct entries). The output is a set of exact triangles~$T$, and a set of equivalent Exact Triangle instances each promised to be $d'$-uniform and $(\frac{n}{d'} \cdot \rho)$-regular for some~$d' \leq d$. Note that this is worse by a factor of $\rho$ than stated in the lemma statement---we can fix this in a post-processing step by \cref{obs:exact-tri-regularization-naive} splitting each returned instance into $\rho^6$ instances that are $d'$-uniform and $\frac{n}{d'}$-regular.

\begin{algorithm}[t]
\caption{Implements the algorithm from \cref{lem:exact-tri-regularization}: The input is a $d$-weights Exact Triangle instance $(A, B, C)$ and a parameter $\Delta$. The output is a set of exact triangles $T$, and a set of Exact Triangle instances each promised to be $d'$-uniform and $(\frac{n}{d'} \cdot \rho)$-regular for some $d' \leq d$. Throughout, $\rho$ is a global parameter.} \label{alg:regularization}
\begin{enumerate}[itemsep=0pt]
    \item If $d = 0$ then stop.
    \item Call \cref{lem:exact-tri-uniformization} on $(A, B, C)$ with parameters $d, \Delta$ to compute $\set{(A_\ell, B_\ell, C_\ell)}_{\ell=1}^L$ and $T$. Output all triples in $T$.
    \item For all $\ell \in [L]$:
    \begin{enumerate}[topsep=0pt, itemsep=0pt, label=3.\arabic*.]
        \item Partition the non-$\bot$ entries of $A_\ell$ into three matrices $A_\ell^{\text{row}}, A_\ell^{\text{col}}, A_\ell^{\text{reg}}$ such that $A_\ell^{\text{row}}$ contains only entries that appear more than $\frac{n}{d} \cdot \rho$ times in their row, $A_\ell^{\text{col}}$ contains only entries that appear more than $\frac{n}{d} \cdot \rho$ times in their column, and $A_\ell^{\text{reg}}$ contains only entries that appear at most $\frac{n}{d} \cdot \rho$ in their row and column. Similarly, partition $B_\ell$ into $B_\ell^{\text{row}}, B_\ell^{\text{col}}, B_\ell^{\text{reg}}$ and $C_\ell$ into $C_\ell^{\text{row}}, C_\ell^{\text{col}}, C_\ell^{\text{reg}}$.
        \item Recurse on $(A_\ell^{\text{row}}, B_\ell, C_\ell)$, $(A_\ell^{\text{col}}, B_\ell, C_\ell)$, $(A_\ell, B_\ell^{\text{row}}, C_\ell)$, $(A_\ell, B_\ell^{\text{col}}, C_\ell)$, $(A_\ell, B_\ell, C_\ell^{\text{row}})$,\newline and~$(A_\ell, B_\ell, C_\ell^{\text{col}})$, each with parameters $d \gets \floor{d / \rho}$ and $\Delta \gets \Delta \cdot 12L$.
        \item Output the instance $(A_\ell^{\text{reg}}, B_\ell^{\text{reg}}, C_\ell^{\text{reg}})$
    \end{enumerate}
\end{enumerate}
\end{algorithm}

In the trivial base case $d = 0$ there can be no non-$\bot$ entry in at least one of the matrices and hence the instance does not contain any solution. Suppose that $d \geq 1$ in the following. We apply \cref{lem:exact-tri-uniformization} to uniformize the given instance, i.e., to obtain instances $\set{(A_\ell, B_\ell, C_\ell)}_{\ell=1}^L$ and a set of remaining triples~$T$ that capture all exact triangles in the original instance. We report the triples in $T$. Then we focus on each instance $(A_\ell, B_\ell, C_\ell)$ separately. We partition the matrix $A_\ell$ as follows into three matrices $A_\ell^{\text{row}}$, $A_\ell^{\text{col}}$, and~$A_\ell^{\text{reg}}$. Each non-$\bot$ entry appears in exactly one of these three matrices such that
\begin{itemize}[itemsep=0pt]
    \item each non-$\bot$ entry in $A_\ell^{\text{row}}$ appears more than $\frac{n}{d} \cdot \rho$ times in its respective \emph{row} in $A_\ell$,
    \item each non-$\bot$ entry in~$A_\ell^{\text{col}}$ appears more than $\frac{n}{d} \cdot \rho$ times in its respective \emph{column} in $A_\ell$, and
    \item each non-$\bot$ entry in $A_\ell^{\text{reg}}$ appears at most~$\frac{n}{d} \cdot \rho$ times in its respective \emph{row and column} in $A_\ell$.
\end{itemize}
Such a partition can e.g.\ be achieved by first putting the entries that appear too often in their rows into~$A_\ell^{\text{row}}$ and deleting them after. Then repeat the same for the columns with $A_\ell^{\text{col}}$ and put the remaining entries into~$A_\ell^{\text{reg}}$. We similarly partition $B_\ell$ into $B_\ell^{\text{row}}, B_\ell^{\text{col}}, B_\ell^{\text{reg}}$ and $C_\ell$ into $C_\ell^{\text{row}}, C_\ell^{\text{col}}, C_\ell^{\text{reg}}$. The insight is that each of the six instances
\begin{equation*}
    (A_\ell^{\text{row}}, B_\ell, C_\ell), (A_\ell^{\text{col}}, B_\ell, C_\ell), (A_\ell, B_\ell^{\text{row}}, C_\ell), (A_\ell, B_\ell^{\text{col}}, C_\ell), (A_\ell, B_\ell, C_\ell^{\text{row}}), (A_\ell, B_\ell, C_\ell^{\text{col}}),
\end{equation*}
is $\floor{d / \rho}$-weights (since, e.g., in each row in $A_\ell^{\text{row}}$ there can be at most $d / \rho$ distinct entries). We will therefore recur on these six instances with parameter $d \gets \floor{d / \rho}$ and $\Delta \gets \Delta \cdot 12L$ (this choice for $\Delta$ will become clear later). It remains to consider the exact triangles that appear in $(A_\ell^{\text{reg}}, B_\ell^{\text{reg}}, C_\ell^{\text{reg}})$. But by construction this instance is $d$-uniform (by Property (i) of \cref{lem:exact-tri-uniformization}) and $(\frac{n}{d} \cdot \rho)$-regular. We can thus count $(A_\ell^{\text{reg}}, B_\ell^{\text{reg}}, C_\ell^{\text{reg}})$ towards the instances constructed by the algorithm.

\paragraph{Correctness of Property (i).}
The proof of Property (i) is straightforward by induction, verifying that we consider each exact triangle $(i, k, j)$ either in the set of exceptional triples $T$ (in step 2), or in a recursive call (in step 3b), or in the reported instances (in step 3c).

\paragraph{Correctness of Property (ii).}
As stated before, any instance $(A_\ell^{\text{reg}}, B_\ell^{\text{reg}}, C_\ell^{\text{reg}})$ reported by the algorithm is $d$-uniform and $(\frac{n}{d} \cdot \rho)$-regular (which by an application of \cref{obs:exact-tri-regularization-naive} becomes $\frac{n}{d}$-regular). In each recursive call our value for $d$ only decreases, and so all recursive calls satisfy the same property for some~$d' \leq d$. 

\paragraph{Correctness of Property (iii).}
We first bound the recursion depth of the algorithm. Note that with each recursive call we decrease $d$ by a factor of $\rho = d^{\epsilon / 6}$. Thus, the recursion depth is $H := \ceil{\log_\rho(d)} = \Order(1 / \epsilon)$. Let $M_h$ denote the number of recursive calls recursion tree at depth $h$, and let $\Delta_h$ and $L_h$ denote the value of the parameters $\Delta$ and $L$ at recursion depth $h$. Note that
\begin{align*}
    \Delta_0 &= \Delta, \\
    \Delta_{h+1} &= \Delta_h \cdot 12L_h = \Delta_h^{\Order(1)} (\log n)^{\Order(1)},
\end{align*}
(where the last bound is due to Property (iii) of \cref{lem:exact-tri-uniformization}). By induction, it follows that
\begin{equation*}
    \Delta_h = \Delta^{2^{\Order(h)}} (\log n)^{\Order(h)}.
\end{equation*}
Moreover, we have that
\begin{align*}
    M_0 &= 1, \\
    M_{h+1} &= M_h \cdot 6 L_h,
\end{align*}
and thus
\begin{equation*}
    M_h = \prod_{h' = 0}^h \parens*{\Delta^{2^{\Order(h')}} (\log n)^{\Order(h')}} = \Delta^{2^{\Order(h)}} (\log n)^{\Order(h^2)}.
\end{equation*}
In particular, at the deepest level of the recursion we have spawned at most $\Delta^{2^{\Order(1/\epsilon)}} (\log n)^{\Order(1/\epsilon^2)}$ recursive calls. Each such recursive call leads to $\rho^6 \leq \Order(d^{\epsilon})$ instances (by the final application of \cref{obs:exact-tri-regularization-naive}), confirming the claimed bound on $M$.

\paragraph{Running Time.}
The running time is dominated by calling the uniformization decomposition at each recursive call:
\begin{equation*}
    \sum_{h=0}^H M_h \cdot \widetilde\Order\parens*{n^{2+\order(1)} d \poly(\Delta_h) + \frac{n^3}{\Delta_h}} = 
    n^{2+\order(1)} \cdot d \cdot \Delta^{2^{\Order(1/\epsilon)}} \cdot (\log n)^{\Order(1/\epsilon^2)} + \widetilde\Order\parens*{\sum_{h=0}^H \frac{n^3 M_h}{\Delta_h}}.
\end{equation*}
The above recursive expressions yield that~\smash{$M_0 / \Delta_0 = 1 / \Delta$} and~\smash{$M_{h+1} / \Delta_{h+1} = \frac{1}{2} \cdot M_h / \Delta_h$}. Thus, we can bound the sum in the second term of the running time by
\begin{equation*}
    \sum_{h=0}^H \frac{M_h}{\Delta_h} = \frac{1}{\Delta} \sum_{h=0}^H \frac{1}{2^h} \leq \frac{2}{\Delta}.
\end{equation*}
The stated time bound follows.
\end{proof}

This concludes the proof of the regularization step, and readies us to complete the proof of \cref{thm:exact-tri-few-weights} by combining \cref{lem:exact-tri-uniform-regular,lem:exact-tri-regularization}.

\begin{proof}[Proof of \cref{thm:exact-tri-few-weights}]
Let $\delta > 0$ and let $d = n^{{3-\omega}-\delta}$. 
Let $\epsilon > 0$ and $\Delta \geq 1$ be parameters to be determined. We apply \cref{lem:exact-tri-regularization} to reduce a given $d$-weights Exact Triangle instance to
\begin{equation*}
    M = d^\epsilon \cdot \Delta^{2^{\Order(1/\epsilon)}} (\log n)^{\Order(1/\epsilon^2)}    
\end{equation*}
instances that are each $d'$-uniform and \smash{$\frac{n}{d'}$}-regular, for some $d' \leq d$. Using \cref{lem:exact-tri-uniform-regular} we solve each such instance in time $\widetilde\Order(n^{3-(3-\omega)/7} (d')^{1/7}) = \widetilde\Order(n^{3-(3-\omega)/7} d^{1/7}) = \widetilde\Order(n^{3-\delta/7})$. Taking also the running time of \cref{lem:exact-tri-regularization} into account, the total running time is
\begin{equation*}
    \widetilde\Order(n^{3-\delta/7}\cdot d^\epsilon \cdot \Delta^{2^{\Order(1/\epsilon)}} (\log n)^{\Order(1/\epsilon^2)} + n^{2+\order(1)} \cdot d \cdot \Delta^{2^{\Order(1/\epsilon)}} (\log n)^{\Order(1/\epsilon^2)} + n^3 / \Delta).
\end{equation*}
Picking $\epsilon = \frac{\delta}{14}$ and $\Delta = n^{2^{-c/\epsilon}}$ for some sufficiently large constant $c$, this becomes $\Order(n^{3-\epsilon'})$ time (for some constant~\makebox{$\epsilon' > 0$} which is exponentially small in $1/\delta$). 
\end{proof}

\begin{remark} \label{rem:eps-dependence}
For $d$-weights Exact Triangle where $d=n^{3-\omega-\delta}$, the proof above yields time complexity $O(n^{3-\epsilon'})$ where $\epsilon'$ is exponentially small in $1/\delta$. Here we sketch a small modification that can improve $\epsilon'$ to depend polynomially on $\delta$.  Further improving to $\epsilon' \ge \Omega(\delta)$ seems challenging (at least within our current recursion-based framework), and we leave it as an interesting open problem.

Recall that in our recursion algorithm (\cref{alg:regularization}) for $d$-weights Exact Triangle, the value of $d$ decreases at each recursive step, and we increase the parameter $\Delta$ accordingly, but keep the parameter $\rho$ unchanged throughout all levels of recursion. 
The modification we make here is to also vary the value of $\rho$. 
More specifically, let $T(d)$ denote the total running time for solving a $d$-weights Exact Triangle instance (including both the time of running \cref{alg:regularization} to produce the uniform and regular instances, and the time of invoking \cref{lem:exact-tri-uniform-regular} to solve these instances).
Then, by previous discussions, $T(d)$ satisfies the following recurrence (where we are free to choose parameters $\rho,\Delta\ge 1$): \[T(d)\leq \widetilde O\left ( 6L\cdot
T\left(\frac{d}{\rho}\right)+n^{2+o(1)}d\Delta^{O(1)}+\frac{n^3}{\Delta}+L\rho^6n^{3-(3-\omega)/7}d^{1/7}\right ), \text{ where }L \le \poly(\Delta \log n),
\] since for each of the $L$ $d$-uniform instances produced by \cref{lem:exact-tri-uniformization}, we recurse on six
($d/\rho$)-weights instances, and we also call the naive
regularization (\cref{obs:exact-tri-regularization-naive}) and obtain $\rho^6$ $d$-uniform
and ($n/d$)-regular instances, each to be solved using \cref{lem:exact-tri-uniform-regular} in time $\widetilde O(n^{3-(3-\omega)/7} d^{1/7})$. 
Recall that at the top level, $d = n^{3-\omega-\delta}$ where $\delta>0$.
At the current level of recursion, let $\epsilon \ge \delta$ be such that $d = n^{3-\omega-\epsilon}$.
We now inductively show that $T(d) = T(n^{3-\omega-\epsilon})\leq cn^{3-\epsilon^C/K}$ holds for some constants $c, C,K\ge 1$ to be determined later.

For the base case, if $\epsilon\ge 0.1$, then we can use our original proof of \cref{thm:exact-tri-few-weights} to solve $n^{3-\omega-0.1}$-weights Exact Triangle in $O(n^{3-\epsilon''})$ time for some absolute constant $\epsilon''>0$, so by setting $K\ge 1/\epsilon''$ and $c$ large enough we immediately have $T(n^{3-\omega-\epsilon}) \le cn^{3-\epsilon^C/K}$ as claimed.
Now we assume $\epsilon<0.1$. In the recurrence above, we set parameters
$\rho = n^{r_\epsilon}$ and $\Delta =
n^{s_\epsilon}$, where $s_\epsilon = 2\epsilon^C$ and $r_\epsilon = \epsilon/100$.
Assume the inductive hypothesis that
$T(d/\rho) = T(n^{3-\omega-(\epsilon+r_\epsilon)})\leq
cn^{3-(\epsilon+r_\epsilon)^C/K}$. 
Substituting, the recurrence becomes
\[T(n^{3-\omega-\epsilon})\leq
\widetilde O\left (cn^{O(s_\epsilon)+3-(\epsilon+r_\epsilon)^C/K}+n^{3-\epsilon+O(s_\epsilon)}+n^{3-s_\epsilon}+n^{O(s_\epsilon)+6r_\epsilon+3-(3-\omega)/7+(3-\omega-\epsilon)/7} \right ),
\] where we use that $L = \text{poly}(\Delta\log n) = n^{O(s_\epsilon)}$. In
order to show that $T(n^{3-\omega-\epsilon})\leq cn^{3-\epsilon^C/K}$,
it suffices to show that each summand in the right-hand-side of this
recurrence is at most $cn^{3-\epsilon^C/K}/q$, where $q\le \polylog(n)$ is four times the factors hidden by the $\widetilde O(\cdot)$. 
Since $\log_n q = o(1)$, we can assume $\log_n q \le \delta^{C} \le \epsilon^{C}$ by assuming $n\ge n_0$ for some large constant $n_0$ depending on $\delta$ and $C$; otherwise when $n<n_0$, we have a constant-time algorithm, and we can set constant $c\ge 1$ large enough in terms of $n_0$ to cover this.
Letting $k$ denote the absolute constant hidden by the $O(s_{\eps})$ in the recurrence, we have that \begin{itemize}
\item $cn^{O(s_\epsilon)+3-(\epsilon+r_\epsilon)^C/K}\leq cn^{3-\epsilon^C/K}/q$
  is satisfied when $(2kK+K+1) \epsilon^C \leq (1.01\epsilon)^C$ (where we used $\log_n q \le \epsilon^C$), which is satisfied by choosing $C$ large enough in terms of $k$ and $K$;

\item $n^{3-\epsilon+O(s_\epsilon)}\leq cn^{3-\epsilon^C/K}/q$ is satisfied when
  $\epsilon \ge k\cdot 2\epsilon^C + \epsilon^C/K+\epsilon^C$ (where we used $c\ge 1$), which is satisfied by
  choosing $C$ large enough in terms of $k$ (and using $\epsilon <0.1$);

\item $n^{3-s_\epsilon}\leq cn^{3-\epsilon^C/K}/q$ is satisfied when
  $(2-1/K)\epsilon^C\geq\log_n q$, which holds by our assumption that $\log_n q \le \epsilon^C$;
  and

\item $n^{O(s_\epsilon)+6r_\epsilon+3-(3-\omega)/7+(3-\omega-\epsilon)/7}\leq
  cn^{3-\epsilon^C/K}/q$ is satisfied when
  $(2k+1/K)\epsilon^C+\log_n q\leq(1/7 - 0.06)\epsilon$,
  which is similarly satisfied by choosing $C$ large enough in terms of $k$ and using $\epsilon <0.1$.
\end{itemize}

Thus, we have $T(n^{3-\omega-\epsilon})\leq cn^{3-\epsilon^C/K}$ for some
constants $c, C,K\ge 1$ as claimed. 
\end{remark}
\section{Hardness and Reductions}
\label{sec:hardness}
In this section our goal is to present conditional hardness for Node-Weighted APSP and to relate the problem to Min-Plus product for matrices with a small number of distinct weights. Throughout this section we refer to the Min-Plus Product problem of size $n_1 \times n_2 \times n_3$ with weights bounded by $U$ as the $(n_1, n_2, n_3 \mid U)$-Min Plus Product problem.

Chan, Vassilevska W. and Xu \cite{ChanWX21} study the directed unweighted APSP (u-dir-APSP) problem and provide a wide range of reductions and equivalences for it. The fastest known running time for u-dir-APSP is by Zwick \cite{Zwick02} and it is $\widetilde{O}(n^{2+\mu})$ where $\mu$ is defined to be the value s.t. $\omega(1,\mu,1)=1+2\mu$. If $\omega=2$, then $\mu=0.5$, and the current best bound on $\mu$ is $\mu<0.5277$~\cite{AlmanDVXXZ25}. 

Chan, Vassilevska W., and Xu \cite{ChanWX21} show that u-dir-APSP requires $n^{2+\mu-o(1)}$ time if and only if many other problems require $n^{2+\mu-o(1)}$ time as well, such as All-Pairs Longest Paths in unweighted DAGs and to $(n, n^\mu, n \mid n^{1-\mu})$-Min-Plus Product. In a later paper~\cite{ChanWX23} the same authors formulate the following hypothesis:

\begin{hypothesis}[u-dir-APSP]
In the word-RAM model with $O(\log n)$ bit words, u-dir-APSP on $n$-node graphs cannot be solved in time $O(n^{2+\mu-\epsilon}$, for any constant $\epsilon > 0$.
\end{hypothesis}

That is, the u-dir-APSP Hypothesis postulates that Zwick's algorithm is optimal. Another related hypothesis is the Strong APSP hypothesis, introduced in the same paper~\cite{ChanWX23}:

\begin{hypothesis}[Strong APSP Hypothesis]
In the word-RAM model with $O(\log n)$ bit words, APSP with weights bounded by $n^{3-\omega}$ cannot be solved in time $n^{3-\epsilon}$, for any constant $\epsilon > 0$.
\end{hypothesis}

Again, this hypothesis is equivalent to the assumption that the $(n, n, n \mid n^{3-\omega})$-Min Plus Product problem requires time $n^{3-o(1)}$.

\subsection{Hardness for \texorpdfstring{\boldmath$d$}{d}-Distinct Weights APSP}
We formulate an interpolation between the u-dir-APSP and the Strong APSP Hypotheses: 

\begin{hypothesis}[Bounded Min-Plus Product]
For any $\gamma \in [1/2,1]$, in the word-RAM model with $O(\log n)$ bit words, $(n, n^\gamma, n \mid n^\gamma)$-Min-Plus cannot be solved in time $O(n^{2+\gamma-\epsilon})$, for any constant $\epsilon > 0$.
\end{hypothesis}

The u-dir-APSP Hypothesis is equivalent to the case of $\gamma=1/2$ (if $\omega = 2$) and the Strong APSP Hypothesis is equivalent to the case of $\gamma=1$ (if $\omega = 2$). In words, we postulate that for every $\gamma$ between~$1/2$ and~$1$, Min-Plus product also requires essentially brute-force time. The statement is consistent with the known algorithms for Min-Plus product and we find it just as believable as the Strong APSP and u-dir-APSP Hypotheses. In fact, Chan, Vassilevska W. and Xu \cite{ChanWX23} prove that if $\omega = 2$ the u-dir-APSP Hypothesis implies lower bounds for Min-Plus Product of all sizes where the inner dimension is strictly smaller than the outer dimensions:

\begin{lemma}[{{\cite[Corollary 7.6]{ChanWX23}}}]
If $\omega = 2$, the u-dir-APSP Hypothesis and the Bounded Min-Plus Product Hypothesis for any fixed $\gamma \in [1/2, 1)$ are equivalent.
\end{lemma}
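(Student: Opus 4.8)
\emph{Plan and reduction to one parameter.} The plan is to reduce the statement to a single parameter comparison. Since $\omega=2$ forces $\mu=\tfrac12$ and $1-\mu=\tfrac12$, the equivalence u-dir-APSP $\equiv (n,n^\mu,n\mid n^{1-\mu})$-Min-Plus from \cite{ChanWX21} says precisely that the $\gamma=\tfrac12$ case of the Bounded Min-Plus Product Hypothesis coincides with the u-dir-APSP Hypothesis. It therefore suffices to prove, for every fixed $\gamma\in(\tfrac12,1)$, that the Bounded Min-Plus Product Hypothesis for $\gamma$ is equivalent to the one for $\tfrac12$. I would do this by exhibiting fine-grained reductions in both directions between $(n,n^\gamma,n\mid n^\gamma)$-Min-Plus (conjectured complexity $n^{2+\gamma}$) and $(n,n^{1/2},n\mid n^{1/2})$-Min-Plus (conjectured complexity $n^{2.5}$), each losing only a subpolynomial factor in the exponent; throughout, $\omega=2$ is used to ensure that every rectangular product $\MM(n,n^a,n)$ with $a\le1$ costs only $\widetilde O(n^2)$, i.e. is output-dominated.

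\emph{The direction $\mathrm{Hyp}_{1/2}\Rightarrow\mathrm{Hyp}_{\gamma}$, by tiling.} Contrapositively, a subcubic-savings algorithm for $(n,n^\gamma,n\mid n^\gamma)$-Min-Plus should yield one for u-dir-APSP. Assume $(n,n^\gamma,n\mid n^\gamma)$-Min-Plus runs in time $O(n^{2+\gamma-\epsilon})$. By \cite{ChanWX21} it is enough to solve $(N,\sqrt N,N\mid\sqrt N)$-Min-Plus in time $o(N^{2.5})$. Set $n_0:=N^{1/(2\gamma)}$, so that $\sqrt N<n_0<N$ and, crucially, $\sqrt N=n_0^{\gamma}$. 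Partition the $N\times N$ output into $(N/n_0)^2=N^{2-1/\gamma}$ tiles of size $n_0\times n_0$; the tile on rows $P$ and columns $Q$ is the product $A[P,\cdot]\star B[\cdot,Q]$, which has inner dimension $\sqrt N=n_0^{\gamma}$ and weight bound $\sqrt N=n_0^{\gamma}$, hence is a legitimate $(n_0,n_0^{\gamma},n_0\mid n_0^{\gamma})$-Min-Plus instance. Solving all tiles with the assumed algorithm costs $N^{2-1/\gamma}\cdot O(n_0^{2+\gamma-\epsilon})=O(N^{2.5-\epsilon/(2\gamma)})$, and reassembly is $\widetilde O(N^2)$; since $\gamma<1$ this is $o(N^{2.5})$, refuting the u-dir-APSP Hypothesis.

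\emph{The direction $\mathrm{Hyp}_{\gamma}\Rightarrow\mathrm{Hyp}_{1/2}$, via the ChanWX equivalence web.} Contrapositively, a subcubic-savings algorithm for u-dir-APSP should yield one for $(n,n^\gamma,n\mid n^\gamma)$-Min-Plus. I would first split the inner dimension $n^\gamma$ into $n^{\gamma-1/2}$ consecutive blocks of size $\sqrt n$, turning $(n,n^\gamma,n\mid n^\gamma)$-Min-Plus into the entrywise minimum of $n^{\gamma-1/2}$ instances of $(n,\sqrt n,n\mid n^\gamma)$-Min-Plus together with an $\widetilde O(n^{3/2+\gamma})=o(n^{2+\gamma})$ combining step. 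Thus it suffices to solve $(n,\sqrt n,n\mid n^\gamma)$-Min-Plus in time $O(n^{2.5-\epsilon'})$ for some $\epsilon'>0$; note that this product has the ``bridging-set-sized'' inner dimension $\sqrt n=n^{\mu}$ and conjectured complexity $n^{2.5}$ (brute force over the $\sqrt n$ inner indices already achieves $n^{2.5}$). The crux is that, when $\omega=2$, Min-Plus products of this shape -- outer dimension $n$, inner dimension $n^{\mu}$, and \emph{any} polynomially bounded weight range -- are fine-grained equivalent to u-dir-APSP; this is exactly the content of the reduction suite of \cite{ChanWX21,ChanWX23}, where the oversized weight range $n^\gamma$ is scaled down to $n^{\mu}$ by an argument interleaved with Zwick's bridging-set recursion \cite{Zwick02}. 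Plugging the assumed $N^{2.5-\epsilon}$-time u-dir-APSP algorithm into that equivalence yields the desired $n^{2.5-\epsilon'}$-time algorithm, and hence an $n^{2+\gamma-\epsilon'}$-time algorithm for $(n,n^\gamma,n\mid n^\gamma)$-Min-Plus, refuting $\mathrm{Hyp}_\gamma$.

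\emph{The main obstacle.} The first direction is elementary and unconditional; the real work is the second one, namely coping with a weight range $n^\gamma$ that is polynomially larger than the bridging-set size $\sqrt n$. The naive fix -- splitting each weight into high- and low-order parts -- fails, because that multiplies the number of subproducts by the square of the number of high-order values, and summing over them already exhausts the $n^{2.5}$ budget (a Min-Plus product without a bounded-difference promise has no slack to absorb such overhead). The right approach, due to \cite{ChanWX21,ChanWX23}, is to perform the weight scaling \emph{inside} Zwick's bridging-set recursion for u-dir-APSP rather than as a black box on top of a single product; the present corollary then follows by specializing the resulting equivalence to the parameters above. A secondary, routine point is that the alignment of $\mu$, $1-\mu$ and $\gamma$, as well as the output-dominance of the auxiliary matrix products, both rely on $\omega=2$, which is exactly the standing hypothesis.
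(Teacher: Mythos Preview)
The paper does not prove this lemma at all; it is stated as a citation of \cite[Corollary~7.6]{ChanWX23} and used as a black box. There is therefore no ``paper's own proof'' to compare against.

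As for your sketch: the first direction (tiling a $(N,\sqrt N,N\mid\sqrt N)$ instance into $(n_0,n_0^\gamma,n_0\mid n_0^\gamma)$ blocks with $n_0=N^{1/(2\gamma)}$) is correct and self-contained. The second direction, however, is not a proof but a restatement of the difficulty: after splitting the inner dimension you are left with $(n,\sqrt n,n\mid n^\gamma)$-Min-Plus, whose weight range $n^\gamma$ is polynomially larger than the bridging-set size $\sqrt n$, and you then simply appeal to ``the reduction suite of \cite{ChanWX21,ChanWX23}'' to handle it. That appeal is precisely the content of the corollary being cited, so your argument is circular at the hard step. If you want an actual proof you would need to reproduce the scaling-inside-Zwick's-recursion argument from \cite{ChanWX23} rather than invoke it; as written, your second direction has the same logical status as the paper's bare citation.
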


We give a reduction from $(n, n^{1/2+\eps}, n \mid n^{1/2+\eps})$-Min-Plus to APSP in graphs with at most $n^{2\eps}$ distinct weights, obtaining the theorem below:

\begin{theorem}\label{thm:dapsp-merge1}
Under the Bounded Min-Plus Product Hypothesis, for every $d=n^\gamma\geq 2$ with $\gamma\in [0,1]$, APSP in (directed or undirected) graphs with at most $d$ distinct weights and hence also $d$-weights APSP requires $\sqrt{d}n^{2.5-o(1)}$ time in the word-RAM model.
\end{theorem}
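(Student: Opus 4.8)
The plan is to prove the contrapositive by reducing bounded Min-Plus Product to few-weights APSP. For $\gamma \in (0,1]$ put $\eps = \gamma/2 \in (0,1/2]$ (the case $\gamma=0$, i.e.\ $d=2$, follows from a mild adaptation of \cite{ChanWX21}), and start from the $(n, n^{1/2+\eps}, n \mid n^{1/2+\eps})$-Min-Plus Product problem, which by the Bounded Min-Plus Product Hypothesis (invoked with parameter $1/2+\eps \in [1/2,1]$) cannot be solved in time $\Order(n^{2+(1/2+\eps)-\delta}) = \Order(n^{2.5+\eps-\delta})$ for any constant $\delta>0$. First I would exhibit an $\Order(n^2)$-time transformation of any such instance into an APSP instance on a (directed or undirected) graph with $n^{1+o(1)}$ vertices that uses at most $d = n^{2\eps}$ distinct edge weights, and from whose distance matrix the product $A \star B$ can be recovered in $\Order(n^2)$ time. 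Then an $\Order(\sqrt d \cdot (n')^{2.5-\delta})$-time algorithm for this APSP variant (where $n'$ is the number of vertices) would solve the Min-Plus instance in time $\Order(\sqrt{n^{2\eps}} \cdot n^{(1+o(1))(2.5-\delta)}) = \Order(n^{2.5+\eps-\delta+o(1)})$, which for large $n$ is $\Order(n^{2.5+\eps-\delta/2})$, contradicting the hypothesis. Since the produced graph even uses globally at most $d$ distinct weights, this proves the bound both for that (more restricted) problem and, a fortiori, for $d$-weights APSP.

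For the transformation I would modify the construction of Chan, Vassilevska~W.\ and Xu~\cite{ChanWX21}. Its skeleton is the standard tri-partite graph on vertex classes $I$, $K$, $J$ of sizes $n$, $n^{1/2+\eps}$, $n$, with edges $i \to k$ of weight $A[i,k]$ and $k \to j$ of weight $B[k,j]$, in which the distance from $i$ to $j$ equals $(A \star B)[i,j]$. This graph uses $\Theta(n^{1/2+\eps})$ distinct weights, far more than $d$, so I would shrink the weight alphabet by writing every weight in base $\beta := \sqrt d = n^{\eps}$ using $t = \Theta(1/\eps) = \Order(1)$ digits (so $\beta^t \ge n^{1/2+\eps}$), and replacing each weighted edge by a gadget of $\Order(t)$ hops in which the $\ell$-th hop is responsible for the scaled digit $a_\ell \beta^\ell$, so that the hop weights along a gadget sum to the original weight. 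The weight alphabet then has size $\Order(\beta t) = n^{\eps+o(1)} \le d$ (as $\eps$ is a positive constant). Since $|K| = n^{1/2+\eps} \ll n$, a careful layering of these gadgets --- also handling the higher-order digit positions, recursively if $\beta$ is much smaller than the weight bound --- keeps the total vertex count at $n^{1+o(1)}$, exactly as in \cite{ChanWX21}; taking $d=2$ recovers their original construction, whose undirected output uses only $2$ distinct weights (which pins down the $d=2$ case of the statement). In the other extreme $\gamma=1$ one has $t=1$: no gadget is needed, the plain tri-partite graph already uses $\le n = d$ distinct weights, and the statement degenerates to the Strong APSP Hypothesis, consistently with $\sqrt d \cdot n^{2.5-o(1)} = n^{3-o(1)}$.

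The main obstacle will be proving that the gadgets are \emph{rigid}: a shortest $i$-$j$-path must not be able to bypass or shortcut a gadget, but must traverse it so as to accumulate precisely the digits of $A[i,k]$ and then precisely the digits of $B[k,j]$, so that the realized distance equals $\min_k (A[i,k]+B[k,j])$. This is the delicate part already present in \cite{ChanWX21}; I would import their device, which works in the undirected setting by attaching to each digit-hop a ``detour'' sub-gadget whose unique cheapest traversal charges the path exactly the intended digit and makes any deviation strictly more expensive. The directed version then follows from the undirected one via the standard additional reduction, or by re-running the argument with oriented edges. Beyond rigidity I expect only routine verification: bounding the vertex count by $n^{1+o(1)}$, checking that at most $d$ weight values occur, and confirming that the fixed affine rescaling introduced by the undirected detours (e.g.\ a global factor of $2$) is trivially inverted, so that $(A \star B)[i,j]$ is read off from the computed distances in $\Order(n^2)$ post-processing.

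Finally I would tidy up the parameter bookkeeping and boundary cases: $\eps=\gamma/2$ is admissible for every $\gamma \in [0,1]$; all weights stay polynomially bounded, so every step runs in $\poly(n)$ time in the word-RAM model; and the conclusion is obtained simultaneously for directed and for undirected graphs. Assembling these pieces gives the claimed $\sqrt d \cdot n^{2.5-o(1)}$ conditional lower bound.
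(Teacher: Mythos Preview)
Your overall strategy—reduce $(n,n^{1/2+\eps},n\mid n^{1/2+\eps})$-Min-Plus Product to APSP on an $O(n)$-node graph with at most $d=n^{2\eps}$ distinct edge weights—is exactly the paper's, and your contrapositive bookkeeping is fine. The gap is in the gadget itself.

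You propose writing each weight in base $\beta=\sqrt d=n^\eps$ with $t=\Theta(1/\eps)$ digits and replacing every weighted edge $i\!\to\!k$ by a $t$-hop path whose $\ell$-th hop carries $a_\ell\beta^\ell$. But the digit sequence $(a_0,\dots,a_{t-1})$ depends on the pair $(i,k)$, so a separate $t$-hop path per edge gives $\Theta(n\cdot n^{1/2+\eps})$ gadget vertices, not $n^{1+o(1)}$. You defer this to ``careful layering'' and to \cite{ChanWX21}, but that paper only handles the purely unary case ($\eps=0$, one shared path per $k$); it does not contain a multi-digit gadget you can import. Without an explicit mechanism by which all rows $i$ share a single $k$-gadget yet each $i$ is forced to accumulate exactly the digits of $A[i,k]$, the vertex bound is unproven.

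The paper avoids this entirely with a \emph{two}-digit split in base $q=\lceil n^{1/2-\eps}\rceil$: write $A[i,k]=q\lfloor A[i,k]/q\rfloor+(A[i,k]\bmod q)$. For each $k$ build one path $P_k$ on $2q-1$ unit-weight edges; the edge from $r_i$ into $P_k$ has weight $q\lfloor A[i,k]/q\rfloor$ (the high digit, one of $n^{2\eps}=d$ values) and lands at the vertex at position $A[i,k]\bmod q$ (the low digit, encoded positionally in unary). Symmetrically for $B$. Since there are $n^{1/2+\eps}$ paths of size $O(n^{1/2-\eps})$, the graph has $O(n)$ vertices and only $d+1$ distinct weights. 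The undirected version is also simpler than your ``detour sub-gadgets'': just add a large constant $M$ to every edge touching $R\cup C$, so any $r_i$--$c_j$ shortest path uses exactly two such edges and hence stays inside a single $P_k$.
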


\begin{proof}
Let $A$ and $B$ be the given matrices, where $A$ is $n\times n^{1/2+\eps}$ and $B$ is $n^{1/2+\eps}\times n$ and where all entries of $A$ and $B$ are integers in $[0,n^{1/2+\eps})$, where $\eps \in [0,1/2]$. As usual, we will assume that $n^{1/2+\eps}$ is an integer.

Let $q=\ceil{n^{1/2-\eps}}$. For any integer $x\in [0,n^{1/2+\eps})$ we can represent it as $x=x_1\cdot q + x_2$ where $x_2=x\bmod q \in \{0,\ldots,q-1\}$ and $x_1=\lfloor x/q\rfloor \in \{0,\ldots,n^{2\eps} - 1\}$. We define four matrices $A',A'',B,B''$ as follows, for all $i,j\in [n],k\in [n^{1/2+\eps}]$:
\begin{alignat*}{2}
    A'[i,k] &= q\cdot \left\lfloor \frac{A[i,k]}{q} \right\rfloor, \qquad & A''[i,k] &= A[i,k] \bmod q, \\
    B'[k,j] &= q\cdot \left\lfloor \frac{B[k,j]}{q} \right\rfloor, \qquad & B''[k,j] &= B[k,j] \bmod q.
\end{alignat*}
Notice that all entries in $A'$ and $B'$ lie in $q\cdot \{0,1,\dots,n^{2\eps}-1\}$ and in particular there are at most $n^{2\eps}$ distinct weights that appear as entries in $A'$ and $B'$.

Now we build a directed graph $G$ on $O(n)$ vertices as follows (following a similar construction from \cite{ChanWX21}). We will later make the graph undirected. We create a node set $R$ consisting with a node $r_i$ for every $i\in [n]$, corresponding to row $i$ of $A$. Similarly, we create a node set $C$ consisting with a node $c_j$ for every $j\in [n]$, corresponding to column $j$ of $A$. For every $k\in [n^{1/2+\eps}]$ we create a path $P_k$ on $2q-1=2n^{1/2-\eps}-1$ vertices with a middle vertex $w_k$ as follows:
\[P_k:=x_{k,q-1}\rightarrow x_{k,q-2}\rightarrow\ldots\rightarrow x_{k,1}\rightarrow x_{k,0}=w_k=y_{k,0}\rightarrow y_{k,1}\rightarrow \ldots \rightarrow y_{k,q-2}\rightarrow y_{k,q-1}.\]

Notice that the number of vertices in the graph is $O(n)$ due to our choice of $q$. All the edges within the paths $P_k$ have weight $1$. Now, for every $i\in [n],k\in [n^{1/2+\eps}]$, we add an edge of weight $q\cdot\lfloor A[i,k] / q\rfloor$ from $r_i$ to $x_{k,b}\in P_k$ where $b=A[i,k] \bmod q$. Similarly, for every $j\in [n],k\in [n^{1/2+\eps}]$, we add an edge of weight $q\cdot \lfloor B[k,j] / q \rfloor$ from $y_{k,b'}\in P_k$ to $c_j$ where $b'=B[k,j] \bmod q$. See Figure \ref{fig:distinctred}.

\begin{figure}[t]
\centering
\includegraphics[width=10cm]{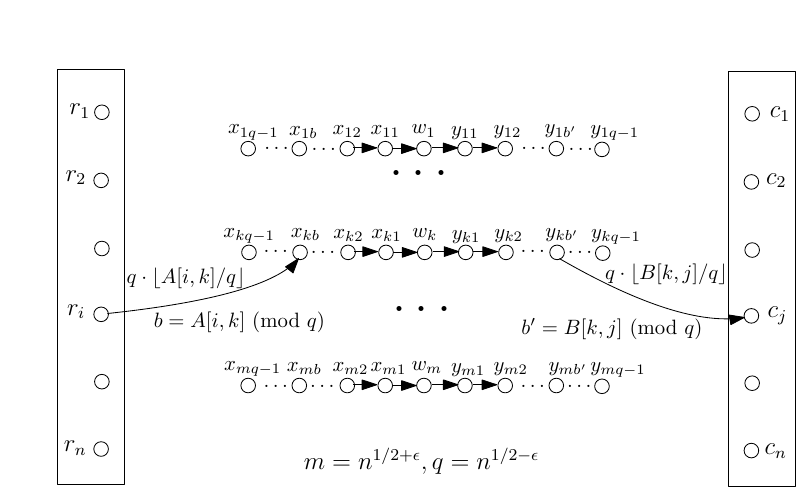}
\caption{Our reduction from $(n, n^{1/2+\eps}, n \mid n^{1/2+\eps})$-Min Plus Product to directed APSP with $n^{2\eps}$ distinct weights.}
\label{fig:distinctred}
\end{figure}

Every directed path from $r_i\in R$ to $c_j\in C$ has the following form: for some $k\in [n^{1/2+\eps}]$, take an edge $(r_i,x_{k,b})$ of weight $q\cdot \lfloor A[i,k] / q \rfloor$, followed by a $b+b' = [A[i,k]\bmod q]+[B[k,j]\bmod q]$ weight (sum over the integers) subpath of $P_k$ to $y_{k,b'}$ and then an edge to $c_j$.

The weight of this path is exactly $A[i,k]+B[k,j]$ and thus the shortest path from $r_i$ to $c_j$ is $\min_k A[i,k]+B[k,j]$.

The graph has $\leq n^{2\eps} + 1$ distinct weights. Thus, if APSP in graphs with at most $n^\gamma \ge 2$ distinct weights can be solved in $O(n^{2.5+\gamma/2-\delta})$ time for some $\delta>0$, then we can compute the Min-Plus product of $A$ and $B$ in time $O(n^{2.5+\eps-\delta})$, refuting the Bounded Min-Plus Hypothesis.

To get the same result for undirected graphs, notice that if we can ensure that the shortest path between $r_i$ and $c_j$ (for every $i,j\in [n]$) when we disregard the edge directions in our reduction uses at most one edge out of $R$ and at most one edge into $C$, then this path will intersect a single path $P_k$ and will just follow its subpath without revisiting any edge more than once. Thus as long as we can ensure that the shortest $r_i$ to $c_j$ undirected path uses exactly two edges incident to $R\cup C$, the reduction would also work for undirected graphs.

We can do this by adding a very large weight $M$ to every edge incident to $R\cup C$. This keeps the number of distinct edge weights in the graph the same. $M=2n^{1/2+\eps}$ suffices.
\end{proof}

In particular, the above theorem implies that APSP in {\em undirected} graphs with $2$ distinct weights ($\{2\sqrt n,1\}$) requires $n^{2.5-o(1)}$ time under the Bounded Min-Plus Hypothesis, and even under the u-dir-APSP Hypothesis. Notably, undirected APSP with one distinct weight is in $\widetilde{O}(n^\omega)$ time via Seidel's algorithm \cite{Seidel95}, so that there is a big jump in complexity with two distinct weights. This completes half of the proof of \cref{thm:dapsp}; the other half is proved in the upcoming \cref{lem:merge2}.

\subsection{Hardness and Reductions for Node-Weighted APSP}
Under the u-dir-APSP Hypothesis, node-weighted APSP in {\em directed} graphs must also require $n^{2.5-o(1)}$ time, and so our $\widetilde{O}(n^{(3+\omega)/2})$ time algorithm is conditionally optimal if $\omega=2$.
If $\omega>2$, however, there is a difference between our running time and the \smash{$\widetilde{O}(n^{2+\mu})$} running time of u-dir-APSP. Furthermore, unweighted {\em undirected} APSP can be solved significantly faster than u-dir-APSP, in $\widetilde{O}(n^\omega)$ time \cite{Seidel95}, whereas undirected node-weighted APSP seems to be just as difficult as directed node-weighted APSP: our techniques fail to achieve a better running time.

Here we will provide a hardness reduction that relates node-weighted APSP to Min-Plus Product problem with a small number of distinct weights (per row or column), building a connection with the second part of our paper.

We begin with two definitions:
\begin{definition}[$d$-Column Weight $(n,D)$-Min-Plus Product]
Let \emph{$d$-column weight $(n,D)$-Min-Plus Product} be the following problem: Given an $n\times D$ matrix $A$ and an $D \times n$ matrix $B$ such that every {\em column} of $A$ and every row of $B$ have at most $d$ distinct integer entries, compute the Min-Plus product of $A$ and $B$.
\end{definition}

\begin{definition}[$d$-Row Weight $(n,D)$-Min-Plus Product]
Let \emph{$d$-row weight $(n,D)$-Min-Plus Product} be the following problem: Given an $n\times D$ matrix $A$ and an $D \times n$ matrix $B$ such that every {\em row} of $A$ and every column of $B$ have at most $d$ distinct integer entries, compute the Min-Plus product of $A$ and $B$.
\end{definition}

Notice that both $n^{1-\mu}$-column weight $(n,n^\mu)$-Min-Plus Product and $n^{1-\mu}$-row weight $(n,n^\mu)$-Min-Plus Product are strict generalizations of $(n, n^\mu, n \mid n^{1-\mu})$-Min-Plus Product which Chan, Vassilevska W. and Xu \cite{ChanWX21} proved is equivalent to u-dir-APSP. In particular, both $n^{1/2}$-column weight $(n,n^{1/2})$-Min-Plus and $n^{1/2}$-row weight $(n,n^{1/2})$-Min-Plus require $n^{2.5-o(1)}$ time under the u-dir-APSP Hypothesis.

We first show that node-weighted APSP is at least as hard as 
$n^{1-x}$-column weight $(n,n^x)$-Min-Plus for every $x$.

\begin{lemma}[Reduction from Column Weight Min-Plus Product]
\label{lem:merge2}
For any $x\in [0,1]$, $n^{1-x}$-column weight $(n,n^x)$-Min-Plus Product can be reduced in $O(n^{2})$ time to a single instance of node-weighted APSP in an $O(n)$ node directed or undirected graph. (In particular, under the u-dir-APSP hypothesis, undirected node-weighted APSP requires $n^{2.5-o(1)}$ time.)
\end{lemma}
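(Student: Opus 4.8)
The plan is to route the Min-Plus product through node-weight gadgets in a layered graph, exploiting that each column of $A$ (resp.\ row of $B$) has at most $n^{1-x}$ distinct values so the total gadget size stays $\Order(n)$. After translating all entries of $A$ and $B$ to be nonnegative (which preserves the number of distinct entries in each column/row, and only shifts the output by a known constant), I would build a graph $G$ with layers $R = \{r_i : i \in [n]\}$ and $C = \{c_j : j\in[n]\}$ representing the output rows and columns; a ``hub'' $w_k$ for each $k \in [n^x]$; a vertex $u_{k,a}$ for each distinct value $a$ occurring in column $k$ of $A$; and a vertex $v_{k,b}$ for each distinct value $b$ occurring in row $k$ of $B$. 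Since column $k$ of $A$ and row $k$ of $B$ each have at most $n^{1-x}$ distinct entries, there are at most $n^x\cdot n^{1-x}=n$ vertices of type $u$ and at most $n$ of type $v$, so $|V(G)| = \Order(n)$. I set the node weights $w(r_i)=w(c_j)=w(w_k)=0$, $w(u_{k,a})=a+M$, $w(v_{k,b})=b+M$ for a parameter $M = \Theta(U)$ ($U$ bounds the magnitudes of the shifted entries), and add (undirected) edges $r_i \sim u_{k,A[i,k]}$ for all $i,k$; $u_{k,a}\sim w_k$ for all valid $(k,a)$; $w_k \sim v_{k,b}$ for all valid $(k,b)$; and $v_{k,B[k,j]}\sim c_j$ for all $j,k$. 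The value sets per column/row (by radix sort or hashing) and all $\Order(n^{1+x}) \le \Order(n^2)$ edges are computed in $\Order(n^2)$ time, and the output is recovered as $C[i,j] = D_G[r_i,c_j] - 2M$ from a single node-weighted APSP call.

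The correctness claim is $D_G[r_i,c_j] = 2M + \min_k(A[i,k]+B[k,j])$. In the directed version (orienting every edge from the lower layer to the higher one, in the order $R,U,W,V,C$) this is immediate: since $C$ has no outgoing edges and each other layer points only to the next, every $r_i$--$c_j$ path has exactly the form $r_i \to u_{k,A[i,k]} \to w_k \to v_{k,B[k,j]} \to c_j$, whose weight (excluding the start vertex, and with $w_k,c_j$ weightless) is $A[i,k]+B[k,j]+2M$; and for every $k$ such a path exists. Plugging this into the reduction gives the lemma for directed graphs with $M$ irrelevant (one may even take $M=0$ there).

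The step that needs care is the undirected case, where a path could ``weave'' between layers, e.g.\ $r_i \to u \to w_k \to v \to c_{j'} \to v' \to w_{k'} \to \cdots$, and if the matrices contain many small (or zero) entries such a detour could be cheap enough to beat the true value $\min_k(A[i,k]+B[k,j])$. This is exactly what the large weight $M$ rules out: every $r_i$--$c_j$ path must use at least one $u$-vertex (its first step out of $r_i$) and at least one $v$-vertex (its last step into $c_j$), and since all cycles have positive weight a minimum-weight path is simple; a short case analysis on the neighborhoods of $r_i, u_{k,a}, w_k, v_{k,b}, c_j$ shows that a simple path using \emph{exactly} one $u$-vertex and one $v$-vertex is forced into the good form above, while any other $r_i$--$c_j$ path uses at least three $u/v$-vertices and hence has weight at least $3M$, which exceeds $2M + 2U$ once $M > 2U$. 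Therefore the undirected shortest path also realizes $2M + C[i,j]$. Finally, taking $x=\mu$ makes $n^{1-\mu}$-column weight $(n,n^\mu)$-Min-Plus a generalization of the $(n,n^\mu,n\mid n^{1-\mu})$-Min-Plus problem shown equivalent to u-dir-APSP in~\cite{ChanWX21}, so this reduction yields the claimed conditional $n^{2.5-o(1)}$ lower bound for node-weighted APSP in directed \emph{or} undirected graphs.
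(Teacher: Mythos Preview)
Your proof is correct and follows essentially the same approach as the paper: encode each distinct value in a column of $A$ (row of $B$) as a node whose weight is that value, so the column-weight bound keeps the vertex count $O(n)$, and then force undirected shortest paths to use exactly one such node on each side by inflating their weights. The only cosmetic differences are that the paper uses four layers with a complete bipartite connection between the $A$-value and $B$-value nodes of the same index $k$ (no separate hub $w_k$), and handles the undirected case by adding a large constant to \emph{every} node rather than only to the value nodes; neither change affects the argument or the $O(n^2)$ bound.
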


\begin{proof}
Let $A$ and $B$ be the $n\times n^x$ and $n^x\times n$ matrices given, respectively with at most $n^{1-x}$ distinct entries per $A$-column and $B$-row.

For every $k\in [n^x]$, let $A(k)$ be the set of distinct entries in column $k$ of $A$ and let $B(k)$ be the set of distinct entries in row $k$ of $B$. By assumption, $|A(k)|,|B(k)|\leq n^{1-x}$.

We create a graph $G$ on $4$ layers of nodes: $I,K_1,K_2,J$.
We first describe $G$ as a directed graph. (See Figure~\ref{fig:colred}.)

\begin{figure}[t]
\centering
\includegraphics[width=8cm]{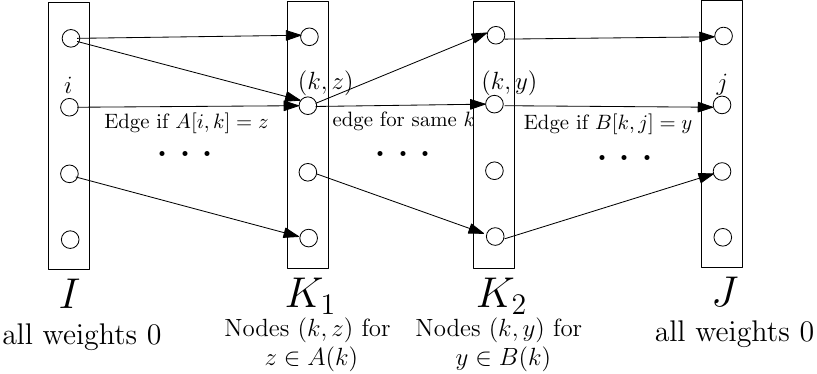}
\caption{Our reduction from $d$-column weight $(n,n/d)$-Min-Plus Product to node-weighted APSP.}
\label{fig:colred}
\end{figure}

\begin{itemize}
\item $I$ has a node $i$ for every row $i$ of $A$, with weight $0$.
\item $J$ has a node $j$ for every column $j$ of $B$, with weight $0$.
\item $K_1$ has a node $(k,z)$ with weight $z$ for every pair $k\in [n^x]$, $z\in A(k)$. 
\item $K_2$ has a node $(k,y)$ with weight $y$ for every pair $k\in [n^x]$, $y\in B(k)$.
\end{itemize}

Every layer has $\leq n$ nodes by construction.

The edges are as follows:
\begin{itemize}
\item For every $k\in [n^x]$ and every $z\in A(k), y\in B(k)$, add an edge from $(k,z) \in K_1$ to $(k,y) \in K_2$
\item For every $i\in [n],k\in [n^x]$, add an edge from $i\in I$ to $(k,A[i,k]) \in K_1$.
\item For every $j\in [n],k\in [n^x]$,  add an edge from  $(k,B[k,j])\in K_2$ to $j \in J$.
\end{itemize}

The construction time is $O(n^{2})$.
The distance from $i\in I$ to $j\in J$ in $G$ is $\min_k\{ A[i,k]+B[k,j]\}$, and thus if directed Node-weighted APSP can be solved in $O(n^c)$ time, so can $n^{1-x}$-column weight $(n,n^x)$-Min-Plus.

To make the graph undirected, additionally, add a weight of $3M$ to every node where $M$ is the maximum finite entry in $A$ or $B$. Assuming that all entries are nonnegative, this will make any undirected path on at least $5$ nodes have weight at least $15M$, whereas any path on $4$ nodes has weight at most $14M$. Hence the distance from $i\in I$ to $j\in J$ in $G$ is $12M+\min_k \{A[i,k]+B[k,j]\}$, and thus if undirected Node-weighted APSP can be solved in $O(n^c)$ time, so can $n^{1-x}$-column weight $(n,n^x)$-Min-Plus.
\end{proof}

Our reduction provides an $\widetilde{O}(n^{(3+\omega)/2})$ time algorithm for $n^{1-x}$-column weight $(n,n^x)$-Min-Plus for every $x\in [0,1]$. 

Let's now consider $n^{1-x}$-row weight $(n,n^x)$-Min-Plus.
When $x\leq 1/2$, we have that $n^{1-x}\geq n^x$ so that $n^{1-x}$-row weight $(n,n^x)$-Min-Plus is just equivalent to Min-Plus product of an arbitrary $n\times n^x$ integer matrix by an arbitrary $n^x \times n$ integer matrix. We obtain:

\begin{observation}
For any $x \leq 1/2$, the $n^{1-x}$-row weight $(n, n^x)$-Min-Plus Product problem cannot be solved in time $O(n^{2+x-\Omega(1)})$, unless the APSP Hypothesis fails.
\end{observation}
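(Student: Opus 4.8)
The plan is to observe that in the regime $x \le 1/2$ the row-weight restriction imposes no constraint at all, so that the problem collapses to an \emph{unrestricted} rectangular Min-Plus product, and then to apply the standard block-partition self-reduction from square to rectangular Min-Plus product. First I would note that any integer matrix $A \in \Int^{n \times n^x}$ automatically has at most $n^x$ distinct entries in each of its rows, and since $x \le 1/2$ we have $n^x \le n^{1-x}$; hence the requirement that each row of $A$ (and, symmetrically, each column of the $n^x \times n$ matrix $B$) contains at most $n^{1-x}$ distinct entries is vacuous. Consequently, for $x \le 1/2$, the $n^{1-x}$-row weight $(n,n^x)$-Min-Plus Product problem is literally the problem of computing $A \star B$ for \emph{arbitrary} integer matrices $A \in \Int^{n \times n^x}$ and $B \in \Int^{n^x \times n}$.

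Next I would recall the folklore reduction from the full $(n,n,n)$-Min-Plus product---which by \cite{FischerM71} captures APSP on $n$-node graphs (a truly subcubic algorithm for one yields a truly subcubic algorithm for the other)---to rectangular Min-Plus product: partition the inner index set $[n]$ into $\ceil{n^{1-x}}$ contiguous blocks $K_1, K_2, \dots$ of size at most $n^x$ each; for each block $K_t$ compute the rectangular product $A[\,\cdot\,, K_t] \star B[K_t, \,\cdot\,]$; and output the entrywise minimum of these $\ceil{n^{1-x}}$ many $n \times n$ matrices. Correctness is immediate from $\min_{k \in [n]} (A[i,k] + B[k,j]) = \min_t \min_{k \in K_t}(A[i,k] + B[k,j])$, and the combining step costs only $\Order(n^{3-x})$ time, which is dominated.

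Putting the two pieces together: if $n^{1-x}$-row weight $(n,n^x)$-Min-Plus Product could be solved in time $\Order(n^{2+x-\delta})$ for some constant $\delta > 0$, then---by the vacuousness observation---every rectangular product $A[\,\cdot\,, K_t] \star B[K_t, \,\cdot\,]$ could be solved in that time, and hence the full $(n,n,n)$-Min-Plus product, equivalently APSP, in total time $\Order(n^{1-x} \cdot n^{2+x-\delta} + n^{3-x}) = \Order(n^{3-\min(\delta,x)})$, contradicting the APSP Hypothesis for every $x > 0$ (for $x=0$ the claim is trivial, since the $n \times n$ output alone cannot be produced in $\Order(n^{2-\Omega(1)})$ time). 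I do not anticipate any real obstacle; the only point requiring a moment's care is verifying that the distinct-entry bound is automatically satisfied in this regime---once that triviality is noted, the remainder is exactly the routine rectangular-to-square self-reduction underlying essentially all Min-Plus lower bounds.
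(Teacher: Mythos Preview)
Your proposal is correct and follows exactly the paper's reasoning. The paper's proof is the single sentence preceding the observation, which notes that for $x\le 1/2$ one has $n^{1-x}\ge n^x$, so the row-weight constraint is vacuous and the problem coincides with unrestricted $n\times n^x$ by $n^x\times n$ Min-Plus product; the block-partition self-reduction to square Min-Plus (and hence APSP) that you spell out is the standard folklore step the paper leaves implicit.
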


Thus already for $x\leq 1/2$, the row-weight version is much harder than the corresponding column-weight version. 

In \cref{sec:apsp-few-weights} we showed that $n\times n$ Min-Plus product can be solved in truly subcubic time whenever the number of distinct entries per row in matrix $A$ is $n^{1-\eps}$ for some $\eps>0$. Here we show that there is a relationship between the rectangular version of the small number of distinct entries per row of $A$ Min-Plus and node-weighted APSP.

We show that an $O(n^{2+x-\eps})$ time algorithm for node-weighted APSP for any $x>1/2$ and $\eps>0$ would imply that (for that $x$) $n^{1-x}$-row-weight $(n,n^x)$-Min-Plus can be solved in $O(n^{2+x-\eps'})$ time for $\eps'>0$.

In particular, via the reduction we get that $n^{1-x}$-row-weight $(n,n^x)$-Min-Plus can be solved in $O(n^{2+x-\eps'})$ time for $\eps'>0$ whenever $x> (\omega-1)/2$. 

If $\omega=2$, this would imply that for every $x>1/2$, $n^{1-x}$-row-weight $(n,n^x)$-Min-Plus can be solved in $O(n^{2+x-\eps'})$ time, whereas for $x\leq 1/2$ getting such an algorithm is equivalent to obtaining truly subcubic APSP.

\begin{theorem}[Reduction from Row Weight Min-Plus Product]
If there is some $x\in (1/2,1)$ such that Node-weighted APSP in directed or undirected graphs is in $O(n^{2+x-\eps})$ time for some $\eps>0$, we get that $n^{1-x}$-row-weight $(n,n^x)$-Min-Plus can be solved in $O(n^{2+x-\eps'})$ time for $\eps'>0$.
\end{theorem}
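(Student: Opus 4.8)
The plan is to reduce $n^{1-x}$-row-weight $(n,n^x)$-Min-Plus Product to Node-Weighted APSP on DAGs with only $\widetilde\Order(n)$ nodes, losing just an $n^{\Order(\eps')}$ multiplicative factor and an additive $\widetilde\Order(n^{2+x}/\Delta)$ term, where $\Delta := n^{\eps'}$ and $\eps' > 0$ is a small constant fraction of $\min\{\eps, 2x-1\}$ to be fixed at the end. Granting such a reduction, the assumed $\Order(N^{2+x-\eps})$-time Node-Weighted APSP algorithm solves each produced $\widetilde\Order(n)$-node instance in time $\widetilde\Order(n^{2+x-\eps})$, and since there are only $n^{\Order(\eps')}$ of them, balancing the parameters gives total time $\Order(n^{2+x-\eps'})$.

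First, I would adapt the uniformization of \cref{sec:apsp-few-weights:sec:uniformization} from Exact Triangle to the (rectangular) Min-Plus setting. Write $A \in \Int^{n \times n^x}$ with each row having at most $d := n^{1-x}$ distinct entries and $B \in \Int^{n^x \times n}$ with each column having at most $d$ distinct entries. After the outer-regularization preprocessing of \cref{lem:exact-tri-uniformization} (now with the thresholds $n/(d\Delta)$ replaced by $n^x/(d\Delta)$, so its brute-force branches cost $\widetilde\Order(n^{2+x}/\Delta)$), apply the Popular Sum Decomposition (\cref{lem:pop-sums-decomp}) to the row-sets $X_i$ of $A$ and column-sets $Y_j$ of $B$ with parameter $\Delta$. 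This splits the computation of $C[i,j] = \min_k A[i,k]+B[k,j]$ into: (i) $\poly(\Delta \log n)$ many \emph{ordinary} sub-instances $(A_g, B_h)$ where, after subtracting $s_{i,g}$ from the $i$-th row of $A_g$ and $t_{j,h}$ from the $j$-th column of $B_h$ and adding $s_{i,g}+t_{j,h}$ back to the output (Min-Plus is trivially shift-covariant), all entries of $A_g$ lie in $S_g$ and all entries of $B_h$ lie in $T_h$ with $|S_g|, |T_h| \le d$ --- i.e.\ $A_g,B_h$ are already $d$-\emph{uniform} (note there is no separate ``$C$-matrix'' to uniformize, unlike in Exact Triangle); and (ii) an \emph{exceptional} remainder, which by Property~(2) of \cref{lem:pop-sums-decomp} concerns only $\Order(n^2/\Delta)$ problematic pairs $(i,j)$, handled by brute force over all $k$, plus unpopular contributions handled as in \cref{lem:exact-tri-uniformization}, for a total of $\widetilde\Order(n^{2+x}/\Delta)$ enumerated triples. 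The cost of the decomposition itself is $\widetilde\Order(n^{2+\order(1)} d \cdot \poly(\Delta) + n^{2+x}/\Delta) = n^{3-x+\order(1)}\poly(\Delta) + \widetilde\Order(n^{2+x}/\Delta)$; since $x > 1/2$ gives $3 - x < 2 + x$, with $\Delta = n^{\eps'}$ and $\eps'$ small this is $\Order(n^{2+x-\Omega(\eps')})$. This is the only place the hypothesis $x > 1/2$ enters.

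Next, I would reduce each $d$-uniform rectangular Min-Plus sub-instance --- say with all entries of $A$ in a set $W_A$ and of $B$ in a set $W_B$, $|W_A|,|W_B| \le d = n^{1-x}$, WLOG nonnegative after a global shift --- to Node-Weighted APSP on a $4$-layer DAG $G$ in the spirit of \cref{lem:merge2}: layer $I$ with one weight-$0$ node per row $i$; layer $K_1$ with a weight-$a$ node $(k,a)$ for each $k \in [n^x]$, $a \in W_A$; layer $K_2$ with a weight-$b$ node $(k,b)$ for each $k \in [n^x]$, $b \in W_B$; layer $J$ with one weight-$0$ node per column $j$; and edges $i \to (k, A[i,k])$, $(k,a) \to (k,b)$ for all $a \in W_A, b \in W_B$ with the same $k$, and $(k, B[k,j]) \to j$. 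Then $G$ has $\widetilde\Order(n^x \cdot n^{1-x}) = \widetilde\Order(n)$ nodes and $\widetilde\Order(n^{2-x})$ edges, and as $G$ is a nonnegatively weighted DAG in which every $i$-$j$ path has the form $i \to (k, A[i,k]) \to (k, B[k,j]) \to j$ for a single $k$, the shortest $i$-$j$ distance equals exactly $\min_k (A[i,k] + B[k,j])$. Running the assumed Node-Weighted APSP algorithm on $G$ takes $\widetilde\Order(n^{2+x-\eps})$ time. Finally, I would assemble the answer as the entrywise minimum over the outputs of all sub-instances together with the brute-forced remainder, and pick $\eps'$ small enough that the total cost is $\Order(n^{2+x-\eps'})$.

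The hard part will be Step~1: making the uniformization work for the minimization problem rather than for Exact Triangle. The Exact Triangle argument of \cref{lem:exact-tri-uniformization} exploits that the target $C[i,j]$ is known, so that the unpopular contributions can be found by enumerating the few ($\le 2d/\Delta$) representations of $C[i,j]$; in Min-Plus the target is not known in advance, so this step must be reorganized --- plausibly by imposing a mild row/column regularity on the exceptional matrices so that, for each non-problematic pair $(i,j)$, all realizable entry pairs $(A[i,k], B[k,j])$ can be listed directly within the $\widetilde\Order(n^{2+x}/\Delta)$ budget, and then the answer for those pairs is the minimum of the corresponding sums.
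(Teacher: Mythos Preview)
Your overall structure matches the paper's: apply the Popular Sum Decomposition to the row-sets of $A$ and column-sets of $B$, handle the ordinary (popular) part via $\poly(\Delta)$ Node-Weighted APSP instances on the 4-layer graph of \cref{lem:merge2}, and brute-force the remainder. The graph construction and the running-time balance (requiring $x>1/2$ so that the decomposition cost $n^{3-x+o(1)}\poly(\Delta)$ is below $n^{2+x}$) are exactly as in the paper.

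The gap you correctly flag at the end is real, but your suggested fix does not work. ``Listing all realizable entry pairs $(A[i,k],B[k,j])$'' for a non-problematic pair $(i,j)$ amounts to enumerating $k$, which is $n^x$ work per pair and $n^{2+x}$ total; regularity alone does not cut this down, because unpopularity of sums says nothing about which \emph{individual} pairs $(a,b)$ are realized by some common $k$. The paper closes this gap with one extra ingredient you did not use: the \emph{standard scaling trick} (as in \cref{lem:min-plus-to-exact-tri}). By first recursing on $\lfloor A/2\rfloor,\lfloor B/2\rfloor$, one may assume a matrix $C$ is given with $\min_k\{A[i,k]+B[k,j]\}\in\{C[i,j],C[i,j]+1,C[i,j]+2\}$. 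Now the unpopular case behaves exactly as in Exact Triangle: for each $(i,j)$ and each of the three candidate targets $c$, unpopularity bounds the number of representations $c=a+b$ with $(a,b)\in (S_i'\times T_j)\cup(S_i\times T_j')$ by $O(d/\Delta)$, and row-regularity of $A$ bounds the indices $k$ with $A[i,k]=a$, so the enumeration fits in the budget. The paper also inserts a preliminary case split on whether the regularized counts $d_A$ and $d_B$ differ by more than a factor $\sqrt{\Delta}$, disposing of the imbalanced case by brute force in time $\widetilde O(n^{2+x}/\sqrt{\Delta})$ before invoking \cref{lem:pop-sums-decomp}; this is needed to make the unpopular bound $n^2\cdot(d_B/\Delta)\cdot(n^x/d_A)=O(n^{2+x}/\sqrt{\Delta})$ go through.
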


\begin{proof}
Let $d = n^{1-x}$.
Let $A$ and $B$ be the $n\times n/d$ and $n/d\times n$ matrices given, respectively. 

First, using a standard scaling trick\footnote{Basically, we recursively compute the Min-Plus product of $A',B'$ where $A'[i,k]=\lfloor A[i,k]/2\rfloor, B'[k,j]=\lfloor B[k,j]/2\rfloor$ and then set $C[i,j]=2\min_k \{ A'[i,k]+B'[k,j]\}$. The scaling only decreases the number of distinct entries per row/column.}, we can assume that in addition to $A$ and $B$, we are given an $n\times n$ matrix $C$, with the promise that for every $i,j$, $\min_k \{A[i,k]+B[k,j]\}\in \{C[i,j],C[i,j]+1,C[i,j]+2\}$.

Second, we will solve $O(\log^2 n)$ instances of the following problem: $A$ and $B$ are as above, but every row in $A$  has $\leq d_A\leq d$ distinct finite entries and for every finite entry $w$, the number of entries $A[i,k]$ of row $i$ of $A$ equal to $w$ is at most $O((n/d)/d_A)$. Similarly, every column of $B$ has $\leq d_B\leq d$ distinct finite entries and for every finite entry $w$, the number of entries $B[k,j]$ of column $j$ of $B$ equal to $w$ is at most $O((n/d)/d_B)$.
(This is done similarly to our earlier sections, e.g., Step 1 in the proof of \cref{lem:exact-tri-uniformization}.)  By symmetry, assume $d_A\le d_B$ without loss of generality.

Let $\Delta\geq 1$ be a parameter to be determined later.  If $d_B > d_A\sqrt{\Delta}$, then we run the following brute force algorithm: for every $i,j\in [n]$ and candidate answer $c\in \{C[i,j],C[i,j]+1,C[i,j]+2\}$, enumerate the $\le d_A$ many possibilities $x$ of the value of $A[i,k]$, and enumerate all the $\le (n/d)/d_B$ many indices $k$ with $B[k,j]=c-x$, and then check whether $A[i,k]=x$ holds. The total running time is $\widetilde O(n^2\cdot d_A \cdot (n/d)/d_B) = \widetilde O(n^3/(d\sqrt{\Delta}))$. Similarly, for the case of $d_A > d_B\sqrt{\Delta}$, we can also run the same brute-force algorithm in $\widetilde O(n^3/(d\sqrt{\Delta}))$ time. Now it remains to consider the case where $d_A\le d_B\le d_A\sqrt{\Delta}$.

For every $i\in [n]$ define $S_i$ to be the set of weights present in the $i$th row of $A$ and let $T_i$ be the set of weights in the $i$th column of $B$.
Note that $|S_i|,|T_i|\leq \max\{d_A,d_B\} = d_B\le d$ for all $i$.

Apply Lemma~\ref{lem:pop-sums-decomp} to these sets with parameter $\Delta$ to obtain, in $\widetilde{O}(n^2 d_B\Delta^3)=\widetilde{O}(n^2 d\Delta^3)$ time,
partitions
\begin{alignat*}{2}
    S_i &= S_{i, 1} \sqcup \dots \sqcup S_{i, \Delta^2} \sqcup S'_i \qquad &&\text{(for all $i \in [n]$),} \\
    T_j &= T_{j, 1} \sqcup \dots \sqcup T_{j, \Delta^2} \sqcup T'_j \qquad &&\text{(for all $j \in [n]$)}
\end{alignat*}
and for all $\ell\in [\Delta^2]$, sets $X_\ell, Y_\ell$ and shifts $\sigma_{i,\ell}, \tau_{j,\ell}$ so that 
\begin{center}$S_{i,\ell}\subseteq \sigma_{i,\ell}+X_\ell$ and $T_{j,\ell}\subseteq \tau_{j,\ell}+Y_\ell$, and\end{center}
so that there are at most $n^2/\Delta$ pairs $(i,j)$ such that
for some integer $c$ there are at least $2d_B/\Delta$ values $x\in S'_{i}, c-x\in T_j$ or $x\in S_{i}, c-x\in T'_j$.

We begin by dealing with the ``unpopular'' case: For every $i,j\in [n]$ and $c\in \{C[i,j],C[i,j]+1,C[i,j]+2\}$, in $O(n^2 d)$ time search through $S'_i$ and $T'_j$ to find $\leq O(d_B/\Delta)$ pairs $(x,y)\in (S'_i \times T_j)\cup (S_i \times T'_j)$ with $x+y=c$. Then, since there are at most $O(n/(dd_A))$ 
columns $k$ such that $A[i,k]=x$ for each $x$ in one of the pairs found, we can deal with this case in time 
$$O(n^2d+ n^2\cdot (d_B/\Delta)\cdot n/(dd_A)) =  O(n^2d + n^3/(d\sqrt{\Delta})),$$  
where we used $d_B\le d_A\sqrt{\Delta}$.
For the remaining $n^2/\Delta$ pairs $(i,j)$, we deal with them by brute force enumeration over $k\in [n/d]$, in $\widetilde O(n^3/(\Delta d))$ total time.

Now, we deal with the ``popular'' case where we consider $\min_k \{ A[i,k]+B[k,j]\}$ where the min is over $A[i,k]\in \sigma_{i,\ell} + X_\ell$ and $B[k,j]\in \tau_{j,\ell'}+Y_{\ell'}$ for some choice of $\ell,\ell'\in  [\Delta^2]$.

For this we create $\Delta^4$ instances of Node-weighted APSP on $n$ node graphs as follows.
For every choice of $\ell,\ell'\in [\Delta^2]$, create a graph $G_{\ell,\ell'}$ on $4$ layers of nodes: $I,K_1,K_2,J$.
\begin{itemize}
\item $I$ has a node $i$ for every row $i$ of $A$, with weight $\sigma_{i,\ell}$.
\item $J$ has a node $j$ for every column $j$ of $B$, with weight $\tau_{j,\ell'}$.
\item $K_1$ has a node $(k,x)$ with weight $x$ for every pair $k\in [n/d]$, $x\in X_\ell$.
\item $K_2$ has a node $(k,y)$ with weight $y$ for every pair $k\in [n/d]$, $y\in Y_{\ell'}$.
\end{itemize}

The edges are as follows:
\begin{itemize}
\item For every $k$ and every $x\in X_\ell, y\in Y_{\ell'}$, add an edge from $(k,x)$ to $(k,y)$
\item For every $i\in [n],k\in [n/d]$, if $A[i,k]\in \sigma_{i,\ell} + X_\ell$, add an edge from $i$ to $(k,A[i,k]-\sigma_{i,\ell})$.
\item For every $j\in [n],k\in [n/d]$, if $B[k,j]\in \tau_{j,\ell'}+Y_{\ell'}$, add an edge from  $(k,B[k,j]-\tau_{j,\ell'})$ to $j$.
\item All edges above are directed. To make the graph undirected, additionally, add a weight of $10M$ to every node where $M$ is the maximum absolute value of any finite entry in $A$ or $B$. 
This will make any undirected path on at least $5$ nodes have weight at least $45M$, whereas any path on $4$ nodes has weight at most $44M$.
\end{itemize}
The distance from $i\in I$ to $j\in J$ in $G_{\ell,\ell'}$ is \[\min \{A[i,k]+B[k,j]~|~k\in [n/d], A[i,k]\in \sigma_{i,\ell} + X_\ell, B[k,j]\in \tau_{j,\ell'}+Y_{\ell'}\}.\]
The number of nodes in this node-weighted graph is $O(n)$. Hence by solving node-weighted APSP in $\widetilde{O}(n^c \Delta^4)$ time in all $G_{\ell,\ell'}$ we solve the ``popular'' case.

The total running time is:
\[
    \tilde O\parens*{n^2d\Delta^3 + n^3/(d\sqrt{\Delta})+n^c \Delta^4}.
\]
Recall that we set $d=n^{1-x}$ where $x\in (1/2,1)$. The running time becomes:
\[
    \tilde O\parens*{n^{3-x}\Delta^3 + n^{2+x}/\sqrt{\Delta}+n^c \Delta^4}.
\]
We are interested in when this running time is $O(n^{2+x-\eps'})$ for some $\eps'>0$. If we set $\Delta=n^\delta$ for $\delta>0$, the second term in the running time is $O(n^{2+x-\delta/2})$, so we can disregard it. The remaining running time is:
\[
    \tilde O\parens*{n^{3-x+3\delta} + n^{c+4\delta}}.
\]
We want $3-x+3\delta\leq 2+x-\delta'$, so we should set $3\delta\leq 2x-1-\delta'$, i.e. $\delta=(2x-1-\delta')/3$.
This is possible when $x>1/2$. That is, when $x=1/2+q$ for some $1/2> q>0$, we can set $\delta'=2q-3\delta$ for arbitrarily small $\delta>0$. Then, $\delta'\in (0,1)$.  The running time becomes $O(n^{2+x-\eps})$ for $\eps=\min\{\delta',\delta\}$, plus an extra $\widetilde{O}(n^{c+4\delta})$ time. We would like that
\[c+4\delta< 2+x.\]
Thus, if $c<2+x$, we can pick $\delta>0$ small enough so that the running time is $O(n^{2+x-\eps})$ for some $\eps>0$. Hence if there is some $x\in (1/2,1)$ such that node-weighted APSP is in $O(n^{2+x-\eps})$ time for some $\eps>0$, we get that $n^{1-x}$-row-weight Min-Plus can be solved in $O(n^{2+x-\eps'})$ time for $\eps'>0$.
\end{proof}

\bibliographystyle{alphaurl}
\bibliography{paper}

\end{document}